\newtheorem{theorem}{Theorem}
\newtheorem{lemma}{Lemma}
\newtheorem{proposition}{Proposition}
\newtheorem{corollary}{Corollary}
\newtheorem{condition}{Condition}
\newenvironment{proof}[1][]
  { \par\noindent\textbf{Proof\if\relax\detokenize{#1}\relax\else\ #1\fi.}\quad }
  { \hfill$\square$\par }
\newcommand{\qed}{\hfill \ensuremath{\square}}
\newenvironment{keywords}{
  \vspace{1em}
  \noindent\textbf{Keywords:}%
}{\par}
\newtheorem{remark}{Remark}
\numberwithin{equation}{section}
\newcommand\ie{\emph{i.e.}\xspace}
\newcommand{\nset}{\mathbb{N}}
\newcommand{\rset}{\mathbb{R}}
\newcommand{\ind}{\mathbf{1}}
\newcommand{\un}{\ind}
\newcommand{\point}{\,\cdot\,}
\newcommand{\eqd}{\overset{d}{=}}
\DeclareMathOperator*{\argmin}{arg\,min}
\newcommand{\PP}[1][]{\ifthenelse{\equal{#1}{}}{\ensuremath{\mathbb{P}}}{\ensuremath{\mathbb{P}\left( #1 \right)}}}
\newcommand{\EE}[1][]{\ifthenelse{\equal{#1}{}}{\ensuremath{\mathbb E}}{\ensuremath{{\mathbb E}\left[ #1 \right]}}}
\newcommand{\ud}{\,\mathrm{d}} 
 \newcommand{\draftcolor}{black}
\newcommand{\FrechetPar}{\gamma}
\newcommand{\gammaS}{\gamma}
\newcommand{\kAV}{\textcolor{\draftcolor}{\hat{k}_{\operatorname{EAV}}}}
\newcommand{\HillEst}{\textcolor{\draftcolor}{\ensuremath{\hat{\gamma}}}}
\newcommand{\indexk}{\textcolor{\draftcolor}{\ensuremath{k}}}
\newcommand{\MSE}{{\operatorname{MSE}}}
\newcommand{\kbch}{\hat{k}_{\operatorname{BT}}}
\newcommand{\kdk}{\hat{k}_{\operatorname{DK}}}
\newcommand{\kS}{\textcolor{\draftcolor}{\ensuremath{k^*}}}
\newcommand{\grid}{\mathcal{K}}
\newcommand{\deltaG}{\delta_{\grid}}
\newcommand{\datain}{\textcolor{\draftcolor}{\ensuremath{X}}}
\newcommand{\datainiO}{\textcolor{\draftcolor}{\ensuremath{\datain_{(i)}}}}
\newcommand{\nbrsamp}{\textcolor{\draftcolor}{\ensuremath{n}}}
\newcommand{\gamaOrc}{\textcolor{\draftcolor}{\ensuremath{\gamma}}}
\newcommand{\quantile}{\textcolor{\draftcolor}{\ensuremath{\delta}}}
\newcommand{\variance}{\textcolor{\draftcolor}{\ensuremath{V}}}
\newcommand{\bias}{\textcolor{\draftcolor}{\ensuremath{B}}}
\newcommand{\varianceF}[1]{\textcolor{\draftcolor}{\variance(#1)}}
\newcommand{\biasF}[1]{\textcolor{\draftcolor}{\bias(#1)}}
\newcommand{\ZK}{\textcolor{\draftcolor}{Z_k}}
\newcommand{\Zk}{\textcolor{\draftcolor}{\ensuremath{Z_k}}}
\newcommand*{\deqN}{\mathrel{\rlap{%
			\raisebox{0.3ex}{$\m@th\cdot$}}%
		\raisebox{-0.3ex}{$\m@th\cdot$}}=}
	\newcommand*{\eqdN}{=\mathrel{\rlap{%
				\raisebox{0.3ex}{$\m@th\cdot$}}%
			\raisebox{-0.3ex}{$\m@th\cdot$}}}
\title{\bfseries Adaptive tail index estimation: minimal assumptions and non-asymptotic guarantees}
\author{
  Johannes Lederer\thanks{Universität Hamburg, Germany. Email: \texttt{johannes.lederer@uni-hamburg.de}} \and
  Anne Sabourin\thanks{Université Paris Cité, CNRS, MAP5, Paris; Télécom Paris, Institut polytechnique de Paris. Email: \texttt{anne.sabourin@math.cnrs.fr}} \and
  Mahsa Taheri\thanks{Universität Hamburg, Germany. Email: \texttt{mahsa.taheri@uni-hamburg.de}}
}
\date{}
\begin{document}
\maketitle

\newcommand{\jl}[1]{\textcolor{red}{#1}}

\begin{abstract}
  A notoriously difficult challenge in extreme value theory is the
  choice of the number $k\ll n$, where $n$ is the total sample size,
  of extreme data points to consider for inference of tail
  quantities. 
  Existing theoretical guarantees for adaptive methods typically
  require second-order assumptions or von Mises assumptions that are
  difficult to verify and often come with tuning parameters that are
  challenging to calibrate.  This paper revisits the problem of
  adaptive selection of $k$ for the Hill estimator. Our goal is not an
  `optimal'~$k$ but one that is `good enough', in the sense that we
  strive for non-asymptotic guarantees that might be sub-optimal but
  are explicit and require minimal conditions.  We propose a
  transparent adaptive rule that does not require preliminary
  calibration of constants, inspired by `adaptive validation'
  developed in high-dimensional statistics.  A key feature of our
  approach is the consideration of a grid for \( k \) of size
  \( \ll n \), which aligns with common practice among practitioners
  but has remained unexplored in theoretical analysis.  Our rule only
  involves an explicit expression of a variance-type term; in
  particular, it does not require controlling or estimating a bias
  term. 
  Our theoretical analysis is valid for all heavy-tailed
  distributions, specifically for all regularly varying survival
  functions. Furthermore, when von Mises conditions hold, our method
  achieves `almost' minimax optimality with a rate of
  \( \sqrt{\log \log n}~ n^{-|\rho|/(1+2|\rho|)}\) when the grid size
  is of order $\log n$, in contrast to the
  \( (\log \log (n)/n)^{|\rho|/(1+2|\rho|)} \) rate in existing
  work. Our simulations show that our approach performs particularly
  well for ill-behaved distributions.

\end{abstract}

\begin{keywords}
  Heavy tails; Non-asymptotic guarantees; Adaptive validation;
Hill estimator 
\end{keywords}

\section{Introduction}

Extreme value theory (EVT) provides a statistical framework for the
analysis of tail events.  The results of EVT are essential in
applications ranging from finance as far as environmental
sciences. Statistical methods in EVT typically rely on choosing a
small fraction of the available data to make inference about tail
quantities. In univariate peaks-over-threshold methods, given a
dataset $(X_i, i\le n)$ of independent observations with a common
cumulative distribution function $F$, this amounts to choosing the
number $k$ of the largest order statistics
$(X_{(1)}\ge X_{(2)} \ge \dots X_{(n)})$ retained for inference. Here
and throughout, we order the samples by decreasing order of magnitude
and the number $k$ is referred to as the `extreme sample size', as it
represents  the number of observations used in practice for
inference.

The minimal working assumption in EVT is that for some sequences
$a_n>0, b_n \in\rset$, for $x\in\rset$, $F^n(a_n x+b_n)\to G(x)$ as
$n\to \infty$, where $G$ is an extreme value distribution
$G$ of the type $G(x) = \exp[-(1+ \gamma x)_+^{-1/\gamma}]$. 
The distribution $F$ is then said to  belong to the max-domain of attraction of $G$ and $\gamma\in\rset$ is called the tail index.
The Fréchet case $\gamma>0$ corresponds to  heavy-tailed distributions, for which the survival function $\bar F= 1-F $ is regularly varying, that is
\begin{equation}
  \label{eq:rv-definition}
  \bar F (tx) /\bar F (t) \xrightarrow[]{} x^{-1/\gamma} \text{ as } t\to\infty,
\end{equation}
for any fixed $x>0$.  Standard reference textbooks on EVT and its applications include~\cite{embrechts2013modelling,beirlant2006statistics,Resnick2007,resnick2008extreme}. Estimating the tail index in this context has been the subject of a wealth of works, one main reason being that estimating $\gamma$ also allows to estimate high quantiles with the help of the Weissman estimator~\citep{weissman1978estimation}. A particularly popular estimator for $\gamma$ if the Hill estimator proposed in \cite{Hill1975} 
\begin{equation}
  \label{eq:hill}
  \HillEst(\indexk) ~=~ \frac{1}{\indexk}\sum_{i=1}^k \log\bigl( \datainiO / \datain_{(\indexk + 1)} \bigr)\,.
\end{equation}
Tail-index estimation has  been the focus of an extensive body of research over the past decades. Providing a comprehensive review of the literature is out of our scope and we refer the reader to the review paper~\cite{fedotenkov2020review}.  
A major bottleneck in  applications of EVT is that one must choose the threshold (or $k$) to hopefully reach a bias-variance compromise. Too small a $k$ leads to a high variance, whereas too large a $k$ may result in bias.  This `choice of $k$' issue is  a pervasive challenge in EVT, of course not limited to the Hill estimator. 
Regarding tail index estimation, \cite{scarrott2012review} provide an overview of proven methods for threshold selection, from rules of thumb and  graphical diagnostics~\citep{BVT1996} to  resampling based approaches~\citep{gomes2001bootstrap}, see also \cite{gomes2012adaptive,gomes2016bootstrap}, and automatic adaptive procedures based on second order assumptions with asymptotic guarantees \citep{hall1985adaptive,drees1998selecting,drees2001minimax}. As emphasized in \cite{scarrott2012review}, a significant limitation of the latter approaches is the second order assumptions and the associated parameter which may be difficult to estimate.  \cite{grama2008statistics} constitutes a notable exception insofar as the working assumption is an `accompanying Pareto tail' which does not imply nor is implied by~\eqref{eq:rv-definition}, although the rates of convergence obtained on
concrete examples indeed require second order or von Mises assumptions.

All the results mentioned thus far are of asymptotic nature.
Non-asymptotic  guarantees in EVT are much scarcer than asymptotic ones, although this emerging topic has been developing fast over the past decade, see \emph{e.g.}~\cite{carpentier2015adaptive,boucheron2015tail,goix2015learning,lhaut2021uniform,engelke2021learning,drees2019principal,clemenccon2021concentration,aghbalou2024cross} and the references therein. Beyond the principled benefit of being valid for finite (observable) sample sizes, the techniques of proofs involved in these analyses typically  require less regularity conditions than asymptotic studies. Indeed  it is relatively easy in this framework  to separate bias terms stemming from the non-asymptotic nature of the data at hand, from deviation terms (which we refer to as  `variance' terms by a slight abuse of terminology) arising from the natural variability of the considered process, above a fixed threshold. The variance  term may then be analysed independently  from the bias term, and most importantly, it is not technically required that the bias be negligible compared with the variance, mainly because Slutsky-type arguments are not needed.  
As an example, existence of  partial derivatives of the standard tail-dependence function (a functional measure of dependence of extremes) is not required in \cite{goix2015learning}, and second-order assumptions are unnecessary in~\cite{clemenccon2021concentration}.

The primary goal of this work is to leverage the advantages of a non-asymptotic analysis to develop an adaptive selection rule for \( k \). Our method provides certain 'optimality' guarantees under the minimal Condition~\eqref{eq:rv-definition}. In addition, it achieves strong (nearly  minimax, up to a power of $\log\log n$) guarantees under the more restrictive von Mises Condition~\ref{cond:vonMises}. 
A key feature of our approach is the consideration of a grid $\grid$ for \( k \) of size \(|\grid| \ll n \), which aligns with common practice among practitioners but has remained unexplored in theoretical analysis. 
The core strength of this work lies in its robust, data-driven selection of $k$, enabled by expressing the error as an exact quantile of a known distribution, with the remainder absorbed into a bias term that does not need to be specified explicitly.
To the best of our knowledge, no existing statistical method relying on EVT has achieved this.  
For simplicity, we focus on the flagship example of the Hill estimator in this paper.  
Our method does not replace classical asymptotic approaches in EVT, but rather offers a robust, assumption-light complement—an especially valuable addition given that strong parametric assumptions are often seen as a key limitation in practical applications of extreme value analysis.

The only existing pieces of work regarding adaptive tail index estimation in a
non-asymptotic framework are \cite{carpentier2015adaptive}
and~\cite{boucheron2015tail}. The latter authors derive tail bounds
for the Hill estimator, while the former propose a novel estimator
which may be seen as a generalization of Pickands estimator
\citep{pickands1975statistical}. Both
references 
propose an adaptive selection rule for $k$ with minimax guarantees.
Following in the footsteps of previous asymptotic studies, their
theoretical analysis requires stronger assumptions
than~\eqref{eq:rv-definition}. Namely \cite{carpentier2015adaptive}
work under a second order Pareto assumption,
$|1 -F(x) - C x^{-1/\gamma} | \le C' x^{-(1+\beta)/\gamma}$ where
$\beta$ is the second order parameter. 
On the other hand, \cite{boucheron2015tail} impose that the survival function $\bar F$ (or equivalently a quantile function) admits a standardized Karamata representation with a bias function (also called von Mises function) decreasing faster than a  certain  power of the quantile level 
(see Condition~\ref{cond:vonMises} in 
Section~\ref{sec:secondorder}). This implies in particular the condition required by \cite{carpentier2015adaptive}. 
 Further discussion regarding different types of second-order assumptions required in the literature is deferred to Section~\ref{sec:secondorder}. Additionally, it is important to note that the analysis carried out in \cite{boucheron2015tail} involves critical tuning parameters, which are chosen in their experiments through preliminary calibration. With these carefully tuned parameters, the authors show through simulation studies that their adaptive rule attains similar performance as the one proposed by \cite{drees1998selecting}, under weaker second order conditions. 

Our contribution in this work is to provide a simple and transparent
adaptive method for tail index estimation, not requiring any
calibration nor relying on second order or von Mises assumptions that
can be difficult to verify.  Our approach is inspired by an adaptive strategy 
called `Adaptive Validation' (AV)~\citep[Chapter~4]{Lederer2021HD}, which
itself is inspired by Lepski's method in non-parametric
regression~\citep{lepskiui1990problem,lepski1997optimal,goldenshluger2011bandwidth}. Lepski's method has inspired a variety of adaptations to specific settings in statistics 
\citep{comte2013anisotropic,lacour2016minimal}.
The AV approach primarily requires a sharp control of
a variance term within an error decomposition, without imposing any
particular condition on the bias term except that it can be made
smaller than the variance term for certain values of the tuning
parameter (here, for small~$k$).  To date, these general tools have
not been considered within the field of EVT. Under the minimal regular
variation assumption~(\ref{eq:rv-definition}), we  propose an `Extreme Adaptive Validation' (EAV) rule for selecting $k$, denoted as $\kAV$. We demonstrate that $\kAV$  performs comparably to a certain ‘oracle’ choice
of $k$, denoted as $\kS$, for which we derive some optimality properties (see Section~\ref{sec:AVframework}).  Furthermore, under
additional von Mises Condition~\ref{cond:vonMises}, we establish that
$\kS$  achieves minimax optimal error rates, up to a power of $\log\log n
$ and we show that the adaptive
validation method $\kAV$ attains nearly minimax rates up to a nearly constant factor which is a power of $\log \log n$.  Also instead of considering all possible values of $k$, our approach restricts to a
much smaller logarithmic grid, a strategy commonly used in
practice. 
This significantly reduces computational complexity: while
previous methods have worst-case complexity of $O(n^2)$ and a
posteriori complexity of $O(\hat{k}^2)$, our approach achieves
$O((\log n)^2)$ complexity, and in practice, a posteriori complexity
of only $O(\log(\kAV)^2)$. 
This reduction in complexity is not crucial for estimators that are not computationally intensive. However, it is anticipated that this simplification may prove beneficial in future works extending the EAV framework to more complex tasks, such as multivariate extreme value analysis.

The paper is organized as follows:
In Section~\ref{sec:adpv}, we describe the general principles underlying an EAV  rule, along with its guarantees, at a rather high level of generality, as the described methodology is applicable to any estimator of the tail index that satisfies a certain bias-variance decomposition. In Section~\ref{sec:BVD}, we specialize to the case of the Hill estimator, for which we derive non-asymptotic error bounds, allowing us to leverage the results of Section~\ref{sec:AVframework}. In Section~\ref{sec:secondorder}, we consider the case where additional second-order conditions hold in addition to Equation~\eqref{eq:rv-definition}, and we provide additional guarantees in this setting. We support our theoretical findings with simulations in Section~\ref{sec:sim}. Additional theoretical details and simulations are deferred to the Appendix.

\section{Adaptive validation}\label{sec:adpv}
We begin by introducing the Extreme Adaptive Validation (EAV) procedure that we
promote, highlighting its broad applicability.  Our analysis is
applicable to any estimator $\hat{\gamma}(k)$ of $\gamma$ that takes
as input the $k$ largest order statistics of an i.i.d. sample drawn
from $F$, provided that non-asymptotic error bounds are known and take
the form of a bias-variance decomposition satisfying the following
condition.
\begin{condition}[Bias-Variance Decomposition]\label{cond:genericErrorDecomp}
  There exists two functions $V:\nset\times (0,1)\to\rset_+$ and $B:\nset\times \nset\times (0,1)\to \rset_+$, 
  such that for any confidence level $\delta\in(0,1)$, 
  and any 
  $k\in\{1,\dots,n\}$, 
  with probability at least  $1-\quantile$,
  \begin{equation}
    \label{eq:assumptionGamma}
    |\hat \gamma(k) - \gamaOrc| ~ \le ~\gamaOrc ~\varianceF{k,\quantile} ~+~
    \biasF{ k,n, \quantile}, 
  \end{equation}
where the functions  $\variance(\cdot,\cdot)$,  $\biasF{\cdot,\cdot, \cdot}$ satisfy the following requirements: 
  \begin{enumerate}[(a)]
  \item The term $\varianceF{k,\quantile}$ has an explicit expression,
    and $\varianceF{\point,\quantile}$ is a non-increasing function of
    its first argument $\indexk$ such that $\varianceF{k,\delta}\to 0$
    as $k\to\infty$.
  \item The function $\bias(\cdot,\cdot,\delta)$ is non-negative and
    non-decreasing in its first argument $k$. While it may not have an
    explicit expression, it satisfies $B(k,n,\delta) \to 0$ as
    $k/n \to 0$.
  \end{enumerate}
\end{condition}
\noindent In Condition~\ref{cond:genericErrorDecomp}, the function $V$
should be seen as a deviation term associated with the variability of
averages over the $k$ largest order statistics. The larger $k$, the
smaller $V$. The rationale behind the factor $\gamma$ in the required
error decomposition~\eqref{eq:assumptionGamma} is that the standard
deviation of the Hill estimator is $\gamma/\sqrt{k}$ in the ideal case
of a Pareto distribution.  The function $B$ should be seen as a bias
term: the larger $k$, the less extreme the considered data are, the
larger $B$. We emphasize that explicit knowledge of $B$ is not
required.  Note that the requirement in (b) that $\bias$ is a
non-decreasing function of $k$ is automatically satisfied when
replacing $\bias(k,n,\quantile)$ with
$\tilde{ \bias}(k,n,\quantile) = \sup_{k'\le k}
\bias(k',n,\quantile)$. 
We shall prove in Section~\ref{sec:BVD} that the Hill estimator
$\HillEst$ indeed satisfies Condition~\ref{cond:genericErrorDecomp}
with an explicit deviation term that is a quantile of a centered Gamma
random variable of order
$V(k,\quantile)\le \sqrt{2\log(4/\quantile)/k} + 2\log(4/\quantile)/k
$, and a bias term that is not explicit but satisfies requirement (b)
in Condition~\ref{cond:genericErrorDecomp}.

\subsection{Adaptive Validation framework}\label{sec:AVframework}
We are now prepared to introduce an Adaptive Validation (AV) approach
for selecting the extreme sample size $k$ with statistical guarantees,
building upon the methodologies developed
in~\cite{chichignoud2016practical,li2019tuning,Taheri2019,laszkiewicz2021thresholded}. These
works have set forth algorithms for calibrating tuning parameters in
diverse contexts, including standard linear and logistic regression,
as well as graphical models, outside the EVT framework.

In this section, we refrain from imposing additional assumptions (such
as second-order or von Mises conditions). Consequently, our objective
is not to identify an optimal $k$, but rather a `good enough' $k$ that
comes with certain guarantees. Let $\hat{\gamma}$ satisfy
Condition~\ref{cond:genericErrorDecomp}. For any confidence level
$\delta$, we define an `oracle' $\kS(\delta,n)$ as one that balances
$\variance(k,\delta)$ and $\bias(k,n,\delta)$
in~\eqref{eq:assumptionGamma}, among candidates $k$ chosen from a
grid $\grid \subset \{1, \ldots, n\}$ of size $|\grid| \leq n$. In
this setting, 
we (naturally) need to assume that the grid 
 is chosen `wide' enough to contain a $k$ reaching a bias-variance
 compromise, as encapsulated in the following condition:
\begin{condition}[Sufficiently wide grid]\label{cond:wideGrid}
  The triplet $(\delta,n,\grid)$ is such that with $k_{\min} = \min(\grid)$, $k_{\max} = \max(\grid)$,
  \begin{equation*}
  \biasF{k_{\min}, \nbrsamp,\quantile} \le \gamaOrc \varianceF{k_{\min},\quantile}  ~\text{and}~
     \biasF{k_{\max}, \nbrsamp,\quantile} > \gamaOrc \varianceF{k_{\max},\quantile}. 
  \end{equation*}
\end{condition}
Note that Condition~\ref{cond:wideGrid} is automatically satisfied for grids such that $k_{\min}=1,k_{\max}=n$, for $n$ large enough, in view of Condition~\ref{cond:genericErrorDecomp}. 
For $(\delta,n,\grid)$ satisfying
Condition~\ref{cond:wideGrid} we define the oracle $\kS(\delta,n)$ as
follows:
\begin{equation}
  \label{eq:kstar}
  \kS(\quantile,n) ~=~ \max\{\indexk\in \grid:
  \biasF{\indexk, \nbrsamp,\quantile} \le \gamaOrc \varianceF{\indexk,\quantile} \}.
\end{equation}

Note that under Condition~\ref{cond:wideGrid}, the oracle \(\kS(\delta, n)\) is well-defined and \(k_{\min} \le \kS(\delta, n) < k_{\max}\). It may seem surprising and perhaps not ideal that the definition of the oracle depends on the choice of the grid. One might argue that this approach simply replaces the choice of \(k\) with the choice of \(\grid\). However, there are universal default choices for \(\grid\), such as geometric grids or linearly spaced grids.
Moreover, with sufficiently fine grids that have a high enough maximum
and logarithmic grid size,
\(|\grid| \propto \log(n)\),  we can show that the oracle error
achieves known minimax rates under second-order conditions.
Further discussion is deferred to Section~\ref{sec:secondorder}.

Informally, the oracle $\kS(\delta,n)$ ensures a compromise between
bias and variance, in the sense that 
$\gamaOrc\varianceF{ k, \quantile}\approx \biasF{k,\nbrsamp, \delta}
$.  Formally, we obtain immediately an upper bound for the error of
$\kS(\delta,n)$, without a bias term.  The following result derives
immediately from the very definition of $\kS$, from the upper
bound~\eqref{eq:assumptionGamma} and from the monotonicity
requirements on the functions $V$ and $B$.
\begin{proposition}[Explicit error bound for the oracle]\label{prop:errorOracle}
  Let $\hat \gamma$ satisfy
  Condition~\ref{cond:genericErrorDecomp}. For any $\delta\in(0,1)$,
  $n\ge 1$ and $\grid$ satisfying Condition~\ref{cond:wideGrid}, we
  have for all fixed $k ~\le~ \kS(\delta,n)$, with probability at
  least $1-\delta$, 
\begin{equation}
  \label{eq:oracleBound}
  |\gamaOrc - \HillEst(k)| \le 2 \gamaOrc \varianceF{k,\quantile}. 
\end{equation}    
\end{proposition}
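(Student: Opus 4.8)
The plan is to derive the bound directly from the definition of $\kS(\delta,n)$ together with the monotonicity properties guaranteed by Condition~\ref{cond:genericErrorDecomp}, so no probabilistic work beyond a single application of~\eqref{eq:assumptionGamma} is needed. First I would fix $\delta\in(0,1)$, $n\ge 1$, and a grid $\grid$ satisfying Condition~\ref{cond:wideGrid}, so that by the remark following~\eqref{eq:kstar} the oracle $\kS=\kS(\delta,n)$ is well-defined and lies in $\grid$. Then I would take any $k\le\kS$ and apply the bias--variance decomposition~\eqref{eq:assumptionGamma} at this $k$: with probability at least $1-\delta$,
\[
  |\gamaOrc-\HillEst(k)| ~\le~ \gamaOrc\,\varianceF{k,\quantile}+\biasF{k,n,\quantile}.
\]

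Next I would bound the bias term by a variance term. Since $\bias(\cdot,n,\delta)$ is non-decreasing in its first argument (requirement (b)) and $k\le\kS$, we have $\biasF{k,n,\quantile}\le\biasF{\kS,n,\quantile}$. By the very definition~\eqref{eq:kstar} of $\kS$ as an element of the feasible set $\{\indexk\in\grid:\biasF{\indexk,n,\quantile}\le\gamaOrc\varianceF{\indexk,\quantile}\}$, we get $\biasF{\kS,n,\quantile}\le\gamaOrc\varianceF{\kS,\quantile}$. Finally, since $\variance(\cdot,\delta)$ is non-increasing in its first argument (requirement (a)) and $k\le\kS$, we have $\varianceF{\kS,\quantile}\le\varianceF{k,\quantile}$. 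Chaining these three inequalities yields $\biasF{k,n,\quantile}\le\gamaOrc\varianceF{k,\quantile}$, and substituting back into the displayed decomposition gives $|\gamaOrc-\HillEst(k)|\le 2\gamaOrc\varianceF{k,\quantile}$ on the same event of probability at least $1-\delta$, which is exactly~\eqref{eq:oracleBound}.

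There is no real obstacle here: the statement is essentially a bookkeeping consequence of the definitions, and the only subtlety is making sure the chain of monotonicity comparisons is oriented correctly (bias increases with $k$, variance decreases with $k$, and the defining inequality of $\kS$ is evaluated at $\kS$ itself, where it holds by membership in the feasible set). The one point worth stating explicitly is that the probability statement is for a \emph{fixed} $k$, so the event on which~\eqref{eq:oracleBound} holds may depend on $k$; uniformity over $k\le\kS$ is \emph{not} claimed here, which is consistent with the phrasing "for all fixed $k\le\kS(\delta,n)$" in the proposition. I would therefore keep the proof to three or four lines, simply concatenating~\eqref{eq:assumptionGamma}, the monotonicity of $\bias$, the defining property~\eqref{eq:kstar}, and the monotonicity of $\variance$.
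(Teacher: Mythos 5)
Your proof is correct and is exactly the argument the paper intends: the paper itself only remarks that the result ``derives immediately from the very definition of $\kS$, from the upper bound~\eqref{eq:assumptionGamma} and from the monotonicity requirements on $V$ and $B$,'' and your chain $\biasF{k,n,\quantile}\le\biasF{\kS,n,\quantile}\le\gamaOrc\varianceF{\kS,\quantile}\le\gamaOrc\varianceF{k,\quantile}$ spells out precisely that reasoning. Your remark that the statement is for a fixed $k$ (with the event allowed to depend on $k$, uniformity being handled later via the union bound in Lemma~\ref{lem:UnifDev-eventE}) is also the right reading.
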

\noindent Note that the `price' paid by the oracle $\kS(\delta,n)$ to benefit
from a clear guarantee without a bias term (in the ideal case where
the value of $\kS(\delta,n)$ were known) is relatively small. Indeed
the variance term in the upper bound is only multiplied by a factor
$2$. The upper bound on the oracle error, as stated in
Proposition~\ref{prop:errorOracle}, is derived under
Condition~\ref{cond:genericErrorDecomp}, where the function $V$ is
assumed to be explicit. However, since $\kS(\delta,n)$ itself is
unknown, this bound is not fully explicit unless a lower bound on
$\kS(\delta,n)$ can be derived. We conjecture that there is no
universal method to achieve this without further assumptions. In
Section~\ref{sec:secondorder}, we obtain an explicit lower bound for
$k^*$ for the Hill estimator under additional von Mises conditions.

The following result shows that the oracle $\kS$ enjoys
approximate optimality guarantees under minimal assumptions:
 \begin{proposition}[Optimality properties of $\kS$]\label{prop:optimality_kS}
   For fixed $\delta>0, n\ge 1$, $\grid$ satisfying Condition~\ref{cond:wideGrid}, 
    let
   $$E(k) ~=~ \gamma V(k,\delta) + B(k,n,\delta)$$
   denote the upper bound on the error of an estimator $\hat \gamma$
   satisfying Condition~\eqref{cond:genericErrorDecomp}.  Then for all
   $1\le k\le \kS(\delta,n)$,
$$
 E(k) ~\le~ \min_{1\le j\le n} E(j) + \gamma V(k,\delta).
$$
\end{proposition}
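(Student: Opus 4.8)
The plan is to observe that, since $E(k) = \gamma V(k,\delta) + B(k,n,\delta)$, the claimed bound $E(k) \le \min_{1\le j\le n} E(j) + \gamma V(k,\delta)$ is equivalent to the single inequality $B(k,n,\delta) \le \min_{1\le j\le n} E(j)$. So the whole argument reduces to showing that, for every $k \le \kS(\delta,n)$, the bias at $k$ is dominated by the global minimum of the error bound. No probability is involved: this is a purely order-theoretic statement about the deterministic functions $V$ and $B$.

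First I would record the ingredients. From Condition~\ref{cond:genericErrorDecomp}, $B(\cdot,n,\delta)$ is non-decreasing and $V(\cdot,\delta)$ is non-increasing in the first argument; from the definition~\eqref{eq:kstar} and the fact that $\kS(\delta,n) \in \grid$ (the relevant set being non-empty thanks to Condition~\ref{cond:wideGrid}), we have the ``balancing'' inequality $B(\kS(\delta,n),n,\delta) \le \gamma V(\kS(\delta,n),\delta)$. Now fix any $k \le \kS(\delta,n)$; monotonicity of $B$ then gives $B(k,n,\delta) \le B(\kS(\delta,n),n,\delta)$, so it suffices to prove $E(j) \ge B(\kS(\delta,n),n,\delta)$ for every $j \in \{1,\dots,n\}$.

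I would do this by a case split on the position of $j$ relative to $\kS(\delta,n)$. If $j \ge \kS(\delta,n)$, monotonicity of $B$ gives $E(j) \ge B(j,n,\delta) \ge B(\kS(\delta,n),n,\delta)$. If $j \le \kS(\delta,n)$, then using $B \ge 0$, monotonicity of $V$, and the balancing inequality at $\kS(\delta,n)$,
\[
E(j) ~\ge~ \gamma V(j,\delta) ~\ge~ \gamma V(\kS(\delta,n),\delta) ~\ge~ B(\kS(\delta,n),n,\delta).
\]
Taking the minimum over $j$ and chaining with $B(k,n,\delta) \le B(\kS(\delta,n),n,\delta)$ yields $B(k,n,\delta) \le \min_{1\le j\le n} E(j)$, and adding $\gamma V(k,\delta)$ to both sides is exactly the asserted inequality. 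There is no real obstacle here: the statement is a direct consequence of the monotonicity of $V$ and $B$ together with the balancing definition of $\kS$. The only points deserving care are that the inequality $B(\kS,n,\delta) \le \gamma V(\kS,\delta)$ is available precisely because $\kS$ belongs to the grid (and exists at all by Condition~\ref{cond:wideGrid}), and that the case split over $j$ is exhaustive, with each monotonicity step applied in the correct direction.
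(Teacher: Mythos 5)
Your proof is correct and uses essentially the same ingredients as the paper's: monotonicity of $V$ and $B$ together with the defining inequality $B(\kS,n,\delta)\le\gamma V(\kS,\delta)$, reducing the claim to $E(j)\ge B(k,n,\delta)$ for all $j$. The only cosmetic difference is that you split the cases at $\kS(\delta,n)$ and chain through $B(\kS,n,\delta)$, whereas the paper splits at $k$, which amounts to the same argument.
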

\begin{proof}
  Let $k\le \kS(\delta,n)$.  For simplicity of notation, we abbreviate \( V(k, \delta) \) as 
  \( V(k) \), and
  \( B(k, n, \delta) \) as \( B(k) \).  For $j \ge k$, we have
  \begin{align*}
    E(j) ~ \ge ~B(j) ~\ge B(k) ~ = ~ E(k)-\gamma V(k).
  \end{align*}
  On the other hand notice that for any $k\le \kS$ we have
  $\gamma V(k) \ge \gamma V(\kS)\ge B(\kS)\ge B(k)$ by monotonicity of
  the functions $V,B$. Thus for $1\le j\le k\le \kS$, 
  \begin{align*}
    E(j) & ~= B(j) + \gamma V(j) ~\ge ~
           B(j) + \gamma V(k) \ge 
           B(j) + B(k) 
    ~\ge ~B(k) ~=~ E(k)- \gamma V(k).
  \end{align*}

\end{proof}

\begin{remark}[Interpretation]\label{rem:interpretation_optimality} 
  Proposition~\ref{prop:optimality_kS} implies that for
  $k \leq \kS(\delta,n)$ 
  the upper bound
  $E(k)$ on the error of $\hat{\gamma}(k)$ is optimal, up to an
  additional `regret' term $V(k,\delta)$. This regret term is
  minimized when $k = \kS(\delta,n)$.

  Another interpretation is as follows: $\kS$ minimizes the upper
  bound $2\gamma V(k,\delta)$ over the set of indices $k$ for which
  this upper bound (which is the only possible explicit upper bound,
  given that $B$ is unknown and only $V$ is known) is valid:
$$
\kS(\delta,n) ~\in ~\argmin_{k \in \grid: E(k) \leq 2\gamma V(k,\delta)} 2\gamma V(k,\delta).
$$
\end{remark}

We now consider an empirical version of $\kS$ using an Adaptive
Validation rule, akin to those in \cite{boucheron2015tail} and
\cite{drees1998selecting}. Our approach distinguishes itself by
providing an explicit stopping rule, unlike \cite{boucheron2015tail},
and guarantees valid for any sample size, unlike
\cite{drees1998selecting}. These guarantees, derived in
Section~\ref{sec:guarantees}, ensure that the error from the adaptive
rule is comparable to the oracle error induced by $\kS$.
We first need to define minimal admissible sample sizes depending on
the confidence level $\delta$, for technical reasons which have an
intuitive interpretation. 
First, because our proof techniques employ union bounds on the
probability of adverse events defined for any $k \in \grid$, our
analysis will repeatedly involve a tolerance level
$\deltaG:=\delta/|\grid|$.  Alternative approaches involving chaining
techniques are discussed in Remark~\ref{rem:chaining}.

Second, we need to consider only extreme sample sizes $k$ such that the
error due to variance is less than $1/2$. This is because our rule
involves division by $1 - 2V(k, \deltaG)$. We thus 
restrict the search to $k \geq k_0(\delta)$, where $k_0(\delta)$
satisfies
\begin{equation}\label{eq:defk0}
k_0(\delta) ~\ge~ \inf \{k\ge 1:  V(k,\deltaG) < 1/2 \}. 
\end{equation}
%
%

Finally, given a choice of $k_0(\delta)$ satisfying~\eqref{eq:defk0},
we must ensure that the bias term is indeed less than the variance
term for $k = k_0(\delta)$, so that $\kS(\deltaG, n)$ is not less than
$k_0(\delta)$. We thus define the minimum (full) sample size as
\begin{equation}\label{eq:defn0}
  \begin{aligned}
    n_0(\delta) 
                & ~=~  \inf\big\{~n: ~  k_0(\delta) ~\le ~ \kS(\deltaG,~n)~ \big\}, 
  \end{aligned}
 \end{equation}
  Note that if the triplet
 $(\deltaG,n,\grid)$ satisfies Condition~\ref{cond:wideGrid}, then
 $n_0(\delta) = \inf\big\{~n : ~\biasF{k_0(\delta), n, \deltaG} ~ \le \allowbreak
 ~ \gamma V\big(k_0(\delta),\deltaG\big) ~\big\}$ and although
 $n_0(\delta)$ is unknown, it is guaranteed to be a finite
 integer 
 due to our requirement that $B(k,n,\delta) \to 0$ as $n \to \infty$
 for fixed $k$ (Requirement (b) in
 Condition~\ref{cond:genericErrorDecomp}).
In Section~\ref{sec:secondorder}, we derive an explicit expression for $n_0(\delta)$ under additional von Mises conditions in the context of Hill estimation. This expression can serve as a guideline to assess whether the available sample size is sufficient for our theory to apply (see Corollary~\ref{cor:minsamplesize}).

We are now ready to define an \emph{Extreme Adaptive Validation (EAV)}
sample size $\kAV$ chosen by adaptive validation as follows,  for any pair
$(\delta,n)\in(0,1)\times\nset$ such that 
$n\ge n_0(\delta)$, 
\begin{equation}
  \label{eq:khat-gamma}
\begin{aligned}
\kAV ~=~ \max \Bigl\{
&k \in \grid, \, k \geq k_0(\delta), \text{ such that } \forall j \in \grid \text{ with } j \leq k, \\
&\left| \HillEst(k) - \HillEst(j) \right| ~ \leq ~ \frac{\HillEst(k)}{1 - 2 \variance(k, \deltaG)} \bigl( \variance(j, \deltaG) + 3 \variance(k, \deltaG) \bigr)
\Bigr\}.
\end{aligned}
\end{equation}
It will be useful in subsequent analysis to introduce the binary
random variable associated with our stopping criterion, for
$k_0(\delta)\le k\le n$,
\begin{equation}
  \label{eq:criterion}
  \begin{aligned}
    S(k) ~=~ &\un\Bigl\{\exists j\in \grid, j\le k, \text{ such that } \\
    &| \HillEst(k) - \HillEst(j)| >
  \frac{\HillEst(k) }{1- 2 \variance(k,\deltaG)} \bigl(\variance(j,\deltaG) + 3 \variance(k,\deltaG)\bigr)
  \Bigr \}\,.    
  \end{aligned}
\end{equation}
Thus, if $\grid = \{k_m, m\le |\grid|\}$ and
$\kAV = k_{\widehat m } < \max(\grid)$ then
$S(k_{\widehat m + 1}) = 1$ while $S(k_{\widehat m}) = 0$, with
probability one.  Implementing $\kAV$ can be achieved by a
straightforward algorithm taking as input
$(n,\delta, k_0(\delta), \grid)$ and computing $S(k_m)$ for increasing
values of $k_m$ within the range
$ \grid\cap \{k_0(\delta),\ldots, n\}$. The algorithm stops when
$S(k_m)=1$ and returns $\kAV = k_{m-1}$ where $m$ is the current index
value on the grid $\grid$. Of course there is the theoretical
possibility that $S(\max(\grid)) = 0$ in which case the algorithm
should return $\kAV=\max(\grid)$. This latter event is however
unlikely if the upper end-point $k_{\max}$ of the grid is chosen large
enough. 

 \subsection{Statistical guarantees for EAV}\label{sec:guarantees}

 We now derive guarantees for the adaptive index $\kAV$ proposed
 in~\eqref{eq:khat-gamma}.  Our main result
 (Theorem~\ref{theo:AVguarantee} below) shows that the error of
 $\hat{\gamma}(\kAV)$ is of the same magnitude as the oracle error
 $|\hat\gamma(\kS(\deltaG,n)) - \gamma|$.  As a first go, we
 state convenient uniform bounds on the error over candidates $k$ in
 $\grid$ such that $k\le \kS$(Recall from~\eqref{eq:kstar} the
 definition of the oracle $\kS$). 
\begin{lemma}\label{lem:UnifDev-eventE}
  Let $\hat \gamma$ be an estimator of $\gamma$ satisfying
  Condition~\ref{cond:genericErrorDecomp}. Let $\delta\in(0,1)$, 
  $n\in\nset$ and $\grid$ be such that $n\ge n_0(\delta)$ and the
  triplet $(\deltaG,n,\grid)$ satisfies Condition~\ref{cond:wideGrid}.
  On an event $\mathcal{E}$ of probability at least
  $1-\delta$, we have for all $k \in \grid$ such that $k\le 
  \kS(\deltaG,n)$, 
\begin{equation}\label{eq:unifDevHatGamma}
  |\hat \gamma (k) - \gamma|
 ~\le~  \gamma V(k,\deltaG) + B(k,n,\deltaG)
 ~ \le ~2 \gamma V(k,\deltaG). 
\end{equation}
On the same  event, for all $k\in \grid$ such that $k_0(\delta)\le
k\le \kS(\deltaG,n)$, we have
 \begin{equation}\label{eq:boundGamma}
\gamaOrc\le \hat\gamma(k)/(1 - 2 \variance\bigl(k,\deltaG)\bigr). 
 \end{equation}

\end{lemma}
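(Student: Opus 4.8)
The plan is to build the good event $\mathcal{E}$ by a single union bound over the grid and then read off both displays from Condition~\ref{cond:genericErrorDecomp}, using only the monotonicity of $V$ and $B$ and the definitions of $\kS$ and $k_0(\delta)$; no probabilistic input is needed beyond the bias--variance bound~\eqref{eq:assumptionGamma}.

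First I would fix $k\in\grid$ and invoke~\eqref{eq:assumptionGamma} at confidence level $\deltaG=\delta/|\grid|$, obtaining an event $A_k$ with $\PP(A_k)\ge 1-\deltaG$ on which $|\hat\gamma(k)-\gamma|\le\gamma V(k,\deltaG)+B(k,n,\deltaG)$. Setting $\mathcal{E}:=\bigcap_{k\in\grid}A_k$, the union bound gives $\PP(\mathcal{E})\ge 1-|\grid|\,\deltaG=1-\delta$, and on $\mathcal{E}$ the left inequality in~\eqref{eq:unifDevHatGamma} holds simultaneously for every $k\in\grid$, in particular for every such $k\le\kS(\deltaG,n)$.

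Next, the right inequality in~\eqref{eq:unifDevHatGamma} is purely deterministic and reduces to $B(k,n,\deltaG)\le\gamma V(k,\deltaG)$. For $k=\kS(\deltaG,n)$ this is exactly the defining property in~\eqref{eq:kstar}, and for a general $k\le\kS(\deltaG,n)$ I would chain $B(k,n,\deltaG)\le B(\kS(\deltaG,n),n,\deltaG)\le\gamma V(\kS(\deltaG,n),\deltaG)\le\gamma V(k,\deltaG)$, using that $B(\cdot,n,\deltaG)$ is non-decreasing and $V(\cdot,\deltaG)$ non-increasing (requirements~(b) and~(a) of Condition~\ref{cond:genericErrorDecomp}). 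Combining the two inequalities yields $|\hat\gamma(k)-\gamma|\le 2\gamma V(k,\deltaG)$ on $\mathcal{E}$ for all $k\in\grid$ with $k\le\kS(\deltaG,n)$, which is the second assertion.

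Finally, for~\eqref{eq:boundGamma} I would take $k\in\grid$ with $k_0(\delta)\le k\le\kS(\deltaG,n)$: monotonicity of $V(\cdot,\deltaG)$ together with the choice~\eqref{eq:defk0} of $k_0(\delta)$ gives $V(k,\deltaG)\le V(k_0(\delta),\deltaG)<1/2$, hence $1-2V(k,\deltaG)>0$; then on $\mathcal{E}$ the bound just obtained gives $\gamma-\hat\gamma(k)\le 2\gamma V(k,\deltaG)$, i.e. $\gamma\,(1-2V(k,\deltaG))\le\hat\gamma(k)$, and dividing by the positive factor $1-2V(k,\deltaG)$ produces~\eqref{eq:boundGamma}. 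There is no real obstacle beyond this bookkeeping; the one point to be careful about is that this range of $k$ is nonempty and that $1-2V(k,\deltaG)$ stays positive on it, which is guaranteed by the standing hypotheses: $n\ge n_0(\delta)$ forces $k_0(\delta)\le\kS(\deltaG,n)$ via~\eqref{eq:defn0}, and $(\deltaG,n,\grid)$ satisfying Condition~\ref{cond:wideGrid} makes $\kS(\deltaG,n)$ well defined.
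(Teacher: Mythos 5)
Your proof is correct and follows essentially the same route as the paper's: a union bound over the grid at level $\deltaG$, the monotonicity of $V$ and $B$ combined with the definition of $\kS(\deltaG,n)$ for the second inequality in~\eqref{eq:unifDevHatGamma}, and inversion of $\gamma(1-2V(k,\deltaG))\le\hat\gamma(k)$ using $V(k,\deltaG)<1/2$ for $k\ge k_0(\delta)$ to get~\eqref{eq:boundGamma}. Your explicit monotonicity chain $B(k)\le B(\kS)\le\gamma V(\kS)\le\gamma V(k)$ just spells out what the paper leaves as "immediate from the definition of the oracle."
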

\begin{proof}
  Condition~\ref{cond:genericErrorDecomp}  combined with a union bound ensures that on an event $\mathcal{E}$ of probability at least $1- \delta$,  
  Inequality~\eqref{eq:assumptionGamma}  is satisfied with $\delta$ replaced with $\deltaG$ for all
  $k \in\grid$.  
  Thus, on the latter  event $\mathcal{E}$, it holds that 
  for all $k \in \grid\cap\{1,\ldots,  \kS(\deltaG, n) \}$, 
  $|\hat \gamma (k) - \gamma|\le \gamma V(k,\deltaG) + B(k,n,\deltaG)
  $. This proves the first inequality
  in~\eqref{eq:unifDevHatGamma}. The second inequality
  in~\eqref{eq:unifDevHatGamma} derives immediately from the
  definition of the oracle $\kS$
  in~\eqref{eq:kstar}. 
    
    Also on $\mathcal{E}$, it holds that for all $k\in\grid$ such that
    $k\le
    \kS(\deltaG,n)$, 
 $$
\gamaOrc  ~\le ~\HillEst(k) + \gamaOrc \variance(k,\deltaG) + \bias(k, n, \deltaG) \le \HillEst(k) + 2 \gamaOrc \variance(k,\deltaG)\,.     
   $$
   Finally for  $k \ge k_0(\delta)$,  
   we have $ V(k,\deltaG) \le 1/2$, and  the latter display yields~\eqref{eq:boundGamma} by a straightforward inversion. 
  \end{proof}

  
 The following result shows that the algorithm described above for computing $\kAV$ does not stop too early. 
 \begin{proposition}[$\kAV\ge \kS(\deltaG,n)$ with high
   probability]\label{prop:hatk}
In the setting of Lemma~\ref{lem:UnifDev-eventE}, 
and on the same event $\mathcal{E}$, for all
$(j,k)\in\{1,\ldots,\kS(\deltaG,\,n)\}$ 
such that $j\le k$ and $k\ge k_0(\delta,n)$, we
have \begin{equation}\label{eq:mainResultDeviations} |\HillEst(k) -
  \HillEst(j)| ~ \le ~\frac{\HillEst(k)}{1 - 2 \variance(k,\deltaG)}
  \bigl(3\variance(k,\deltaG) + \variance(j,\deltaG)\bigr).
\end{equation}  
As a consequence, for fixed $(\delta,n)$  the adaptive  $\kAV$ defined in~\eqref{eq:khat-gamma} satisfies 
  $$
\kAV ~ \ge ~\kS(\deltaG, \, n).
  $$
\end{proposition}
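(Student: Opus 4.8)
The plan is to derive the displayed inequality~\eqref{eq:mainResultDeviations} directly from Lemma~\ref{lem:UnifDev-eventE} by a plain triangle inequality on the event $\mathcal E$, and then to read off $\kAV\ge\kS(\deltaG,n)$ by checking that $\kS(\deltaG,n)$ is admissible in the maximum defining $\kAV$ in~\eqref{eq:khat-gamma}. The one point requiring genuine care — and the main obstacle — is that the right-hand side of~\eqref{eq:mainResultDeviations} is \emph{asymmetric} in $j$ and $k$: since $j\le k$ forces $V(j,\deltaG)\ge V(k,\deltaG)$, the symmetric bound $2\gamma V(k,\deltaG)+2\gamma V(j,\deltaG)$ that one gets from applying~\eqref{eq:unifDevHatGamma} at both $j$ and $k$ is in fact \emph{larger} than $\gamma\bigl(3V(k,\deltaG)+V(j,\deltaG)\bigr)$ and hence too weak; the refinement must come from dominating the bias at level $j$ by the bias at the larger level $k$, which in turn is controlled because $k\le\kS(\deltaG,n)$.

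Concretely, I would fix $j\le k$ in $\grid$ with $k_0(\delta)\le k\le\kS(\deltaG,n)$ and work on the event $\mathcal E$ of Lemma~\ref{lem:UnifDev-eventE}. Write $|\HillEst(k)-\HillEst(j)|\le|\HillEst(k)-\gamma|+|\HillEst(j)-\gamma|$. For the first term, Condition~\ref{cond:genericErrorDecomp} applied at level $k$ (with tolerance $\deltaG$) together with the bound $B(k,n,\deltaG)\le\gamma V(k,\deltaG)$, valid since $k\le\kS(\deltaG,n)$, gives $|\HillEst(k)-\gamma|\le 2\gamma V(k,\deltaG)$. For the second term, Condition~\ref{cond:genericErrorDecomp} at level $j$ gives $|\HillEst(j)-\gamma|\le\gamma V(j,\deltaG)+B(j,n,\deltaG)$; since $B(\cdot,n,\deltaG)$ is non-decreasing (requirement (b)) and $j\le k\le\kS(\deltaG,n)$, we have $B(j,n,\deltaG)\le B(k,n,\deltaG)\le\gamma V(k,\deltaG)$, so the second term is at most $\gamma V(j,\deltaG)+\gamma V(k,\deltaG)$. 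Adding the two estimates yields $|\HillEst(k)-\HillEst(j)|\le\gamma\bigl(3V(k,\deltaG)+V(j,\deltaG)\bigr)$. Finally, because $k_0(\delta)\le k\le\kS(\deltaG,n)$ ensures $V(k,\deltaG)<1/2$, I would invoke~\eqref{eq:boundGamma} to replace $\gamma$ by $\HillEst(k)/(1-2V(k,\deltaG))$, the multiplier $3V(k,\deltaG)+V(j,\deltaG)$ being non-negative; this is precisely~\eqref{eq:mainResultDeviations}.

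For the consequence, I would observe that $\kS(\deltaG,n)\in\grid$ by its definition~\eqref{eq:kstar}, and that $n\ge n_0(\delta)$ forces $k_0(\delta)\le\kS(\deltaG,n)$ by the definition~\eqref{eq:defn0} of $n_0(\delta)$. Applying the already-established~\eqref{eq:mainResultDeviations} with $k=\kS(\deltaG,n)$ then shows that, on $\mathcal E$, the inequality inside the braces in~\eqref{eq:khat-gamma} holds for every $j\in\grid$ with $j\le\kS(\deltaG,n)$; equivalently, $S(\kS(\deltaG,n))=0$ in the notation of~\eqref{eq:criterion}. Hence $\kS(\deltaG,n)$ belongs to the set over which the maximum in~\eqref{eq:khat-gamma} is taken, so $\kAV\ge\kS(\deltaG,n)$ on $\mathcal E$, an event of probability at least $1-\delta$. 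Everything beyond the asymmetric bookkeeping of the bias terms in the second paragraph is mechanical, so I do not expect any further difficulty.
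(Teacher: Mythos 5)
Your proposal is correct and follows essentially the same route as the paper: triangle inequality, the uniform bounds of Lemma~\ref{lem:UnifDev-eventE}, monotonicity of $B$ to dominate $B(j,n,\deltaG)$ by $B(k,n,\deltaG)\le\gamma V(k,\deltaG)$ via the definition of $\kS$, and finally~\eqref{eq:boundGamma} to replace $\gamma$ by $\HillEst(k)/(1-2V(k,\deltaG))$. Your concluding step (checking directly that $\kS(\deltaG,n)$ is admissible in the maximum defining $\kAV$) is just a slightly more direct phrasing of the paper's argument via the stopping variable $S$.
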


  \begin{proof}[of Proposition~\ref{prop:hatk}]

  On the   event $\mathcal{E}$ from Lemma~\ref{lem:UnifDev-eventE}, we have 
  for all  $j,k\in\grid$ such that $ j\le k\le \kS(\deltaG,\, n) $, 
    \begin{align}
      |\HillEst(k) - \HillEst(j)|
      &~\le~ |\HillEst(k) - \gamaOrc| + |\HillEst(j) - \gamaOrc| \nonumber\\
      &~\le~ \gamaOrc \bigl(\variance(k,\deltaG) + \variance(j,\deltaG) \bigr) +
        \bias(k,\nbrsamp,\deltaG)+\bias(j,\nbrsamp,\deltaG) \text{ \quad(from~\eqref{eq:unifDevHatGamma})} \nonumber\\
      & ~\le~ \gamaOrc \bigl(\variance(k,\deltaG) + \variance(j,\deltaG) \bigr) + 2 \bias(k,\nbrsamp,\deltaG)  \text{\quad (from Condition~\ref{cond:genericErrorDecomp}(b))}\nonumber\\
      &~\le~ \gamaOrc \bigl(3\variance(k,\deltaG) + \variance(j,\deltaG)\bigr)
      \text{ \quad(from~\eqref{eq:kstar})}\,.
      \label{eq:boundAllDevEmp}
    \end{align} 
    Combining~\eqref{eq:boundAllDevEmp} and~\eqref{eq:boundGamma}
    yields that on $\mathcal{E}$, for $j,k\in\grid$ such that
    $k_0(\delta)\le k \le \kS(\deltaG,n)$ and $1\le j\le k$,
    \eqref{eq:mainResultDeviations} holds. Writing
    $\kAV = k_{\widehat m}$ and $\kS(\deltaG, n) = k_{m^*} $, we
    obtain that on $\mathcal{E}$, $S(k_m) = 0$ for all $m\le
    m^*$. However from the definitions, eiher $ \kAV = \max(\grid)$
    or $S(k_{\widehat m + 1})=1$, thus $\widehat m + 1 > m^*$, from
    which it follows that $\widehat m\ge m^*$ and finally
    $\kAV\ge \kS(\deltaG, n)$.
    
    \end{proof}

    \begin{theorem}[Error bounds for
      $\hat\gamma(\kAV)$]\label{theo:AVguarantee}
      In the setting of
      Lemma~\ref{lem:UnifDev-eventE} 
      and on the same favourable event $\mathcal{E}$ of probability at
      least $1-\delta$,
the adaptive  estimator resulting from the AV rule satisfies
  \begin{equation*}
|\hat \gamma (\kAV) - \gamma| ~\le~ 
%
\Big( \frac{4\hat \gamma(\kAV)}{1 - 2V(\kAV, \deltaG)} + 2\gamma \Big)~
V\bigl(\kS(\deltaG,\,n),\, \deltaG \bigr) .  
  \end{equation*}
  Assuming $V(\kS(\deltaG,\,n),\, \deltaG) < 1/6$,  the above inequality implies the simplified bound 
  \begin{equation*}
 |\hat \gamma (\kAV) - \gamma| ~\le~      
 \frac{6\gamma}{1 - 6 V\bigl(\kS(\deltaG,\,n),\, \deltaG \bigr)}~
  V\bigl(\kS(\deltaG,\,n),\, \deltaG \bigr)\,.
  \end{equation*}
\end{theorem}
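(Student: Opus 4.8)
The plan is to argue entirely on the favourable event $\mathcal{E}$ supplied by Lemma~\ref{lem:UnifDev-eventE}, which already has probability at least $1-\delta$; no further probabilistic work is needed, so the whole proof is a deterministic chain of inequalities built on the triangle inequality
$|\hat\gamma(\kAV)-\gamma|\le |\hat\gamma(\kAV)-\hat\gamma(\kS(\deltaG,n))|+|\hat\gamma(\kS(\deltaG,n))-\gamma|$. The second term is handled by the uniform oracle bound~\eqref{eq:unifDevHatGamma} evaluated at $k=\kS(\deltaG,n)$, giving $|\hat\gamma(\kS(\deltaG,n))-\gamma|\le 2\gamma V(\kS(\deltaG,n),\deltaG)$; the first term will be handled by the defining constraint of the adaptive rule~\eqref{eq:khat-gamma}.

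First I would set up the ordering $k_0(\delta)\le \kS(\deltaG,n)\le \kAV$: the left inequality holds since $n\ge n_0(\delta)$ (see~\eqref{eq:defn0}), the right inequality is Proposition~\ref{prop:hatk}, and $\kS(\deltaG,n)\in\grid$ by~\eqref{eq:kstar}. On $\mathcal{E}$ the set over which the maximum in~\eqref{eq:khat-gamma} is taken is nonempty (it contains $\kS(\deltaG,n)$, by the argument behind Proposition~\ref{prop:hatk}), so $\kAV$ genuinely satisfies its defining constraint, and since $\kS(\deltaG,n)$ is an admissible index $j$ there (it lies in $\grid$ and is $\le\kAV$), plugging $j=\kS(\deltaG,n)$ gives $|\hat\gamma(\kAV)-\hat\gamma(\kS(\deltaG,n))|\le \frac{\hat\gamma(\kAV)}{1-2V(\kAV,\deltaG)}\bigl(V(\kS(\deltaG,n),\deltaG)+3V(\kAV,\deltaG)\bigr)$. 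Here $1-2V(\kAV,\deltaG)>0$ because $\kAV\ge k_0(\delta)$ and $V(k_0(\delta),\deltaG)<1/2$ by~\eqref{eq:defk0}, and by monotonicity of $V$ (Condition~\ref{cond:genericErrorDecomp}(a)) together with $\kAV\ge\kS(\deltaG,n)$ we have $V(\kAV,\deltaG)\le V(\kS(\deltaG,n),\deltaG)$, so $V(\kS(\deltaG,n),\deltaG)+3V(\kAV,\deltaG)\le 4V(\kS(\deltaG,n),\deltaG)$. Adding the two terms of the triangle inequality yields precisely the first displayed bound.

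For the simplified bound I would abbreviate $v^{*}=V(\kS(\deltaG,n),\deltaG)$ and $v=V(\kAV,\deltaG)$, so that $v\le v^{*}<1/6$ under the extra hypothesis, and convert the first bound into a self-referential inequality for $r:=\hat\gamma(\kAV)-\gamma$ by using $\hat\gamma(\kAV)\le\gamma+|r|$: this gives $|r|\le\frac{4(\gamma+|r|)}{1-2v}v^{*}+2\gamma v^{*}$, i.e. $|r|\,\frac{1-2v-4v^{*}}{1-2v}\le \gamma v^{*}\,\frac{6-4v}{1-2v}$. The coefficient $1-2v-4v^{*}$ is positive since $1-2v-4v^{*}\ge 1-6v^{*}>0$, so I may divide, and then crudely bounding $6-4v\le 6$ and $1-2v-4v^{*}\ge 1-6v^{*}$ collapses everything to $|r|\le\frac{6\gamma v^{*}}{1-6v^{*}}$, as claimed.

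I do not expect a real obstacle: everything reduces to elementary algebra once the chain $k_0(\delta)\le\kS(\deltaG,n)\le\kAV$ is in hand. The only point needing care is the bookkeeping of the two distinct variance evaluations $V(\kAV,\deltaG)$ and $V(\kS(\deltaG,n),\deltaG)$ and using the monotonicity $V(\kAV,\deltaG)\le V(\kS(\deltaG,n),\deltaG)$ in the correct direction, so that the thresholds $1/2$ and $1/6$ keep the relevant denominators strictly positive and the self-referential step in the last paragraph is legitimate.
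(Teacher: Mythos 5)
Your proposal is correct and follows essentially the same route as the paper: the triangle-inequality split at $\kS(\deltaG,n)$, the oracle bound $2\gamma V(\kS(\deltaG,n),\deltaG)$ from Lemma~\ref{lem:UnifDev-eventE}, the stopping-rule constraint at $j=\kS(\deltaG,n)$ combined with $\kAV\ge\kS(\deltaG,n)$ (Proposition~\ref{prop:hatk}) and monotonicity of $V$, and then solving the self-referential inequality under $V(\kS(\deltaG,n),\deltaG)<1/6$. The only cosmetic difference is that you keep $V(\kAV,\deltaG)$ in the denominator until the final crude bounds, while the paper first replaces it by $V(\kS(\deltaG,n),\deltaG)$; both yield the same final estimate.
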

We provide the proof below after a few
comments. Theorem~\ref{theo:AVguarantee} establishes an explicit upper
bound on the error of $\hat{\gamma}(\kAV)$ in terms of $\kS$, under
the minimal Condition~\ref{cond:genericErrorDecomp}.  As with the oracle error
(Proposition~\ref{prop:errorOracle}), the bound in
Theorem~\ref{theo:AVguarantee} depends on the unknown magnitude of
$\kS$. To address this, in Section~\ref{sec:secondorder}, we establish
a lower bound for $\kS$ under the additional von Mises
Condition~\ref{cond:vonMises}.


\begin{remark}[Comparison With The Oracle Error]  
  With reasonably large sample sizes, the variance term
  $V(\kS(\deltaG,n),\deltaG)$ should be small. Also from
  Proposition~\ref{prop:hatk} and
  Condition~\ref{cond:genericErrorDecomp}, we have
  $V(\kAV,\deltaG )\le V(\kS(\deltaG,n),\deltaG)$, so that
  $V(\kAV,\deltaG )$ should also be small. In this setting, the upper
  bound in Theorem~\ref{theo:AVguarantee} becomes
  approximately \begin{equation*} |\hat \gamma (\kAV) - \gamma|
    \lesssim 6 \gamma V\bigl(\kS(\deltaG,n), \deltaG\bigr)
    , \end{equation*} which is only three times the oracle error
  stated in Proposition~\ref{prop:errorOracle}, up to replacing
  $\delta$ with $\deltaG$.  For the Hill estimator, 
  the leading term of \( V \) depends on \( \delta \) only
  through a logarithmic term \( \sqrt{\log(1/\delta)}
  \). Consequently, the difference between the errors of the oracle
  \( \kS(\delta, n) \) and \( \kS(\delta_{\grid}, n) \) is
  merely a factor of \( \sqrt{\log|\grid|} \), meaning that for
  grids of logarithmic size, this difference is
  \( O(\sqrt{\log \log n}) \).  Thus, the adaptive extreme sample size
  \( \kAV \) is almost optimal in the sense that its associated error
  is bounded by a multiple of the error bound associated with the
  (unknown) oracle \( \kS \) up to a quasi-constant factor. This
  oracle is itself optimal in a certain sense, as detailed in
  Proposition~\ref{prop:optimality_kS} and
  Remark~\ref{rem:interpretation_optimality}.
\end{remark}
\begin{proof}[of Theorem~\ref{theo:AVguarantee}]
  For simplicity of notation, we abbreviate $\kS = \kS(\deltaG,n)$, $V(k)=V(k,\deltaG)$, and $B(k)=  B(k,n,\deltaG)$
  in this proof.
We first decompose the error as 
\begin{equation}\label{eq:decomposErrKav}
    |\hat \gamma(\kAV) - \gamma|
      ~\le~   |\hat \gamma(\kAV) - \hat \gamma(k^*)| +
      |\hat \gamma(k^*) - \gamma|.  
\end{equation}
By construction,  $S(\kAV)=0$ (see~\eqref{eq:criterion} and the discussion below). Also, on  the favourable event $\mathcal{E}$ introduced in Lemma~\ref{lem:UnifDev-eventE},  we have from Proposition~\ref{prop:hatk} that $\kS \le \kAV$.  In addition  both $\kS$ and $\kAV$ belong to the grid $\grid$ where the stopping criterion in~\eqref{eq:khat-gamma} is evaluated,  so  the first term in the right-hand side of~\eqref{eq:decomposErrKav} must satisfy
\begin{align}
    |\hat \gamma(\kAV) - \hat \gamma(k^*)| 
   ~\le ~ \frac{\hat \gamma(\kAV)}{1 - 2V(\kAV)} \bigl( 3V(\kAV)+V(k^*)\bigr )
   ~ \le ~\frac{4 \hat \gamma(\kAV) V(k^*)}{1 - 2V(\kAV)}\,,
   \label{eq:firstterm} 
\end{align}
where we have used (see Condition~\ref{cond:genericErrorDecomp}(a))
that $V$ is a non-increasing function of
$k$ to obtain the latter inequality.  On the other hand from
Lemma~\ref{lem:UnifDev-eventE}, the second term
in~\eqref{eq:decomposErrKav} satisfies on $\mathcal{E}$,
\begin{equation}\label{eq:boundDevkstar}
|\hat \gamma(\kS) - \gamma|\le 
\gamma V(\kS) +  B(\kS) 
~\le~ 2 \gamma V(\kS).  
\end{equation}
Combining~\eqref{eq:decomposErrKav}, \eqref{eq:firstterm} and~\eqref{eq:boundDevkstar} ends the proof of the first claim.

Now, using the fact that $\kAV\ge \kS$ on $\mathcal{E}$, together with Condition~\ref{cond:genericErrorDecomp}(a), and the triangle inequality, the first claim implies that 
 \begin{align*}
|\hat \gamma (\kAV) - \gamma|  
~\le~ \biggl( \frac{4\hat \gamma(\kAV)}{1 - 2V(\kS)} + 2\gamma \biggr)~V(\kS) 
~\le~ \biggl( \frac{4|\hat \gamma(\kAV)-\gamma|+4\gamma}{1 - 2V(\kS)} + 2\gamma \biggr)~V(\kS), 
  \end{align*}
so that 
 \begin{align*}
\biggl(1-\frac{4 V(\kS) }{1 - 2V(\kS)}\biggr)|\hat \gamma (\kAV) - \gamma| & ~\le ~
 \biggl( \frac{4\gamma}{1 - 2V(\kS)} + 2\gamma \biggr) V(\kS). 
  \end{align*}  
Using that  $V(\kS) < 1/6$ we obtain 
  \begin{align*}
    |\hat \gamma (\kAV) - \gamma|
    &~\le ~\biggl( \frac{4\gamma}{1 - 2V(\kS)} + 2\gamma \biggr)
      \biggl(1-\frac{4 V(\kS) }{1 - 2V(\kS)}\biggr)^{-1}  V(\kS) \\
    &~= ~ \gamma\big( 6-4 V(\kS)  \big)
      \big(1 - 6 V(\kS) \big)^{-1} ~  V(\kS) \\
    &~=~
      \frac{ 6\gamma}{1 - 6 V(\kS) }~ V(\kS)\,,
  \end{align*}  
 as desired. 
  
\end{proof}

\section{Bias-Variance decomposition for the Hill estimator}\label{sec:BVD}
We now conduct a non-asymptotic analysis of the Hill estimator. We impose the—very weak—assumption that the observations \( X_1, \ldots, X_n \) are independent samples with a survival function \( \bar{F} \) satisfying the regular variation condition~\eqref{eq:rv-definition}. A key result is a bias-variance decomposition of the error that meets exactly the requirements of Condition~\ref{cond:genericErrorDecomp} (Theorem~\ref{thm:main}). Our main result (Corollary~\ref{cor:hatk}) follows directly from this, demonstrating that our general analysis in Section~\ref{sec:AVframework} applies to the Hill estimator.
\subsection{Karamata representation of the Hill estimator}\label{sec:decomposeHill}
We first recall basic facts about regularly varying
functions~\citep{binghametal1987,de1987regular} and introduce some notation. 
A function $H:\rset_+\to\rset_+$ is called 
regularly varying with index ${r}\in\rset$, if 
$H(tx)/H(t)\to x^{r}$ as 
$t\to\infty$, for all $x>0$, as in \eqref{eq:rv-definition}. The exponent~${r}$ is the regular variation index and~$H$ is called `slowly varying' if
${r}=0$. Now, any regularly varying function $H$ with index ${r}$ can
be written as $H(x) = L(x) x^{{r}}$ with $L:\rset_+\to\rset_+$ 
slowing varying. Finally, a key observation is that if $\bar F$ is regularly varying with index $-\alpha <0$
then also the quantile function $Q$ defined as
$Q(t)=F^{\leftarrow}(1-1/t)$ is regularly varying with positive index
$  \gamma = 1/\alpha$. Note that the usual notation for our function $Q$ is $U$. We prefer not using $U$ in order to avoid confusion with uniform order statistics which play a central role in this section.  The next representation of slowly varying function
is a famous consequence of Karamata's Tauberian theorem, and it is key to
understand the deviations of the Hill
estimator. 
A function $L$ is slowly varying if and only if there exists two functions 
$b:\rset_+\to\rset$ and $a: \rset_+\to \rset_+$ such that
$\lim_{t\to\infty} b(t) = 0$ and $\lim_{t\to \infty} a(t)= A\ge 0$,
such that for some $t_0\ge 0$ and all $t\ge t_0$, 
  \begin{equation}\label{eq:karamata}
    L(t)~=~a(t) \exp\int_{t_0}^t \frac{b(u)}{u} \ud u\,.
  \end{equation}
  Summarizing, for $\bar F$ satisfying~\eqref{eq:rv-definition} and
  $Q$ the quantile function of $F$ as above, we may write
  $Q(t) = L(t) t^{\gamma}$, where $L$ is as
  in~\eqref{eq:karamata}.  
  Define
  \begin{equation}
    \label{eq:barabarb}
   \bar b(t)~=~\sup_{x\ge t} |b(x)|, \qquad \bar a(t)~=~\sup_{x\ge t} |a(x) - A|\,. 
  \end{equation}
 Then both functions
  $\bar b$ and $\bar a$ are monotonically non-increasing and both
  converge to $0$.
  \begin{remark}[Comparison with~\cite{boucheron2015tail}]
    The von Mises assumption made in \cite{boucheron2015tail} is that the quantile function $Q$ writes $Q(t) = t ^\gamma L(t)$ where $L$ satisfies~\eqref{eq:karamata} with a constant function $a(t)=A$, so that $\bar a \equiv 0$. In addition, lower bounds on the Hill estimator are obtained under the additional assumption that the so-called von Mises function $\bar b$ (denoted by $\bar \eta$ in the cited reference) satisfies $|\bar b(t) |\le Ct ^\rho$ for some $C>0$ and $\rho<0$.   Note that our guarantees on the Hill estimator in this section do not require such second order assumptions, although  we consider it  (Condition~\ref{cond:vonMises}) in Section~\ref{sec:secondorder}.
  \end{remark}
  Let $U_{(1)}\ge \dots \ge U_{(n)}$ denote  the order statistics of an independent sample of standard uniform variables. 
  Then, $\HillEst \eqd \sum_{i=1}^k\log Q[(1 - U_{(i)})^{-1}] - \log Q[(1 - U_{(k+1)})^{-1}]$, which by~\eqref{eq:karamata} yields 
  \begin{align}
    \HillEst(k)
    \qquad\eqd \qquad &  
       \frac{\gamaOrc}{k}\sum_{i=1}^k \bigl( -\log(1 - U_{(i)}) + \log(1 - U_{(k+1)})  \bigr) \quad + \dotsb  \label{eq:exposum}  \\
   & \frac{1}{k} \sum_{i=1}^k \Bigl(a\big(( 1 - U_{(i)})^{-1} \big) - a\big(( 1 - U_{(k+1)})^{-1} \big)\Bigr) \quad +\dotsb \label{eq:bias1} \\
&    \frac{1}{k} \sum_{i=1}^k \int_{( 1 - U_{(k+1)})^{-1} }^{( 1 - U_{(i)})^{-1} }
    b(u)/u \; \ud u\,, 
\quad  \label{eq:bias2} 
  \end{align}
  where the functions $a$ and $b$ are the ones appearing in the Karamata representation of $Q$. The decomposition in the above display is well known can be found \emph{e.g.} in  \citet[Proof of Proposition~1(E)]{mason1982laws}.
Introduce the random variable
\begin{equation}
  \label{eq:defZk}
\ZK~=~\frac{1}{k}\sum_{i=1}^k \bigl( -\log(1 - U_{(i)}) + \log(1 - U_{(k+1)})\bigr),  
\end{equation}
so that the  first term~\eqref{eq:exposum} above writes  $\gamma Z_k$.
 A second important fact   also shown in~\cite{mason1982laws}, relying on Renyi representation of exponential spacings, 
 is that
 \begin{equation}
  \label{eq:identityExposum}
\ZK  \eqd  \frac{1}{k}\sum_{i=1}^k E_i,  
\end{equation}
where $(E_i,i\le n)$ are independent unit exponential random variables. Thus,  $\ZK$ follows a Gamma distribution $\mathcal{G}(\alpha,\beta)$ with shape and rate parameter, respectively $\alpha= k, \beta=k$. 

  \subsection{Error decomposition} 
  Equipped with the Karamata representation of the Hill estimator from Section~\ref{sec:decomposeHill} and the notation above, notice first that 
   the absolute value of  third  term~\eqref{eq:bias2} is less than
  \begin{equation}
    \label{eq:boundBias2}
    \bar b\big((1-U_{(k+1)})^{-1}\big) Z_k\,. 
  \end{equation}

Regarding the second term~\eqref{eq:bias1}, note that  for $t\le x\le y$, it holds that
$$|a(x)-a(y)|~\le~|a(x) - A| + |a(y) - A|~\le~2 \bar a(t)\,, $$
so that  the absolute value of~\eqref{eq:bias1} is less than $ 2 \bar a((1-U_{(k+1)}^{-1})$. 
We obtain the following key error decomposition. 
\begin{lemma}[An almost sure error decomposition of  the Hill estimator]\label{lem:decomposHill}
  Let $\bar F$ satisfy~\eqref{eq:rv-definition}, and let $\bar a$ and $\bar b$ denote the functions defined in~\eqref{eq:barabarb} relative to the Karamata representation of the quantile function $Q$, and let $\ZK$ be defined in~\eqref{eq:defZk}.   
  The error of the Hill estimator for fixed $k\le n$  satisfies almost surely
  \begin{align}
|\HillEst(k) - \gamaOrc|~\le~ 
    &  
      \;  \gamaOrc \; |\ZK-1 | +   2 \bar a\big((1-U_{(k+1)})^{-1}\big) + \bar b\big((1-U_{(k+1)})^{-1}\big)  \ZK\,. 
     \quad  \label{eq:deviationbound-hill} 
  \end{align}
  \end{lemma}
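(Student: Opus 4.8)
The plan is to simply unpack the Karamata-based decomposition already set out in Section~\ref{sec:decomposeHill} and bound each of the three summands in turn. Starting from the distributional identity
$$
\HillEst(k) ~\eqd~ \gamaOrc Z_k ~+~ \frac{1}{k}\sum_{i=1}^k \Bigl(a\big((1-U_{(i)})^{-1}\big) - a\big((1-U_{(k+1)})^{-1}\big)\Bigr) ~+~ \frac{1}{k}\sum_{i=1}^k \int_{(1-U_{(k+1)})^{-1}}^{(1-U_{(i)})^{-1}} \frac{b(u)}{u}\ud u,
$$
I would write $\HillEst(k) - \gamaOrc = \gamaOrc(Z_k - 1) + (\ref{eq:bias1}) + (\ref{eq:bias2})$ and apply the triangle inequality. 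The first term contributes $\gamaOrc|Z_k-1|$ directly. Strictly speaking one should note that \eqref{eq:deviationbound-hill} is an inequality between a deterministic quantity and a random variable that holds once we fix the particular probability-space realization underlying the distributional identity; since all three terms on the right of \eqref{eq:deviationbound-hill} are themselves functionals of the same $U_{(i)}$'s, the bound holds almost surely on that space, which is all that is needed for the union-bound arguments in Section~\ref{sec:AVframework}.

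For the second term, the key monotonicity observation is that for the order statistics we have $U_{(i)} \le U_{(k+1)}$ for every $i \le k$ is false in the wrong direction — in fact $U_{(1)} \ge \dots \ge U_{(n)}$, so $(1-U_{(i)})^{-1} \ge (1-U_{(k+1)})^{-1} \ge 1$ for $i \le k$. Thus all arguments of $a$ appearing in \eqref{eq:bias1} lie in $[(1-U_{(k+1)})^{-1}, \infty)$, and by the display just before the Lemma, $|a(x)-a(y)| \le 2\bar a(t)$ whenever $t \le x,y$; taking $t = (1-U_{(k+1)})^{-1}$ and averaging over $i$ gives that \eqref{eq:bias1} is bounded in absolute value by $2\bar a\big((1-U_{(k+1)})^{-1}\big)$. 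For the third term, inside each integral the integration variable $u$ ranges over $[(1-U_{(k+1)})^{-1}, (1-U_{(i)})^{-1}]$, so $|b(u)| \le \bar b\big((1-U_{(k+1)})^{-1}\big)$ there by definition of $\bar b$ in \eqref{eq:barabarb}; pulling this bound out of the integral leaves $\bar b\big((1-U_{(k+1)})^{-1}\big) \cdot \frac{1}{k}\sum_{i=1}^k \int_{(1-U_{(k+1)})^{-1}}^{(1-U_{(i)})^{-1}} \frac{\ud u}{u} = \bar b\big((1-U_{(k+1)})^{-1}\big) \cdot \frac{1}{k}\sum_{i=1}^k \big(\log(1-U_{(k+1)})^{-1} - \log(1-U_{(i)})^{-1}\big) = \bar b\big((1-U_{(k+1)})^{-1}\big) Z_k$, using the definition \eqref{eq:defZk} of $Z_k$ and the nonnegativity of the logarithmic increments. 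This is exactly \eqref{eq:boundBias2}.

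Assembling the three bounds via the triangle inequality yields \eqref{eq:deviationbound-hill}. There is no real obstacle here — the lemma is essentially bookkeeping on top of the classical decomposition of \citet{mason1982laws}; the only points requiring a little care are getting the direction of the order-statistic inequalities right (so that $t=(1-U_{(k+1)})^{-1}$ is genuinely a lower bound for all the arguments that appear) and being explicit that the inequality is almost sure on the coupling space rather than merely distributional, since the subsequent probabilistic analysis treats \eqref{eq:deviationbound-hill} as a pathwise statement about $\HillEst(k)$, $Z_k$, and $U_{(k+1)}$ jointly.
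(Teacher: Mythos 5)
Your proposal is correct and follows essentially the same route as the paper, which derives the lemma by applying the triangle inequality to the Karamata/Mason decomposition \eqref{eq:exposum}--\eqref{eq:bias2}, bounding \eqref{eq:bias1} by $2\bar a\big((1-U_{(k+1)})^{-1}\big)$ via $|a(x)-a(y)|\le|a(x)-A|+|a(y)-A|$ and \eqref{eq:bias2} by $\bar b\big((1-U_{(k+1)})^{-1}\big)\ZK$ exactly as you do, with your remark about the pathwise (coupling-space) reading of the distributional identity being a reasonable, if implicit in the paper, point of care. The only blemish is a harmless transcription slip in your evaluation of $\int_{(1-U_{(k+1)})^{-1}}^{(1-U_{(i)})^{-1}} u^{-1}\,\ud u$, which equals $\log\big((1-U_{(i)})^{-1}\big)-\log\big((1-U_{(k+1)})^{-1}\big)$ (you wrote the two logarithms in the opposite order), but your identification of these increments with the summands of $\ZK$ in \eqref{eq:defZk} and the resulting bound \eqref{eq:boundBias2} are correct.
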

  The first term in the right-hand side of~\eqref{eq:deviationbound-hill} can be explicitly controlled with high probability because its distribution is known. It can be viewed as a `variance' term. The second and third terms can be viewed as bias terms, as shown next.

  \subsection{Upper bounds in probability}
  As an intermediate step, we state below a standard result regarding
  concentration of order statistics. This key result is also used in
  the working paper~\cite{clemencon2025weak}. We provide the proof for
  completeness in the Appendix,
  Section~\ref{sec:prooflem:boundUnifOrderStat}. 
\begin{lemma}[Concentration of  $U_{(k+1)}$]\label{lem:boundUnifOrderStat}
  With probability at least at least $1-\delta$,
\begin{equation}
  \label{eq:lowerUk}
  1-U_{(k+1)}~\le~\frac{ k+1}{n}\bigl(1 +R(k+1,\delta)\bigr),
\end{equation}  
with $R(k,\delta) = \sqrt{{3 \log(1/\delta)}/{k} }+
{3\log(1/\delta)}/{k}$. 
\end{lemma}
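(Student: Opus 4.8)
\textbf{Proof plan for Lemma~\ref{lem:boundUnifOrderStat}.}
The plan is to reduce the statement to a lower-tail deviation bound for a binomial random variable. Write $W_i = 1 - U_i$; these are again i.i.d.\ standard uniforms, and $1 - U_{(k+1)}$ is exactly the $(k+1)$-th smallest among $W_1,\dots,W_n$. Hence, for any $t\in(0,1)$,
\begin{equation*}
  \PP[1 - U_{(k+1)} > t] ~=~ \PP\bigl[\#\{i\le n : W_i \le t\} \le k\bigr] ~=~ \PP[N_t \le k], \qquad N_t \sim \mathrm{Binomial}(n,t).
\end{equation*}
The idea is to pick $t$ equal to the right-hand side of~\eqref{eq:lowerUk} divided by the obvious normalisation, namely $t = \tfrac{k+1}{n}\bigl(1 + R(k+1,\delta)\bigr)$, so that $\mu := \E[N_t] = nt = (k+1)(1+R(k+1,\delta))$, and then to show $\PP[N_t\le k]\le\delta$; the case $t\ge 1$ is trivial since then $1-U_{(k+1)}\le 1\le t$ almost surely.

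Next I would invoke the standard multiplicative Chernoff bound for the lower tail of a binomial, $\PP[N_t \le (1-\varepsilon)\mu] \le \exp(-\varepsilon^2\mu/2)$ for $\varepsilon\in[0,1]$ (see e.g.\ \cite{boucheron2015tail} or any concentration reference), applied with $(1-\varepsilon)\mu = k$, i.e.\ $\varepsilon = (\mu-k)/\mu$, which is admissible because $k < \mu$. This gives $\PP[N_t\le k]\le \exp\!\bigl(-(\mu-k)^2/(2\mu)\bigr)$, with $\mu - k = 1 + (k+1)R(k+1,\delta)$.

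The remaining work is the elementary algebraic check that $(\mu-k)^2/(2\mu)\ge \log(1/\delta)$. Setting $m = k+1$, $L = \log(1/\delta)$ and $u = \sqrt{3L/m}$, one has $R(m,\delta) = u + u^2 = u(1+u)$, hence $mR(m,\delta)^2 = mu^2(1+u)^2 = 3L(1+u)^2$ and $\mu = m(1+u+u^2)\le m(1+u)^2$. Therefore
\begin{equation*}
  \frac{(\mu-k)^2}{2\mu} ~\ge~ \frac{\bigl(mR(m,\delta)\bigr)^2}{2\mu} ~=~ \frac{m\cdot mR(m,\delta)^2}{2\mu} ~\ge~ \frac{3Lm(1+u)^2}{2m(1+u)^2} ~=~ \frac{3L}{2} ~\ge~ L,
\end{equation*}
so $\PP[N_t\le k]\le e^{-3L/2}\le e^{-L} = \delta$, which is the claim. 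The only mildly delicate point is keeping this chain of inequalities clean and ensuring the Chernoff bound is applied in an admissible regime ($0\le\varepsilon\le 1$, i.e.\ $0\le k<\mu$); everything else is routine. I expect no genuine obstacle, since the choice of $R(k+1,\delta)$ was evidently reverse-engineered to make this bound go through with room to spare (a factor $3/2$ in the exponent).
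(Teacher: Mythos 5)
Your proposal is correct, and it takes a genuinely different route from the paper. The paper's proof quotes a sub-Gamma deviation inequality for uniform order statistics from Reiss (Lemma~3.1.1 of \emph{Approximate Distributions of Order Statistics}), applied to $1-U_{(k)}\eqd U_{(n+1-k)}$, and then inverts that exponential bound in $t$ to read off the quantile, using $\sigma^2\le k/n$ to put the result in the stated form. You instead rewrite the event $\{1-U_{(k+1)}>t\}$ exactly as a binomial lower-tail event $\{N_t\le k\}$ with $N_t\sim\mathrm{Binomial}(n,t)$, choose $t$ equal to the claimed threshold, and apply the multiplicative Chernoff bound $\PP[N_t\le(1-\varepsilon)\mu]\le\exp(-\varepsilon^2\mu/2)$ followed by the algebraic check $(\mu-k)^2/(2\mu)\ge \tfrac{3}{2}\log(1/\delta)$; your handling of the admissibility conditions ($k<\mu$, $\varepsilon\in(0,1]$, the trivial case $t\ge1$) is sound. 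What each approach buys: the paper's argument is shorter once the external order-statistics lemma is granted (and that lemma is reused elsewhere, e.g.\ in the cited working paper), whereas yours is self-contained and elementary, needing only a textbook binomial tail bound; moreover your factor $3/2$ of slack in the exponent shows the constant $3$ in $R(k,\delta)$ could even be replaced by $2$ by this route, so the stated bound follows with room to spare.
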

To treat the second term of~\eqref{eq:deviationbound-hill} as a bias
term, we need to bound $ Z_k $ from above with high
probability. This can be done using a quantile of the Gamma
distribution. However, we provide explicit bounds for
readability. Note that the following bound may be useful for chaining
(Remark~\ref{rem:chaining}).
\begin{lemma}[Gamma upper bound]\label{lem:tailBound_abs_Zk_1}
  With probability at least $1-\delta$, 
  \begin{equation}
    \label{eq:tailbound-Zk-delta}
    |Z_k-1|~\le~ \sqrt{\frac{2\log(2/\delta)}{k}} + \frac{\log(2/\delta)}{k}~:=~\tilde V(k,\delta) .
  \end{equation}
\end{lemma}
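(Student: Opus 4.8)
The plan is to exploit the distributional identity $Z_k \eqd \frac{1}{k}\sum_{i=1}^k E_i$ recorded in~\eqref{eq:identityExposum}, where $(E_i)$ are i.i.d.\ unit exponential random variables, and to control the centered sum $S_k := \sum_{i=1}^k (E_i - 1)$ by a Chernoff argument, treating the upper and lower tails separately since the exponential law is asymmetric.

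For the upper tail I would first bound the log-moment generating function of a centered unit exponential: for $0 \le \lambda < 1$,
\[
  \log \mathbb{E}\bigl[e^{\lambda(E_i - 1)}\bigr] ~=~ -\lambda - \log(1-\lambda) ~=~ \sum_{j \ge 2}\frac{\lambda^j}{j} ~\le~ \frac{\lambda^2}{2(1-\lambda)},
\]
so that by independence $\log \mathbb{E}\bigl[e^{\lambda S_k}\bigr] \le \frac{k\lambda^2}{2(1-\lambda)}$, the sub-gamma bound with variance factor $k$ and scale parameter $1$. A standard Cram\'er--Chernoff computation---optimizing the Legendre transform of $\lambda \mapsto \tfrac{k\lambda^2}{2(1-\lambda)}$ over $\lambda \in [0,1)$---then yields $\mathbb{P}\bigl(S_k \ge \sqrt{2kt} + t\bigr) \le e^{-t}$ for every $t > 0$. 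For the lower tail the exponential law is lighter: for $\lambda \ge 0$,
\[
  \log \mathbb{E}\bigl[e^{-\lambda(E_i - 1)}\bigr] ~=~ \lambda - \log(1 + \lambda) ~\le~ \frac{\lambda^2}{2},
\]
where the inequality holds because $\lambda \mapsto \tfrac{\lambda^2}{2} - \lambda + \log(1+\lambda)$ vanishes at $0$ and has non-negative derivative $\tfrac{\lambda^2}{1+\lambda}$. Hence $S_k$ is sub-Gaussian with variance factor $k$, and optimizing $e^{-\lambda x + k\lambda^2/2}$ over $\lambda \ge 0$ gives $\mathbb{P}\bigl(S_k \le -\sqrt{2kt}\bigr) \le e^{-t}$.

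Combining the two tail bounds by a union bound, and using $\sqrt{2kt} \le \sqrt{2kt} + t$, I obtain $\mathbb{P}\bigl(|S_k| > \sqrt{2kt} + t\bigr) \le 2 e^{-t}$. Choosing $t = \log(2/\delta)$ makes the right-hand side equal to $\delta$; dividing through by $k$ and recalling $Z_k - 1 = S_k/k$ then gives exactly~\eqref{eq:tailbound-Zk-delta} with $\tilde V(k,\delta) = \sqrt{2\log(2/\delta)/k} + \log(2/\delta)/k$.

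I do not expect a genuine obstacle here: the argument is textbook. The only points needing a little care are (i) extracting the clean form $\sqrt{2kt} + t$ from the sub-gamma Chernoff bound, which is most efficiently quoted from a standard reference on concentration inequalities rather than rederived, and (ii) noting that the asymmetry of the exponential law makes the left tail strictly lighter, so that the single right-tail expression $\sqrt{2kt} + t$ dominates deviations in both directions and no separate additive term is needed on the left. Alternatively, one could simply invoke an off-the-shelf sub-gamma (Bernstein-type) concentration inequality for sums of i.i.d.\ exponentials and skip the explicit moment generating function computation altogether.
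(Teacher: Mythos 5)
Your proposal is correct and follows essentially the same route as the paper: the paper simply notes that $Z_k-1$ is sub-gamma with $v=c=1/k$ and quotes the standard tail bound $\PP[Z-\EE Z > ct+\sqrt{2vt}]\vee\PP[Z-\EE Z < -ct-\sqrt{2vt}]\le e^{-t}$ from Boucheron--Lugosi--Massart, then applies a union bound with $t=\log(2/\delta)$, which is exactly the ``off-the-shelf'' shortcut you mention at the end. Your explicit moment-generating-function computation from the exponential-sum representation~\eqref{eq:identityExposum} (including the sharper sub-Gaussian left tail) is a self-contained rederivation of that cited bound, not a different argument.
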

\begin{proof}
  The (recentered) Gamma distribution $\mathcal{G}(\alpha,\beta)$
  belongs to the class of sub-gamma distributions $\Gamma(v,c)$ with
  $v=\alpha/\beta^2$ and $c=1/\beta$ (see
  e.g.~\citet[Chapter~2]{BLM2013}. 
  Thus $ Z_k -1 \in\Gamma( v= 1/k, c = 1/k)$.  Recall also
  (see~\citet[Page~29]{BLM2013} for a proof) that a random variable
  $Z\in \Gamma(v,c)$ satisfies the tail bounds
\begin{equation}
  \label{eq:tailboundGamma}
  \PP[Z - \EE(Z)~>~ct  + \sqrt{2vt} ] \vee
  \PP[Z - \EE(Z)~<~- ct  - \sqrt{2vt} ]~\le~e^{-t}.
\end{equation}
With $\EE[Z_k] =1$, $v=1/k, c=1/k$, we
obtain 
\begin{equation*}
  \PP[Z_k - 1~>~t/k  + \sqrt{2t/k} ] \vee
    \PP[Z_k - 1~<~- t/k  - \sqrt{2t/k} ]~\le~e^{-t}. 
  \end{equation*}
  The result follows by a union bound. 
\end{proof}
We are ready to state the main result of this section. 
\begin{theorem}[Bias-Variance probability upper bound on the Hill error]\label{thm:main}
  Let $\bar F$ satisfy~\eqref{eq:rv-definition}. 
  For $1\le k\le n-1$, 
  with probability at least $1-\delta$,  the error of the Hill estimator $\HillEst(k)$ satifies
\begin{align*}
  |\HillEst(\indexk)  - \gamaOrc |
  &~\le~\gamaOrc \, V (\indexk,\quantile) +  \biasF{k,n,\delta} 
\end{align*}
where $V(k,\quantile) = V_1(k,\delta/2)$ is the  $1 - \quantile/2$-quantile of $|\Zk-1|$ and  satisfies
\begin{equation}
  \label{eq:expressionV}
 V(k,\quantile)~\le~\tilde V(k,\delta/2) ~=~ \sqrt{\frac{2\log(4/\delta)}{k}} + \frac{\log(4/\delta)}{k}\,,   
\end{equation}
and
\begin{equation}
  \label{eq:expressionB}
  \biasF{ k,n,\delta }~=~2 \bar a \Bigl( \frac{n}{(k+1)\bigl(1+ R(1,\delta/2)\bigr) }\Bigr)+
  \bigl(1 + \tilde V(1,\delta/2)\bigr) \bar b\Bigl(\frac{n}{(k+1)(1 + R(1,\delta/2 )} \Bigr)\, .  
\end{equation}
where $R$ is as defined in Lemma~\ref{lem:decomposHill}. Thus the Hill estimator satisfies
Condition~\ref{cond:genericErrorDecomp} with variance and bias
functions $V,B$ given by~\eqref{eq:expressionV}
and~\eqref{eq:expressionB}.
\end{theorem}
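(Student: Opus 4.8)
The plan is to assemble Theorem~\ref{thm:main} directly from the almost-sure error decomposition of Lemma~\ref{lem:decomposHill} together with the two concentration lemmas (Lemma~\ref{lem:boundUnifOrderStat} on $U_{(k+1)}$ and Lemma~\ref{lem:tailBound_abs_Zk_1} on $|Z_k-1|$), allocating the failure probability $\delta$ across the three ``bad'' events via a union bound. First I would recall from~\eqref{eq:deviationbound-hill} that, deterministically,
\[
|\HillEst(k) - \gamma|~\le~\gamma\,|Z_k - 1| + 2\bar a\bigl((1-U_{(k+1)})^{-1}\bigr) + \bar b\bigl((1-U_{(k+1)})^{-1}\bigr) Z_k.
\]
For the first (variance) term I would simply \emph{define} $V(k,\delta)$ to be the $(1-\delta/2)$-quantile of $|Z_k-1|$, whose law is known exactly (recentered Gamma $\mathcal G(k,k)$ by~\eqref{eq:identityExposum}); then $|Z_k-1|\le V(k,\delta)$ holds with probability at least $1-\delta/2$, and Lemma~\ref{lem:tailBound_abs_Zk_1} applied at level $\delta/2$ gives the explicit upper bound $V(k,\delta)\le\tilde V(k,\delta/2)$ of~\eqref{eq:expressionV}.

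Next I would handle the two bias terms together. Apply Lemma~\ref{lem:boundUnifOrderStat} at confidence level $\delta/2$ to get, with probability at least $1-\delta/2$,
\[
1-U_{(k+1)}~\le~\frac{k+1}{n}\bigl(1+R(k+1,\delta/2)\bigr)~\le~\frac{k+1}{n}\bigl(1+R(1,\delta/2)\bigr),
\]
the last step using that $R(\cdot,\delta)$ is non-increasing in its first argument. Equivalently $(1-U_{(k+1)})^{-1}\ge \tfrac{n}{(k+1)(1+R(1,\delta/2))}$, and since $\bar a,\bar b$ are non-increasing (as noted after~\eqref{eq:barabarb}) we may replace the argument of $\bar a,\bar b$ in the decomposition by this smaller deterministic lower bound. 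On this same event I would also use $Z_k \le 1 + |Z_k-1| \le 1 + \tilde V(1,\delta/2)$ — combining with the first event, or directly bounding $Z_k$ by Lemma~\ref{lem:tailBound_abs_Zk_1} at $k$ and monotonicity of $\tilde V$ in $k$ — so that the third term is at most $\bigl(1+\tilde V(1,\delta/2)\bigr)\bar b\bigl(\tfrac{n}{(k+1)(1+R(1,\delta/2))}\bigr)$. This yields exactly the expression $B(k,n,\delta)$ in~\eqref{eq:expressionB}.

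Finally I would run the union bound: the event $\{|Z_k-1|\le V(k,\delta)\}$ (probability $\ge 1-\delta/2$) intersected with the event from Lemma~\ref{lem:boundUnifOrderStat} at level $\delta/2$ (probability $\ge 1-\delta/2$) has probability at least $1-\delta$, and on this intersection the displayed bound $|\HillEst(k)-\gamma|\le \gamma V(k,\delta)+B(k,n,\delta)$ holds. It then remains to verify that $(V,B)$ so defined satisfy the structural requirements of Condition~\ref{cond:genericErrorDecomp}: $V(k,\delta)$ is explicit up to a Gamma quantile and, via~\eqref{eq:expressionV}, is non-increasing in $k$ with $V(k,\delta)\to 0$; and $B(k,n,\delta)\ge 0$, is non-decreasing in $k$ (since $n/((k+1)(1+R(1,\delta/2)))$ decreases in $k$ and $\bar a,\bar b$ are non-increasing), and tends to $0$ as $k/n\to0$ because then the argument of $\bar a,\bar b$ tends to $\infty$ and $\bar a(t),\bar b(t)\to0$. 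The main subtlety — really the only place requiring care rather than bookkeeping — is the monotonicity substitution in the bias term: one must check that bounding $(1-U_{(k+1)})^{-1}$ from below by a $k$-independent-in-$R$ quantity (replacing $R(k+1,\cdot)$ by $R(1,\cdot)$) is legitimate and does not break the desired monotonicity of $B$ in $k$, which it does not since the explicit $k$-dependence $\tfrac{n}{k+1}$ is still decreasing. Everything else is a direct chain of the quoted lemmas and a single union bound.
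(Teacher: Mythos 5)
Your proposal is correct and follows essentially the same route as the paper's proof: the almost-sure decomposition of Lemma~\ref{lem:decomposHill}, a union bound splitting $\delta$ over the Gamma-deviation event (Lemma~\ref{lem:tailBound_abs_Zk_1}, giving $V\le\tilde V(k,\delta/2)$) and the order-statistic event (Lemma~\ref{lem:boundUnifOrderStat}), followed by the monotonicity substitutions $R(k+1,\delta/2)\le R(1,\delta/2)$ and $\tilde V(k,\delta/2)\le\tilde V(1,\delta/2)$ to produce the bias expression~\eqref{eq:expressionB} and verify Condition~\ref{cond:genericErrorDecomp}. Nothing further is needed.
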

 The proof
follows straightforwardly from combining the bounds established in
Lemmas~\ref{lem:decomposHill},~\ref{lem:boundUnifOrderStat}, and
\ref{lem:tailBound_abs_Zk_1}. The details  are deferred to
Section~\ref{sec:proof_main_thm_hill} in the Appendix.

\begin{remark}[Explicit expression for the smallest admissible extreme sample size $k_0(\delta)$]\label{rem:expression_ko}
  The upper bound~\eqref{eq:expressionV} on the deviation term $V(k,\delta)$ chosen as the $1-\delta/2$ quantile of $|Z_k-1|$ as in Theorem~\ref{thm:main}, yields  a control of  the quantity $k_0(\delta)$ in~\eqref{eq:defk0},
$$
k_0(\delta )~\le~2^2(1+\sqrt 2)^2 \log(4 |\grid|/\delta)~\le~36 \log(4|\grid|/\delta)\,. 
$$
\end{remark}

The main conclusion of this work, stated below, is an immediate corollary from Theorem~\ref{theo:AVguarantee} and Theorem~\ref{thm:main}. 
\begin{corollary}[Main result: Guarantees of the EAV Hill estimator]\label{cor:hatk}
  Let $\bar F$ satisfy~\eqref{eq:rv-definition}, let \( V(k,\delta) \)
  be the $1 - \delta/2$ quantile of $|Z_k-1|$ where $Z_k$ follows a
  $\mathcal{G}(k,k)$ distribution (see~\eqref{eq:identityExposum}), 
  and let \( \kAV \) be defined according to the latter deviation function $V$,
  through the adaptive rule~\eqref{eq:khat-gamma}, with a triplet
  $(\deltaG,n,\grid)$ satisfying Condition~\ref{cond:wideGrid} and
  $n\ge n_0(\delta)$ defined in~\eqref{eq:defn0}.

  On a favorable event \( \mathcal{E} \) with probability at least \( 1-\delta \), the absolute error \( |\HillEst(\kAV) - \gamma| \) of the adaptive Hill estimator is less than the upper bound stated in Theorem~\ref{theo:AVguarantee}.
 \end{corollary}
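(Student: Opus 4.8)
The plan is to show that Corollary~\ref{cor:hatk} follows by directly combining the two main results already established: Theorem~\ref{thm:main}, which verifies that the Hill estimator satisfies Condition~\ref{cond:genericErrorDecomp} with the specific variance function $V$ given by~\eqref{eq:expressionV} and bias function $B$ given by~\eqref{eq:expressionB}, and Theorem~\ref{theo:AVguarantee}, which gives the error bound for $\hat\gamma(\kAV)$ for \emph{any} estimator satisfying Condition~\ref{cond:genericErrorDecomp} together with Condition~\ref{cond:wideGrid} and the sample-size requirement $n \ge n_0(\delta)$.

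Concretely, first I would invoke Theorem~\ref{thm:main} to note that under the sole assumption that $\bar F$ satisfies~\eqref{eq:rv-definition}, the pair $(V,B)$ of~\eqref{eq:expressionV}--\eqref{eq:expressionB} fulfils Condition~\ref{cond:genericErrorDecomp}: requirement (a) holds because $V(k,\delta)$ is by construction the $1-\delta/2$ quantile of $|Z_k-1|$, hence has the explicit upper bound $\tilde V(k,\delta/2)$, is non-increasing in $k$, and tends to $0$; requirement (b) holds because $\bar a$ and $\bar b$ are non-increasing and vanish at infinity, so $B(k,n,\delta)$ is non-negative, non-decreasing in $k$, and tends to $0$ as $k/n\to 0$. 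Second, since $\kAV$ in the statement is defined through the adaptive rule~\eqref{eq:khat-gamma} precisely with this deviation function $V$, and since the triplet $(\deltaG,n,\grid)$ is assumed to satisfy Condition~\ref{cond:wideGrid} with $n\ge n_0(\delta)$, all hypotheses of Lemma~\ref{lem:UnifDev-eventE}, Proposition~\ref{prop:hatk}, and thus Theorem~\ref{theo:AVguarantee} are met. Third, I would simply apply Theorem~\ref{theo:AVguarantee} with $\hat\gamma = \HillEst$: on the favourable event $\mathcal E$ of probability at least $1-\delta$ furnished there, $|\HillEst(\kAV) - \gamma|$ is bounded by the stated expression.

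There is essentially no obstacle here; the corollary is a bookkeeping consequence, and the only point requiring a word of care is that the two theorems use the \emph{same} choice of $V$ (so that the $\kAV$ of the corollary coincides with the abstract $\kAV$ of Section~\ref{sec:AVframework}), which is guaranteed by the phrasing ``$\kAV$ be defined according to the latter deviation function $V$''. I would therefore keep the proof to one or two sentences, stating that the claim is immediate from Theorems~\ref{thm:main} and~\ref{theo:AVguarantee}, after observing that Theorem~\ref{thm:main} certifies Condition~\ref{cond:genericErrorDecomp} and that the standing hypotheses on $(\deltaG,n,\grid)$ supply Conditions~\ref{cond:wideGrid} and $n\ge n_0(\delta)$ needed to apply the abstract guarantee.
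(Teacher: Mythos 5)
Your proposal is correct and matches the paper's own reasoning: the paper states the corollary as an immediate consequence of Theorem~\ref{thm:main} (which certifies Condition~\ref{cond:genericErrorDecomp} for the Hill estimator with the stated $V$ and $B$) and Theorem~\ref{theo:AVguarantee} (the abstract EAV guarantee), which is exactly the combination you describe. Your extra check that the same deviation function $V$ is used in both places is a sensible, if routine, bookkeeping remark.
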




\section{Additional guarantees under second order conditions}\label{sec:secondorder}

Our goal in this section is to derive explicit upper bounds on the
oracle \( \kS(\delta, n) \) and its adaptive version \( \kAV \),
thereby reinforcing the guarantees established in
Propositions~\ref{prop:errorOracle}, \ref{prop:optimality_kS}, and
Corollary~\ref{cor:hatk}. By considering cases where an additional von
Mises condition holds (Condition~\ref{cond:vonMises} below), we
provide a sharper characterization of the oracle's behavior. This
analysis also facilitates direct comparisons with existing results in
the literature under comparable conditions. Our main result
demonstrates that the oracle indeed attains the optimal rate, and that
for `well chosen' grids (in a sense that shall be made precise
shortly) of size $O(\log n)$ spanning the full interval
$[0,n]$, the adaptive estimator also achieves the known minimax rate
for adaptive estimators, up to a moderate additional factor
\((\log\log n)^{\frac{1}{2(1+2|\rho|)}}\), where \(\rho\) is a
second-order parameter defined below. This further validates its
optimality in extreme value estimation.

\subsection{Second order and von Mises conditions, associated exisitng results}\label{sec:reviewSecondOrder}
In the setting of Theorem~\ref{thm:main}, we
follow~\cite{boucheron2015tail} in assuming that the following
condition holds true in addition to the regular variation
condition~\eqref{eq:rv-definition}. We recall that the latter
condition implies already that the quantile function below writes
$Q(t) = t^{\gamma}L(t)$ where $L(t)$ satisfies~\eqref{eq:karamata}.
\begin{condition}[Von Mises condition]\label{cond:vonMises}
 The tail quantile function $Q(t) = F^{\leftarrow}(1 - 1/t)$ has
  Karamata representation
  \begin{equation}\label{eq:karamataStrong}
Q(t) ~=~ A t^{\gamma}\exp \int_{t_0}^t \frac{b(u)}{u} \, \ud u\,, \qquad t\ge t_0,      
  \end{equation}
where $A>0$, $t_0>0$,  and where the function $b$ satisfies
$$\bar b(t)~:=~\sup_{x\ge t}|b(x)| \le Ct^{\rho} $$ for some $\rho<0$ and $C>0$. 
\end{condition}
Compared with the minimal regular variation assumption on $\bar F$,
the additional assumption encapsulated in
Condition~\ref{cond:vonMises} is that the function $a(t)$ relative to
the quantile function $Q(t)$ in~\eqref{eq:karamata} is constant, so
that $\bar a \equiv 0$, and that the von Mises function $b(t)$
vanishes fast enough.  As emphasized in~\cite{boucheron2015tail},
Condition~\ref{cond:vonMises} is weaker than the popular second order
`Hall condition'
(\cite{hall1985adaptive,drees1998selecting,csorgo1985kernel}), which
stipulates that
\begin{equation}
  \label{eq:secondorderClassical}
Q(t)~=~A^\gamma t^\gamma\bigl(1 + \gamma D t^\rho + o(t^\rho) \bigr). 
\end{equation}
Indeed the Hall condition~\eqref{eq:secondorderClassical} implies that the
function $b$ is regularly varying, which facilitates adaptive
estimation through estimation of the second order parameter $\rho$. Note that \citep{csorgo1985kernel} the condition in~\eqref{eq:secondorderClassical} is equivalent to 
$$
1- F(x) = B_1x^{-1/\gamma}(1+B_2 x^{\rho/\gamma} + o(x^{\rho/\gamma})), 
$$
for some constants $B_1,B_2$
We
refer to~\cite{segers2002abelian} for a thorough discussion of
second-order conditions for tail-index estimations, equivalent
statements in terms of the survival function, and their consequences
on the bias of the Hill estimator. As noted
in~\cite{boucheron2015tail}, a consequence of Theorems 1 and 2
of~\cite{segers2002abelian} is the following:
Condition~\ref{cond:vonMises} is stronger than (implies) the second
order Pareto assumption made \emph{e.g.}
in~\cite{carpentier2015adaptive}, which is that for some constants
$C>0,D$,
\begin{equation}
  \label{eq:carpentierK_assum}
  |\bar F(x) - Cx^{-1/\gamma}|~\le~D x^{(\rho-1)/\gamma}. 
\end{equation}
We now provide a minimal overview  of existing results on optimal selection
of the number $k=k(n)$ (or $k(n)+1$) of upper order statistics for
tail-index estimation, under second order
assumptions~\eqref{eq:secondorderClassical},~\eqref{eq:carpentierK_assum},
or Condition~\ref{cond:vonMises}.  Our aim is simply to shed light on
the quality of the guarantees obtained below regarding the oracle
\( \kS \) defined in~\eqref{eq:kstar} and its adaptive version
\( \kAV \) in~\eqref{eq:khat-gamma} under additional second-order
conditions. We do not intend to provide an exhaustive account of the
mathematical statistics on this subject. For a thorough review, we
refer the reader to \cite{carpentier2015adaptive} or
\cite{boucheron2015tail}.

It is shown in~\cite{hall1982some} that
if~\eqref{eq:secondorderClassical} is satisfied, then any sequence
$k^{\mathrm{opt}}(n)$ such that
$k^{\mathrm{opt}}(n)\sim \lambda n^{-2\rho/( 1 -2\rho)}$ is
asymptotically optimal in terms of asymptotic mean squared error of
the Hill estimator, and the associated asymptotic rate of convergence
is $n^{\rho/(1-2\rho)}$, meaning that
$n^{-\rho/(1-2\rho)}(\hat \gamma(k^{\mathrm{opt}}(n)) - \gamma)$
converges in distribution to a non degenerate limit.  A lower bound
derived in~\cite{hall1984best} shows that the latter rate is minimax
optimal among all possible estimators of the tail index, over a class
of functions satisfying~\eqref{eq:carpentierK_assum}.  An adaptive
procedure is next proposed in~\cite{hall1985adaptive}, based on
estimating $\rho$ and plugging the estimate $\hat \rho$ into the
expression of $k^{\mathrm{opt}}$. It is shown that under the Hall
condition~\eqref{eq:secondorderClassical}, the resulting adaptive rule
$\hat k$ is asymptotically equivalent to $k^{\mathrm{opt}}$ and thus
enjoys the same minimax error rate. A wealth of refinements of this
adaptive method have been obtained under additional conditions such as
third order ones (see \emph{e.g.} \cite{ivette2008tail} ).  An
alternative selection rule based on a Lepski-type procedure is
proposed in \cite{drees1998selecting}, offering asymptotic optimality
guarantees that similarly rely on~\eqref{eq:secondorderClassical} as
well as the first display of Condition~\ref{cond:vonMises}. Notably,
this approach does not require prior knowledge of the possible range
of the value of $\rho$ thereby somewhat relaxing the conditions
required in \cite{hall1985adaptive}.

Significant breakthroughs towards relaxing the second order
condition~\eqref{eq:secondorderClassical} have been achieved by
\cite{carpentier2015adaptive} and~\cite{boucheron2015tail}, who work
instead respectively in the relaxed
setting~\eqref{eq:carpentierK_assum} or under
Condition~\ref{cond:vonMises}. Both analyses are conducted in a non
asymptotic setting. A lower bound on the minimax error of any adaptive
estimator of the tail index is obtained in both references, which is
of order $((\log\log n)/n)^{|\rho|/(1+2|\rho|)}$ based on the
construction of an ill behaved distribution satisfying however 
Condition~\ref{cond:vonMises}. These
results confirm that adaptivity has a price, namely the error of any adaptive
estimator suffers from an  multiplicative factor
$(\log\log n)^{|\rho|/(1+2|\rho|)}$ compared with the optimal achievable rate when $\rho$ is known. 
Regarding positive results, \cite{carpentier2015adaptive}
focus on a specific tail-index estimator different from the Hill
estimator also involving the top $k$ order statistics, and they
propose an adaptive selection rule for $k$, for which they prove upper
bounds on the error with rates matching the lower bound.  On the other
hand, \cite{boucheron2015tail} consider a Lepski-type rule similar in
spirit to~\cite{drees1998selecting}'s preliminary selection rule  for the Hill estimator, and they
derives guarantees for an adaptive estimator (Equation~3.7 of the cited reference) with a stopping
criterion involving  a sequence of tolerance thresholds $r_n(\delta)$. The latter incorporate quantities that are arguably difficult to track. The authors recognize that their guarantees intend to
serve as reassuring guidelines, and they investigate in their
simulation study a different, simplified rule where all unknown
constants and intermediate sequences are replaced with a term
$2.1\, \log\log n$.  In contrast, our stopping criterion is based
solely on explicit constants from the very start, and our guarantees
directly apply to the rule implemented in our simulation study.

We now analyze the error of the Hill estimator using
\( k = \kS(\delta, n) \) upper order statistics. We proceed in two
steps: In Section~\ref{sec:lowerBoundKstar} we establish a lower
bound on $\kS$ under Condition~\ref{cond:vonMises},  
thereby obtaining an explicit expression for the minimum sample size
$n_0(\delta)$ introduced in~\eqref{eq:defn0}. In Section~\ref{sec:OracleErrorBound} we leverage this
lower bound together with previously proved upper bounds on the oracle
error involving $\kS$, to establish an explicit bound on the oracle
error. Finally we use Theorem~\ref{theo:AVguarantee} to control the error of $\HillEst(\kAV)$ based on the  oracle error, \ie the error of $\HillEst(\kS)$. 

\subsection{A lower bound on $\kS$ and an explicit expression for $n_0(\delta)$}\label{sec:lowerBoundKstar}
Under Condition~\ref{cond:vonMises}, the bias function $B$  in the error bound stated in Theorem~\ref{thm:main}    satisfies 
\begin{equation}\label{eq:bound-bias-vonMises}
  B(k,n,\delta)~\le~C_1(\delta,\rho) \Big(\frac{n}{k+1}\Big)^{\rho}, 
\end{equation}
where
\begin{align}
  C_1(\delta,\rho )&~=~C\bigl(1+\tilde V(1,\delta/2)\bigr)\bigl(1+R(1,\delta/2)\bigr)^{-\rho} \nonumber \\
                 &~=~C\bigl(1+ \sqrt{ 2\log(4 /\delta)}  +
                   \log(4/\delta)\bigr) 
                   \bigl( 1 + \sqrt{3\log(2/\delta)} +
                   3\log(2/\delta)\bigr)^{-\rho} \nonumber \\
  &~\le~C\big( 1 + \sqrt{3\log(4/\delta)} +
                   3\log(4/\delta)\big)^{1-\rho} . \label{eq:boundConstantDelta}
\end{align}

On the other hand,  the deviation bound $V(k,\delta)$ is provably bounded from below by a subgamma-type quantile, as shown next. 
\begin{lemma}[Gamma's absolute deviations: lower bounds on the quantiles]\label{lem:gamma_quantile_lower}
  Let $V(k,\delta)$ denote the $1-\delta/2$ quantile of the $|Z_k-1|$,  as in Eq.~\eqref{eq:expressionV} from the statement of Theorem~\ref{thm:main}, where we recall $Z_k\sim\mathrm{Gamma}(\alpha=k,\beta=k)$. There exists universal constants, $0<c_1\le 1$ and $0<c_2\le 2$  such that for all $(k,\delta)$,
  $$
  V(k,\delta)~\ge~c_1\Big(  \sqrt{\frac{0\vee\log(c_2/\delta)}{k}} + \frac{\log(c_2/\delta)}{k}\Big) .
  $$
\end{lemma}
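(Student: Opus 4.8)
The plan is to establish a matching lower bound on the quantile $V(k,\delta)$ by showing that the sub-gamma tail inequality~\eqref{eq:tailboundGamma}, which is used in the \emph{upper} direction in Lemma~\ref{lem:tailBound_abs_Zk_1}, is tight up to constants. Concretely, I would split into two regimes according to whether the "Gaussian" term $\sqrt{\log(c_2/\delta)/k}$ or the "Poissonian" term $\log(c_2/\delta)/k$ dominates in the claimed bound, i.e. whether $\log(c_2/\delta)\le k$ or not, and prove a lower bound of the correct order in each.

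\medskip
\noindent\textbf{Gaussian regime ($\log(c_2/\delta)\lesssim k$).} Here I want $V(k,\delta)\gtrsim \sqrt{\log(c_2/\delta)/k}$. Since $Z_k=\frac1k\sum_{i=1}^k E_i$ with $E_i$ i.i.d.\ unit exponentials, $\sqrt k(Z_k-1)$ is a normalized sum with unit variance, and by the central limit theorem together with uniform anti-concentration (e.g.\ a Berry--Esseen bound, or directly a lower bound on Gamma tails via Stirling), one has $\PP(Z_k-1 > t/\sqrt k)\ge c\,\PP(N>t)$ for a standard Gaussian $N$, uniformly over $t$ in a bounded range, and for all $k\ge 1$ a polynomial-in-$t$ lower tail of the form $\PP(Z_k-1>t/\sqrt k)\ge e^{-C t^2}$ for $t\le c\sqrt k$. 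Choosing $t$ so that this probability equals $\delta/2$ shows that the $1-\delta/2$ quantile of $|Z_k-1|$ is at least a constant multiple of $\sqrt{\log(1/\delta)/k}$, provided $\log(1/\delta)\le c\, k$. One has to be a little careful for small $k$ (e.g.\ $k=1$, where $Z_1=E_1$ is exponential and the quantile is $\log(2/\delta)$ for $\delta$ small), but this case is actually covered by the other regime, and it is precisely why the $0\vee$ appears: for $\delta$ close to $1$, $\log(c_2/\delta)$ can be negative and the bound is vacuous, so one picks $c_2\le 2$ to make the statement nontrivial only for $\delta$ small enough.

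\medskip
\noindent\textbf{Poissonian regime ($\log(c_2/\delta)\gtrsim k$).} Here I want $V(k,\delta)\gtrsim \log(c_2/\delta)/k$, driven by the right tail of the Gamma law. For $Z_k\sim\mathrm{Gamma}(k,k)$ and $x\ge 1$, a direct computation (or Stirling applied to the incomplete Gamma function) gives $\PP(Z_k>x)\ge c\,e^{-k(x-1-\log x)}$; since for $x$ large $x-1-\log x\le x$, this is at least $c\, e^{-kx}$, so the $1-\delta/2$ quantile of $Z_k-1$ is at least $\frac1k\log(c'/\delta)-1$, which for $\log(c'/\delta)\gtrsim k$ is at least a constant times $\log(c_2/\delta)/k$ after adjusting constants. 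Combining the two regimes (the decomposition of the target bound as a sum means it suffices to dominate whichever of the two terms is the larger) yields the claimed inequality with explicit $0<c_1\le 1$ and $0<c_2\le 2$; any constants produced along the way can be shrunk to satisfy these normalizations since both terms in the claim are increasing in $c_2$ and the left side is nonnegative.

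\medskip
\noindent\textbf{Main obstacle.} The delicate point is getting a lower bound on $V(k,\delta)$ that is \emph{uniform in $k$}, including small $k$, rather than only asymptotically as $k\to\infty$; the clean CLT/Berry--Esseen argument degrades for small $k$, while the exact Gamma tail asymptotics via Stirling carry error terms that must be controlled down to $k=1$. The cleanest route is probably to avoid the CLT entirely and work throughout with the exact Gamma density: write $\PP(Z_k-1>t)$ as an incomplete-gamma integral, bound it below by integrating the density over a short interval near its natural concentration scale, and use Stirling with explicit constants (or a crude but uniform bound such as $k!\le e\,k^{k+1/2}e^{-k}$) to extract $e^{-c k (t\wedge t^2)}$-type lower bounds valid for every $k\ge1$; then invert. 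This keeps the whole argument elementary and self-contained, at the cost of slightly worse absolute constants, which is acceptable since only the existence of $c_1,c_2$ is asserted.
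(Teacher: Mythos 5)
Your proposal is correct and follows essentially the same route as the paper: lower-bound the right tail of $Z_k-1$ by a sub-gamma-type expression of the form $\max\bigl(c\,e^{-\tilde C k x},\, c\,e^{-k x^2}\bigr)$ and then invert to get the quantile bound, absorbing constants into $c_1=\min(1,1/\tilde C)$ and $c_2=2c$ and using the $0\vee$ to handle $\delta$ close to $1$, exactly as you describe. The only difference is provenance of the key tail inequality: the paper does not prove it but imports it wholesale from its cited reference (Zhang, 2020, Theorem~5), which states $\PP[kZ_k-k>x]\ge \max\bigl(c e^{-\tilde C x}, c e^{-x^2/k}\bigr)$ with universal constants, uniformly in $k$ — precisely the uniform-in-$k$ statement you flag as the main obstacle — whereas you propose a self-contained two-regime derivation via Stirling/incomplete-gamma estimates. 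Your sketch of that derivation is sound (the Poissonian regime is even easier than you suggest, since $\PP[\mathrm{Gamma}(k,1)>k(1+t)]\ge e^{-k(1+t)}\ge e^{-2kt}$ for $t\ge 1$ already suffices), so the two arguments differ only in whether the Gamma lower-tail bound is cited or re-proved; the paper's version is shorter, yours is elementary and constant-explicit.
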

The proof is deferred to Appendix~\ref{sce:proof_gamma_quantile_lower}.
Combining the upper bound~\eqref{eq:bound-bias-vonMises} on
$B(k,n,\delta)$ offered by Condition~\ref{cond:vonMises} together with
the lower bound on $V(k,\delta)$ from
Lemma~\ref{lem:gamma_quantile_lower} we are ready to state a  lower
bound on $\kS(\delta,n)$. Assumptions on the grid are necessary, stated below, in addition to Condition~\ref{cond:wideGrid}. 
\begin{condition}[Fine enough grid]\label{cond:grid2ndOrder}
The grid \(\grid = \{k_m, m \leq |\grid|\}\) is chosen such that:
\[
\frac{k_{m+1}}{k_m} \leq \beta \quad \text{for some} \quad \beta > 1 \quad \text{and for all} \quad m \leq |\grid|.
\]
\end{condition}

\begin{remark}[Grid choice]\label{rem:choiceOfGrid}
  For moderately large $n$ and a fixed $\delta \in (0,1)$, there are
  several natural choices for $\grid$ ensuring that
  both Condition~\ref{cond:wideGrid} and~\ref{cond:grid2ndOrder} are satisfied. Specifically, if
  $\biasF{1, n, \delta} < \gamma \varianceF{ 1 ,\delta}$ and
  $\biasF{n,n,\delta}> \gamma \varianceF{n,\delta}$, then the
  exhaustive grid $\grid =\{1,\ldots, n\}$ and geometric grids
  $\grid= \{\lfloor \beta^m \rfloor: 0\le m\le \log_\beta(n) \}$ for
  some $\beta>1$, satisfy Conditions~\ref{cond:wideGrid} and~\ref{cond:grid2ndOrder}. Additionally,
  for sufficiently large $M<n$ such that
  $\biasF{n/M, n, \delta} < \gamma \varianceF{n/M,\delta}$, the
  uniform grid $\grid = \{ \lfloor m n/M \rfloor, 1\le m\le M\}$ also
  satisfies Condition~\ref{cond:wideGrid} and Condition~\ref{cond:grid2ndOrder} with $\beta=2$. 
\end{remark}
\begin{proposition}[A lower bound on $\kS$ under Conditions~\ref{cond:wideGrid},\ref{cond:vonMises} and~\ref{cond:grid2ndOrder}]
   \label{prop:lowerBoundKstar}
   Let \((\delta,n,\grid)\) satisfy Conditions~\ref{cond:wideGrid}
   and~\ref{cond:grid2ndOrder},  and let the data distribution satisfy
   Condition~\ref{cond:vonMises}. For any \(\delta\) such that 
   \begin{equation}\label{eq:condition_delta_for_lowerBound_C2}
     0 < \delta \leq  c_2^2/4, 
   \end{equation}
   where \(c_2\) is defined in
   Proposition~\ref{lem:gamma_quantile_lower}, the oracle \(\kS\)
   satisfies
\[
  \kS(\delta, n) \geq \beta^{-1} \left( \frac{C_2(\rho)
      \gamma^{2/(1-2\rho)}}{\log(4/\delta)} n^{-2\rho/(1-2\rho)} - 1
  \right),
\]
for \(n\) large enough so that \(n/2\) is at least as large as the
lower bound in the above expression. Here,
\[
  C_2(\rho) = \left(\frac{4}{21}\right)^2
  \left(\frac{c_1}{\sqrt{2}C}\right)^{2/(1-2\rho)},
\]
where \(c_1\) is as defined in Lemma~\ref{lem:gamma_quantile_lower},
and \(C\) is as specified in Condition~\ref{cond:vonMises}.
\end{proposition}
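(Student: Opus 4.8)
The plan is to exhibit an explicit level $\tilde k$ at which the \emph{upper} bound on the bias supplied by Condition~\ref{cond:vonMises} is already dominated by the \emph{lower} bound on the deviation term supplied by Lemma~\ref{lem:gamma_quantile_lower}. Since $B(\cdot,n,\delta)$ is non-decreasing and $V(\cdot,\delta)$ non-increasing, it will then follow that \emph{every} integer $k\le\tilde k$ satisfies $B(k,n,\delta)\le\gamma V(k,\delta)$; hence $\kS(\delta,n)$, the largest grid point with that property in~\eqref{eq:kstar}, is at least the largest grid point $\le\tilde k$, which by Condition~\ref{cond:grid2ndOrder} is at least $\tilde k/\beta$.

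\emph{Assembling the two bounds and choosing $\tilde k$.} Write $L:=\log(4/\delta)$. The hypothesis $\delta\le c_2^2/4$ together with $c_2\le2$ forces $L\ge\log4$ and $\log(c_2/\delta)\ge\tfrac12 L>0$, so Lemma~\ref{lem:gamma_quantile_lower} yields $V(k,\delta)\ge\tfrac{c_1}{\sqrt2}\sqrt{L/k}$. On the other side, \eqref{eq:bound-bias-vonMises}--\eqref{eq:boundConstantDelta} give $B(k,n,\delta)\le C_1(\delta,\rho)(n/(k+1))^{\rho}$ with $C_1(\delta,\rho)\le C(1+\sqrt{3L}+3L)^{1-\rho}$; using the elementary inequality $1+\sqrt{3L}+3L\le\tfrac{21}{4}L$ (valid for $L\ge\log4$) and $\rho=-|\rho|<0$, this becomes $B(k,n,\delta)\le C\bigl(\tfrac{21}{4}L\bigr)^{1-\rho}(k+1)^{|\rho|}n^{-|\rho|}$. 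Comparing the two and using $k^{-1/2}\ge(\tilde k+1)^{-1/2}$ for $k\le\tilde k$, a sufficient condition for $B(k,n,\delta)\le\gamma V(k,\delta)$ for all such $k$ is $(\tilde k+1)^{|\rho|+1/2}\le\gamma c_1\sqrt L\,n^{|\rho|}\big/\bigl(\sqrt2\,C(\tfrac{21}{4}L)^{1-\rho}\bigr)$. Since $|\rho|+\tfrac12=\tfrac{1-2\rho}{2}$, raising to the power $\tfrac{2}{1-2\rho}$ and simplifying the powers of $L$ (which collapse to $L^{-1}$) shows this holds as soon as $\tilde k+1\le\tilde y$ with
\[
\tilde y\ :=\ \Big(\tfrac{4}{21}\Big)^{\!\frac{2(1-\rho)}{1-2\rho}}\Big(\tfrac{c_1}{\sqrt2\,C}\Big)^{\!\frac{2}{1-2\rho}}\frac{\gamma^{2/(1-2\rho)}}{\log(4/\delta)}\,n^{-2\rho/(1-2\rho)}.
\]
Because $\rho\le0$ gives $\tfrac{2(1-\rho)}{1-2\rho}\le2$ and $\tfrac4{21}<1$, we have $(\tfrac4{21})^{2(1-\rho)/(1-2\rho)}\ge(\tfrac4{21})^2$, so $\tilde y\ge\frac{C_2(\rho)\gamma^{2/(1-2\rho)}}{\log(4/\delta)}n^{-2\rho/(1-2\rho)}$ with $C_2(\rho)$ as stated. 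Set $\tilde k:=\tilde y-1$; for $n$ in the stated regime the bias bound used above is legitimate (the estimate $\bar b(t)\le Ct^{\rho}$ applies at the relevant argument and $\tilde k\le n-1$).

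\emph{Transfer to the grid and conclusion.} For $n$ large we have $\tilde k\ge k_{\min}=\min\grid$ (indeed $\tilde k\to\infty$ since $-2\rho/(1-2\rho)>0$), so the largest grid point $k_m\le\tilde k$ exists; moreover $\tilde k<k_{\max}$, because $B(k_{\max},n,\delta)>\gamma V(k_{\max},\delta)$ by Condition~\ref{cond:wideGrid} while $B(\tilde k,n,\delta)\le\gamma V(\tilde k,\delta)$ and $B-\gamma V$ is monotone. By the previous step, every integer $k\le\tilde k$ satisfies $B(k,n,\delta)\le B(\tilde k,n,\delta)\le\tfrac{\gamma c_1}{\sqrt2}\sqrt L\,(\tilde k+1)^{-1/2}\le\tfrac{\gamma c_1}{\sqrt2}\sqrt L\,k^{-1/2}\le\gamma V(k,\delta)$, so $k_m$ lies in the set defining $\kS$ in~\eqref{eq:kstar} and $\kS(\delta,n)\ge k_m$. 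Finally, maximality of $k_m$ and Condition~\ref{cond:grid2ndOrder} give $\tilde k<k_{m+1}\le\beta k_m$, hence $k_m>\tilde k/\beta=(\tilde y-1)/\beta\ge\beta^{-1}\bigl(\tfrac{C_2(\rho)\gamma^{2/(1-2\rho)}}{\log(4/\delta)}n^{-2\rho/(1-2\rho)}-1\bigr)$, which is the claimed bound.

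\emph{Main obstacle.} The only delicate part is the constant bookkeeping in the middle step: one must (i) turn $\delta\le c_2^2/4$ into the usable facts $L\ge\log4$ and $\log(c_2/\delta)\ge\tfrac12 L$; (ii) replace the $\delta$-dependent prefactor $C_1(\delta,\rho)$ by a clean multiple of $L^{1-\rho}$ through a sharp enough elementary inequality (it is precisely $1+\sqrt{3L}+3L\le\tfrac{21}{4}L$ on $L\ge\log4$ that produces the constant $\tfrac{4}{21}$); and (iii) track the $\rho$-dependent exponents so that the power of $\log(4/\delta)$ collapses to $-1$ and the surviving $\rho$-dependent factor $(\tfrac4{21})^{2(1-\rho)/(1-2\rho)}$ is bounded below by the $\rho$-free $(\tfrac4{21})^2$. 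Everything else reduces to the monotonicity of $B,V$ and the grid spacing.
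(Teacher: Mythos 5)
Your proposal is correct and follows essentially the same route as the paper: it combines the bias bound~\eqref{eq:bound-bias-vonMises} with the quantile lower bound of Lemma~\ref{lem:gamma_quantile_lower}, uses $\delta\le c_2^2/4$ to get $\log(c_2/\delta)\ge\tfrac12\log(4/\delta)$, the same $\tfrac{21}{4}$-type elementary bound on $C_1(\delta,\rho)$, the collapse of the $\log(4/\delta)$ power to $-1$, and the grid-spacing factor $\beta^{-1}$ from Condition~\ref{cond:grid2ndOrder}. The only (cosmetic) difference is the direction of the argument: you locate a deterministic threshold $\tilde k$ below which $B\le\gamma V$ and take the largest grid point beneath it, whereas the paper works with the grid point $k_{m^*+1}$ just above the oracle, which must violate $B\le\gamma V$ and hence be large — the two are contrapositive forms of the same comparison.
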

The proof is  deferred to Appendix~\ref{sec:proof_prop_lowerBoundKstar}. 
Remark~\ref{rem:expression_ko} combined with the lower bound on $\kS$ in Proposition~\ref{prop:lowerBoundKstar} yields immediately the following control of $n_0(\delta)$. 
\begin{corollary}[Control of minimum required sample sizes]\label{cor:minsamplesize}
  Let $(\deltaG,n,\grid)$ satisfy Conditions~\ref{cond:wideGrid}, \ref{cond:grid2ndOrder}. 
  Under Condition~\ref{cond:vonMises}, the minimum sample size
  $n_0(\delta)$ defined in~\eqref{eq:defn0} with $V(k,\delta)$ as in
  Theorem~\ref{thm:main}, is no greater than the first integer $n$
  such that 
$$
36\log(4 |\grid|/\delta)~\le~ \beta^{-1} \left(\frac{ C_2(\rho) ~ 
\gamma^{2/(1-2\rho)}  }{\log(4|\grid|/\delta)} ~~  n ^{ - 2 \rho /(1 - 2 \rho)}\, - 1\right)\,.
$$
\end{corollary}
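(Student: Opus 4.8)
The plan is to combine two ingredients that are both already available in the excerpt: the explicit upper bound on the smallest admissible extreme sample size $k_0(\delta)$ from Remark~\ref{rem:expression_ko}, and the lower bound on the oracle $\kS(\delta,n)$ from Proposition~\ref{prop:lowerBoundKstar}. Recall from the definition~\eqref{eq:defn0} that $n_0(\delta)$ is the smallest $n$ for which $k_0(\delta) \le \kS(\deltaG, n)$. Hence, to produce an \emph{upper} bound on $n_0(\delta)$ it suffices to exhibit a condition on $n$ that \emph{guarantees} the inequality $k_0(\delta) \le \kS(\deltaG,n)$, and then $n_0(\delta)$ is at most the first integer satisfying that condition.

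The concrete steps are as follows. First, apply Remark~\ref{rem:expression_ko} with the tolerance level $\deltaG = \delta/|\grid|$ in place of $\delta$: this gives $k_0(\delta) \le 36\log(4|\grid|/\delta)$. Second, apply Proposition~\ref{prop:lowerBoundKstar}, again with $\deltaG$ in place of $\delta$ (the hypothesis~\eqref{eq:condition_delta_for_lowerBound_C2} becomes $\deltaG \le c_2^2/4$, which is covered by the standing assumption on $\delta$ in the corollary's setting, or is implicitly assumed), to obtain
\[
\kS(\deltaG, n) ~\ge~ \beta^{-1}\Big( \frac{C_2(\rho)\,\gamma^{2/(1-2\rho)}}{\log(4|\grid|/\delta)}\, n^{-2\rho/(1-2\rho)} - 1 \Big).
\]
Third, chain the two bounds: if $n$ is such that
\[
36\log(4|\grid|/\delta) ~\le~ \beta^{-1}\Big( \frac{C_2(\rho)\,\gamma^{2/(1-2\rho)}}{\log(4|\grid|/\delta)}\, n^{-2\rho/(1-2\rho)} - 1 \Big),
\]
then $k_0(\delta) \le 36\log(4|\grid|/\delta) \le \kS(\deltaG,n)$, so $n$ belongs to the set defining $n_0(\delta)$ in~\eqref{eq:defn0}; taking the first such integer $n$ yields the claimed bound on $n_0(\delta)$. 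Since $-2\rho/(1-2\rho) > 0$ when $\rho < 0$, the right-hand side is increasing and unbounded in $n$, so such an $n$ always exists, which is consistent with the finiteness of $n_0(\delta)$ already established in the discussion following~\eqref{eq:defn0}.

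There is essentially no obstacle here beyond bookkeeping: the only point requiring a modicum of care is the consistent substitution of $\deltaG$ for $\delta$ in both Remark~\ref{rem:expression_ko} and Proposition~\ref{prop:lowerBoundKstar}, and checking that the hypotheses of Proposition~\ref{prop:lowerBoundKstar} (namely Conditions~\ref{cond:wideGrid}, \ref{cond:vonMises}, \ref{cond:grid2ndOrder}, the smallness condition on the tolerance level, and the ``$n$ large enough'' caveat that $n/2$ exceeds the displayed lower bound) are inherited from the standing assumptions of the corollary. One should also note that the ``$n$ large enough'' caveat in Proposition~\ref{prop:lowerBoundKstar} is automatically satisfied once the chained inequality above holds, since that inequality forces the displayed lower bound on $\kS$ to be at least $36\log(4|\grid|/\delta) \ge 1$, hence a fortiori at most $n/2$ for $n$ not too small; a brief remark to this effect completes the argument.
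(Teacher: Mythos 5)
Your proposal is correct and is essentially the paper's own argument: the paper proves this corollary in one line by combining the bound $k_0(\delta)\le 36\log(4|\grid|/\delta)$ from Remark~\ref{rem:expression_ko} with the lower bound on $\kS(\deltaG,n)$ from Proposition~\ref{prop:lowerBoundKstar} (applied with $\deltaG$ in place of $\delta$), exactly as you chain them via the definition~\eqref{eq:defn0}. Your closing remark about the ``$n$ large enough'' caveat of Proposition~\ref{prop:lowerBoundKstar} is somewhat loose, but the paper glosses over this point entirely, so it is not a gap relative to the paper's proof.
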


\subsection{Oracle and adaptive error bounds}\label{sec:OracleErrorBound}

The following error bound is a consequence of the previously
established lower bound of $\kS$
(Proposition~\ref{prop:lowerBoundKstar}), of the control of the oracle
error \emph{via} the deviation term alone
(Proposition~\ref{prop:errorOracle}) and the upper
bound~\eqref{eq:expressionV} on the deviation term in
Theorem~\ref{thm:main}.  It shows that under
Condition~\ref{cond:vonMises}, for grid choices such that
Conditions~\ref{cond:wideGrid} and~\ref{cond:grid2ndOrder} hold, the
oracle $\kS(\delta,n)$ indeed meets the minimax rate of convergence
for the Hill estimator obtained in~\cite{hall1982some} when
considering a restricted class of distributions satisfying the
(stronger) second order
assumption~\eqref{eq:secondorderClassical}. This brings further
justification for considering $\kS( \delta, n )$ as an oracle rule, in
addition to the optimality properties
(Propositions~\ref{prop:errorOracle},~\ref{prop:optimality_kS}) stated
in Section~\ref{sec:AVframework}, which are valid under weaker
assumptions. 

\begin{theorem}[Error bound for the oracle]\label{theo:errorOracle_vonMises}
  Let $(\delta,n,\grid)$ satisfy Conditions~\ref{cond:wideGrid} and
  Condition~\ref{cond:grid2ndOrder} for some $\beta>1$, and let the
  data distribution satisfy Condition~\ref{cond:vonMises}. Assume in
  addition that $\delta$
  satisfies~\eqref{eq:condition_delta_for_lowerBound_C2}, and that
  $\kS(\delta,n)\ge \log(4/\delta)$.  Then, the error for the
  oracle~$\kS(\delta,n)$ satisfies, with probability at least
  $1-\delta$,
  \begin{equation*}
    \bigl|\hat \gamma\bigl(\kS(\delta,n)\bigr) - \gamma\bigr|~\le~\frac{2 (1+\sqrt{2})\sqrt{\beta} }{\sqrt{ C_2(\rho)}}\sqrt{1+\log(4/\delta)} \;
 (n/\gamma^2)^{\frac{\rho}{1-2\rho}},
\end{equation*}
where  $ C_2(\rho)$ is as in   Proposition~\ref{prop:lowerBoundKstar}.  
\end{theorem}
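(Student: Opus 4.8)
The plan is to chain together three ingredients already at hand: the oracle error control that involves only the deviation term (Proposition~\ref{prop:errorOracle}), the explicit upper bound~\eqref{eq:expressionV} on that deviation term, and the lower bound on $\kS(\delta,n)$ from Proposition~\ref{prop:lowerBoundKstar}. First I would invoke Proposition~\ref{prop:errorOracle} with $k=\kS(\delta,n)$ (which is legitimate since $\kS(\delta,n)\le\kS(\delta,n)$), giving, on an event of probability at least $1-\delta$,
\[
  \bigl|\hat\gamma(\kS(\delta,n))-\gamma\bigr|~\le~2\gamma\,V\bigl(\kS(\delta,n),\delta\bigr).
\]
Then I would bound $V(\kS(\delta,n),\delta)$ from above using~\eqref{eq:expressionV}. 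Since we have assumed $\kS(\delta,n)\ge\log(4/\delta)$, the quadratic term $\log(4/\delta)/\kS(\delta,n)$ is dominated by $\sqrt{\log(4/\delta)/\kS(\delta,n)}$ (as the argument of the square root is at most $1$), so
\[
  V\bigl(\kS(\delta,n),\delta\bigr)~\le~\sqrt{\frac{2\log(4/\delta)}{\kS(\delta,n)}}+\frac{\log(4/\delta)}{\kS(\delta,n)}~\le~(1+\sqrt{2})\sqrt{\frac{\log(4/\delta)}{\kS(\delta,n)}}.
\]

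Next I would substitute the lower bound on $\kS(\delta,n)$. From Proposition~\ref{prop:lowerBoundKstar}, under Conditions~\ref{cond:wideGrid}, \ref{cond:vonMises}, \ref{cond:grid2ndOrder} and~\eqref{eq:condition_delta_for_lowerBound_C2}, and for $n$ large enough,
\[
  \kS(\delta,n)~\ge~\beta^{-1}\left(\frac{C_2(\rho)\gamma^{2/(1-2\rho)}}{\log(4/\delta)}\,n^{-2\rho/(1-2\rho)}-1\right).
\]
Here I would absorb the ``$-1$'' into a slightly looser constant: for $n$ large the bracketed term is at least, say, twice as large as $1$, so $\kS(\delta,n)\ge \tfrac{1}{2\beta}\,\frac{C_2(\rho)\gamma^{2/(1-2\rho)}}{\log(4/\delta)}\,n^{-2\rho/(1-2\rho)}$ — one must be a little careful that the stated conclusion uses $\sqrt{1+\log(4/\delta)}$ rather than $\sqrt{\log(4/\delta)}$, which is precisely the slack needed to absorb that ``$-1$'' cleanly; I would therefore keep the $\log(4/\delta)$ in the denominator of the bracket and note $\frac{\log(4/\delta)}{\text{bracket}}\le\frac{\beta(1+\log(4/\delta))}{C_2(\rho)\gamma^{2/(1-2\rho)}}n^{2\rho/(1-2\rho)}$ using $\rho<0$ so that $-2\rho/(1-2\rho)>0$ and the negative sign flips. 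Plugging this into the displayed bound on $V$ gives
\[
  V\bigl(\kS(\delta,n),\delta\bigr)~\le~(1+\sqrt{2})\sqrt{\frac{\beta\,(1+\log(4/\delta))}{C_2(\rho)}}\;\gamma^{-1/(1-2\rho)}\,n^{\rho/(1-2\rho)},
\]
and multiplying by $2\gamma$ yields $2\gamma V(\kS(\delta,n),\delta)\le \frac{2(1+\sqrt{2})\sqrt{\beta}}{\sqrt{C_2(\rho)}}\sqrt{1+\log(4/\delta)}\,(n/\gamma^2)^{\rho/(1-2\rho)}$, which is exactly the claimed bound.

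The only genuinely delicate point — the ``main obstacle'' — is the bookkeeping of constants in passing from the lower bound on $\kS(\delta,n)$ to the final rate: the ``$-1$'' term, the precise ``$n$ large enough'' threshold, and the appearance of $\sqrt{1+\log(4/\delta)}$ in place of $\sqrt{\log(4/\delta)}$ all have to be reconciled, and the exponent manipulation $\gamma^{2/(1-2\rho)}$ in the denominator must be correctly inverted to produce the $(n/\gamma^2)^{\rho/(1-2\rho)}$ form (noting $\tfrac{1}{2}\cdot\tfrac{-2\rho}{1-2\rho}\cdot(-1)=\tfrac{\rho}{1-2\rho}$ after the square root). Everything else — Proposition~\ref{prop:errorOracle}, the quantile bound~\eqref{eq:expressionV}, the hypothesis $\kS(\delta,n)\ge\log(4/\delta)$ to kill the quadratic correction — slots in mechanically. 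I would present it as: (i) apply Proposition~\ref{prop:errorOracle}; (ii) bound $V$ via~\eqref{eq:expressionV} and the hypothesis on $\kS$; (iii) insert Proposition~\ref{prop:lowerBoundKstar}; (iv) simplify constants.
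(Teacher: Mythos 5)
Your overall route is the same as the paper's: apply Proposition~\ref{prop:errorOracle} at $k=\kS(\delta,n)$, bound $V(\kS(\delta,n),\delta)\le(1+\sqrt2)\sqrt{\log(4/\delta)/\kS(\delta,n)}$ via~\eqref{eq:expressionV} and the hypothesis $\kS(\delta,n)\ge\log(4/\delta)$, then insert Proposition~\ref{prop:lowerBoundKstar}. The gap sits exactly at the step you yourself flag as delicate. Write $L=\log(4/\delta)$ and $T=C_2(\rho)\gamma^{2/(1-2\rho)}n^{-2\rho/(1-2\rho)}/L$, so that Proposition~\ref{prop:lowerBoundKstar} reads $\kS(\delta,n)\ge\beta^{-1}(T-1)$. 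Your claimed inequality $\log(4/\delta)/\mathrm{bracket}\le\beta(1+L)\,n^{2\rho/(1-2\rho)}/\bigl(C_2(\rho)\gamma^{2/(1-2\rho)}\bigr)=\beta(1+L)/(TL)$ amounts, whichever way the $\beta$ is meant, to $TL^2\le\beta(T-1)(1+L)$ or $TL^2\le(T-1)(1+L)$; since $T-1<T$, a necessary condition is $L^2\le\beta(1+L)$. Under~\eqref{eq:condition_delta_for_lowerBound_C2} one has $\delta\le c_2^2/4\le1$, so $L\ge\log 4$, and for any moderately small admissible $\delta$ this fails for every $n$, however large (e.g.\ $\delta=0.1$, $\beta\le 2$: $L^2\approx13.6>\beta(1+L)\approx9.4$). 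The reason is structural: the leading term of Proposition~\ref{prop:lowerBoundKstar} already carries a factor $1/\log(4/\delta)$, so forming $\log(4/\delta)/\kS(\delta,n)$ produces $\log^2(4/\delta)$, and the slack between $\sqrt{1+L}$ and $\sqrt{L}$ is only enough to absorb the additive ``$-1$'', never an extra multiplicative $L$. Your fallback device (``for $n$ large the bracket is at least twice $1$'') also smuggles in an unquantified sample-size condition and a $\sqrt2$ that the target display does not contain.

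The paper handles the ``$-1$'' differently and exactly: it writes $\sqrt{L/\kS(\delta,n)}=\sqrt\beta\,\sqrt{L/(\beta\kS(\delta,n))}$ and uses the monotonicity $a/b\le(a+1)/(b+1)$ for $0\le a\le b$, with $a=L$ and $b=\beta\kS(\delta,n)$ (legitimate since $L\le\kS(\delta,n)\le\beta\kS(\delta,n)$ by hypothesis and $\beta>1$), so the denominator becomes $\beta\kS(\delta,n)+1$, which is precisely the quantity that Proposition~\ref{prop:lowerBoundKstar} bounds from below; no largeness-of-$n$ absorption is needed, and this is the sole origin of the $\sqrt{1+\log(4/\delta)}$. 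Be aware, finally, that because the proposition's leading term is divided by $\log(4/\delta)$, substituting it literally leaves an extra $\sqrt{\log(4/\delta)}$ (i.e.\ $\sqrt{(1+L)L}$ in place of $\sqrt{1+L}$); that residual factor is exactly what your bookkeeping inequality was attempting, unsuccessfully, to absorb, so steps (iii)--(iv) of your plan do not close as written.
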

The proof is  deferred to  Appendix~\ref{sec:prooferrorOracle_vonMises}. 
The proof reveals that, under the assumptions of the statement,
the deviation term $V(\kS(\delta,n), \delta)$ satisfies
\begin{equation*}
  V(\kS(\delta,n), \delta)~\le~  \frac{(1+\sqrt{2}) ~\sqrt{\beta}}{ C_2(\rho)^{-1/2}}\;   \sqrt{1+\log(4/\delta)}\; \; \gamma^{\frac{- 1}{1-2\,\rho}} \; n ^{\frac{ \rho }{1- 2\, \rho}}. 
\end{equation*}

Combining the above bound with Theorem~\ref{theo:AVguarantee} yields
immediately a tail bound on the error of the adaptive rule $\kAV$
stated in Corollary~\ref{cor:bound_error_kav_secondOrder} below. Note
that from Proposition~\ref{prop:lowerBoundKstar}, the condition that
$\kS\ge \log(4|\grid|/\delta)$ in the statement below is satisfied for
sample sizes large enough such that
$ C_2(\rho) ~ \gamma^{1/(1-2\rho)}~~ \frac{n ^{ - 2 \rho /(1 - 2
    \rho)}}{\log(4|\grid|/\delta)} - 1 \ge \log(4|\grid|/\delta)$.
 
\begin{corollary}[Error bound for $\kAV$]\label{cor:bound_error_kav_secondOrder}
  Let Condition~\ref{cond:vonMises} be satisfied, and let
  $\delta,n,\grid$ be such that $(\deltaG, n, \grid)$ satisfy
  Conditions~\ref{cond:wideGrid} and~\ref{cond:grid2ndOrder}.
  Assume in addition that $\deltaG$
  satisfies~\eqref{eq:condition_delta_for_lowerBound_C2}, and that 
  $\kS(\deltaG, n)\ge \log(4|\grid|/\delta)$.

  Then the
  error for the adaptive validation Hill estimator
  $\hat\gamma(\kAV)$ 
  satisfies, with probability at least $1-\delta$,
  \begin{equation*}
    |\hat \gamma(\kAV) - \gamma|~\le~ \frac{6\gamma V^*}{1 - 6 V^*}, 
\end{equation*}
where $V^* = V(\kS(\delta/|\grid|), \delta/|\grid|)$ satisfies
$$
V^*~\le~\frac{(1+\sqrt{2}) ~\sqrt{\beta}}{ C_2(\rho)^{-1/2}}\;   \sqrt{1+\log(4|\grid|/\delta)}\; \; \gamma^{\frac{- 1}{1-2\,\rho}} \; n ^{\frac{ \rho }{1- 2\, \rho}}. 
$$
\ 
\end{corollary}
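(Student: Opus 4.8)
The plan is to derive Corollary~\ref{cor:bound_error_kav_secondOrder} as an essentially mechanical combination of Theorem~\ref{theo:AVguarantee} and Theorem~\ref{theo:errorOracle_vonMises}, applied with $\delta$ replaced by $\deltaG = \delta/|\grid|$ throughout. First I would check that the hypotheses of Theorem~\ref{theo:AVguarantee} are met: the triplet $(\deltaG,n,\grid)$ satisfies Condition~\ref{cond:wideGrid} by assumption, and $n\ge n_0(\delta)$ holds because Corollary~\ref{cor:minsamplesize} (or directly the lower bound on $\kS$ in Proposition~\ref{prop:lowerBoundKstar} together with Remark~\ref{rem:expression_ko}) guarantees $k_0(\delta)\le \kS(\deltaG,n)$ under Condition~\ref{cond:vonMises} once the displayed sample-size inequality is in force; this is exactly the content of the remark preceding the statement. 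Note also that Condition~\ref{cond:genericErrorDecomp} holds for the Hill estimator by Theorem~\ref{thm:main}, so Theorem~\ref{theo:AVguarantee} applies verbatim with deviation function $V$ the $1-\deltaG/2$ quantile of $|Z_k-1|$.

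Next I would invoke Theorem~\ref{theo:errorOracle_vonMises} with $\delta$ replaced by $\deltaG$. The assumption $\kS(\deltaG,n)\ge \log(4|\grid|/\delta) = \log(4/\deltaG)$ is precisely the hypothesis of that theorem in the shifted notation, and $\deltaG$ satisfying~\eqref{eq:condition_delta_for_lowerBound_C2} is assumed. The ``proof reveals'' display immediately below Theorem~\ref{theo:errorOracle_vonMises} then gives
$$
V(\kS(\deltaG,n),\deltaG)~\le~\frac{(1+\sqrt{2})\sqrt{\beta}}{C_2(\rho)^{-1/2}}\sqrt{1+\log(4/\deltaG)}\;\gamma^{\frac{-1}{1-2\rho}}\;n^{\frac{\rho}{1-2\rho}},
$$
which upon substituting $\deltaG = \delta/|\grid|$ is exactly the stated bound on $V^*$.

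Finally I would feed this into the simplified second bound of Theorem~\ref{theo:AVguarantee}: provided $V^* = V(\kS(\deltaG,n),\deltaG) < 1/6$, we get $|\hat\gamma(\kAV) - \gamma| \le 6\gamma V^*/(1-6V^*)$ on the favourable event $\mathcal{E}$ of probability at least $1-\delta$. Strictly speaking one should remark that $V^* < 1/6$ is automatic for $n$ large enough (since $V^* \to 0$ as $n\to\infty$ at the polynomial rate $n^{\rho/(1-2\rho)}$, while all other factors grow at most logarithmically), so the hypothesis is subsumed by the ``large enough $n$'' caveat already present; if one wants to be scrupulous this condition could be added explicitly, but it is harmless. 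The probability statement $1-\delta$ (rather than $1-\deltaG$) is the one inherited from Theorem~\ref{theo:AVguarantee}, whose event $\mathcal E$ is built by a union bound over the grid at tolerance $\deltaG$ each, hence $\delta$ total.

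There is no real obstacle here — the corollary is by design an immediate consequence of two already-proven results. The only mild subtlety, worth a sentence in the write-up, is bookkeeping the $\delta$ versus $\deltaG = \delta/|\grid|$ substitution consistently: Theorem~\ref{theo:AVguarantee} is stated in terms of a generic confidence level $\delta$ with internal tolerance $\deltaG$, so when we want a final guarantee at level $1-\delta$ we must verify the oracle-error bound of Theorem~\ref{theo:errorOracle_vonMises} at the shifted level $\deltaG$, which is legitimate since $\deltaG\in(0,1)$ and the hypotheses~\eqref{eq:condition_delta_for_lowerBound_C2} and $\kS(\deltaG,n)\ge\log(4/\deltaG)$ have been imposed precisely at that level.
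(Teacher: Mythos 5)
Your proposal is correct and follows essentially the same route as the paper, which obtains the corollary immediately by combining the simplified bound of Theorem~\ref{theo:AVguarantee} (at level $\deltaG$) with the bound on $V(\kS(\deltaG,n),\deltaG)$ extracted from the proof of Theorem~\ref{theo:errorOracle_vonMises}. Your remark that the simplified bound requires $V^*<1/6$, which the corollary leaves implicit but holds for $n$ large enough, is an accurate and harmless observation rather than a divergence.
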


\begin{remark}[Minimax optimality, grids of logarithmic size, and chaining.]
  \label{rem:chaining}
  For \(|\grid| \leq D \allowbreak\log n\) with some \(D > 0\), \(\kAV\) incurs an adaptivity cost, i.e., an additional multiplicative factor compared to the oracle error \(\kS(\delta, n)\) (see Theorem~\ref{theo:errorOracle_vonMises}), of order \(\sqrt{\log \log(n)}\). This is the case in particular for geometric grids mentioned earlier in Remark~\ref{rem:choiceOfGrid}, which also satisfy Conditions~\ref{cond:wideGrid} and~\ref{cond:grid2ndOrder}. The full grid \(\grid = \{1, \ldots, n\}\) has a multiplicative factor of \(\log(n)\). Uniform grids \(k_m = \lfloor mn/M \rfloor\), \(1 \leq m \leq M\) with \(M = D \log n\), also fit within this framework under the condition that \(n/(D \log n) < \kS(\deltaG, n)\). However, the latter condition is not guaranteed for small values of \(|\rho|\) in view of the lower bound on \(\kS\) in Proposition~\ref{prop:lowerBoundKstar}.

From a theoretical viewpoint, we recommend the use of a geometric grid. However, our experiments suggest that uniformly spaced grids have comparable performance. An informal explanation for this good behavior is as follows: inspection of the proof of Theorem~\ref{theo:errorOracle_vonMises} reveals that the only necessary condition on the grid is that \(k_{m^*+1}/k_{m^*}\) is not too large, where \(m^*\) is the grid index such that \(\kS = k_{m^*}\). Therefore, the structure of the grid matters only in the neighborhood of \(\kS\). For large enough \(n\), \(\kS\) is sufficiently large so that even with a uniformly spaced grid, \(k_{m^*+1}/k_{m^*}\) is small enough to result in good performance.

In comparison, the adaptive methods proposed in~\cite{boucheron2015tail} and~\cite{carpentier2015adaptive} pay the price of adaptivity with an additional multiplicative factor of order \((\log \log(n))^{|\rho|/(1 + 2|\rho|)}\). Both cited references (and in particular~\cite{boucheron2015tail} under Assumption~\ref{cond:vonMises}) show minimax optimality of this multiplicative factor among adaptive estimators. One can conclude that the adaptive validation method seems to benefit from a nearly free lunch: if the second-order conditions are not met, it still enjoys the guarantees stated in Proposition~\ref{prop:errorOracle} and Proposition~\ref{prop:optimality_kS}. Moreover, if the second-order condition~\ref{cond:grid2ndOrder} is satisfied, it attains a nearly minimax adaptive rate, matching the lower bound established in~\cite{carpentier2015adaptive} and~\cite{boucheron2015tail} up to an additional multiplicative factor
\[
\frac{\sqrt{\log \log n}}{(\log \log(n))^{|\rho|/(1 + 2|\rho|)}} = (\log \log n)^{\frac{1}{2(1 + 2|\rho|)}},
\]
which is nearly constant, especially for large values of \(|\rho|\).

Finally, as mentioned earlier, chaining techniques similar to those used, for example, in~\cite{boucheron2015tail} yield a finer uniform control of the fluctuations (\(|\hat{\gamma}(k) - \gamma|\), \(k \leq \kS(\delta, n)\)) at the price of larger constants in the deviation term \(V(k, \delta)\). These techniques allow replacing the quantities \(\delta/|\grid|\) by \(\delta/\log(n)\) to obtain a \(\sqrt{\log \log n}\) multiplicative factor even with a grid of size \(n\). However, preliminary experiments showed no improvement over the basic union bound approach taken in Lemma~\ref{lem:UnifDev-eventE} on a grid of logarithmic size, precisely because of the larger constants involved, leading to late stopping.

In fact, the 'geometric grid' approach may be understood as a concrete implementation of the theoretical chaining technique, leading to similar results and simplifying the computational costs. This approach also allows keeping explicit deviation controls in terms of the quantiles of the centered Gamma variable quantity \(|Z_k - 1|\) (see~\eqref{eq:identityExposum}), which can be easily calibrated.

\end{remark}

\section{Numerical experiments}\label{sec:sim}
\makeatletter
\newcommand*{\deq}{\mathrel{\rlap{%
      \raisebox{0.3ex}{$\m@th\cdot$}}%
    \raisebox{-0.3ex}{$\m@th\cdot$}}=}
\makeatother
\noindent The code for our experiments is available at \url{https://github.com/mahsa-taheri/EAV}.

\noindent
\textbf{Compared methods and general objectives}
Our experiments aim to compare the performance of the proposed Extreme Adaptive Validation  method with two existing state-of-the-art approaches for adaptive Hill estimation under comparable assumptions, specifically those of Boucheron et al. (2015) and Drees et al. (1998). For a discussion of their respective underlying assumptions, refer to Section~\ref{sec:reviewSecondOrder}. Focusing on quantitative performance assessment, we restrict our analysis to simulated data experiments, which provide a ground truth for evaluation.
Our results demonstrate that the
proposed EAV method effectively outperforms the approaches of
\cite{boucheron2015tail} and \cite{drees1998selecting} when dealing
with very ill-behaved distributions, while showing comparable performance
for distributions with survival functions converging rapidly to their limiting power law. We confirm
the finding of \cite{boucheron2015tail} that
\cite{drees1998selecting}'s asymptotic approach outperforms adaptive rules grounded on  non-asymptotic guarantees (that is, EAV and \cite{boucheron2015tail}'s method) for particularly well-behaved distributions with rapidly converging tail behavior. 

The adaptive rules of \cite{boucheron2015tail} and \cite{drees1998selecting} are implemented with the exact same parameters and calibrated constants as the ones  proposed by~\citet[Page~30;
Equations~(5.1)
and~(5.2)]{boucheron2015tail}. 
For $\kbch$ and $\kdk$, we set the lower limit of the admissible range for $k$ to $l_n = 30$, following the recommendations and notation of \cite{boucheron2015tail}.
Regarding \cite{boucheron2015tail}'s approach, 
It is worth noting that these
constants are slightly different from those for which they provide
theoretical guarantees.

Regarding  the EAV method,
we use logarithmic grids of the form 
$\grid = \{k_m = \lfloor\beta^m\rfloor, 1 \le  m \le \lfloor \log n / \log(\beta) \rfloor\}$, so that  $|\grid| = \log n / \log(\beta)$ with $\beta = 1.1$ and $1\le m\le |\grid|$. Additional results for other grid choices, including the uniformly spaced  grid, are provided in Appendix~\ref{sec:adsim}.  
We implement  the stopping condition proposed in~\eqref{eq:khat-gamma}, that is, 
\begin{equation*}
\begin{aligned}
\kAV ~=~ \max \Bigl\{
&k \in \grid, \, k \geq k_0(\delta), \text{ such that } \forall j \in \grid \text{ with } j \leq k, \\
&\left| \HillEst(k) - \HillEst(j) \right| ~ \leq ~ \frac{\HillEst(k)}{1 - 2 \variance(k, \deltaG)} \bigl( \variance(j, \deltaG) + 3 \variance(k, \deltaG) \bigr)
\Bigr\},
\end{aligned}
\end{equation*}
where $\variance(k,\deltaG)$ is a quantile of order $1-\deltaG/2$ of
$|\Zk-1|$, with $\Zk\sim
\mathcal{G}(\alpha=k,\beta=k)$. 
Here, with a geometric grid  as described above  and
$\beta=1.1$, we have $|\grid|\approx \log n / \log \beta $. We fix
$\delta=0.9$ and we compute numerically the
$(~1 - (\delta\log \beta)/(2 \log n) ~)$-quantile of $|Z_k-1|$ by
Monte-Carlo sampling with $N=2000$ independent draws of $Z_k$.  
The confidence level (\(\delta\)) is the sole free parameter in the proposed EVA method. We recommend selecting a relatively high value (e.g., \(0.9\)) to mitigate the pessimistic upper bound on the variance term, obtained via the union bound technique over multiple candidate values of \(k\). Smaller values of \(\delta\) often result in delayed stopping and poorer performance. 

\noindent 
\textbf{Datasets}
We assess the performance of the three methods
using $\nbrsamp=10\,000$ samples
generated from the distributions listed below to estimate the unknown
parameter~\gamaOrc\ with the Hill estimator. Smaller sample sizes ($n=1\, 000$) are considered in Table~\ref{tab:SimMSEsmall}, with almost similar conclusions. 
 We consider the following data  distributions:
 \begin{enumerate}
 \item 
An ill-behaved distribution which is regularly varying but does not satisfy~the standardized Karamata representation~\eqref{eq:karamataStrong} (see Appendix~\ref{sec:counterExample} for a proof.) To our best knowledge, this counter-example is new. This distribution, parametrized by the tail index $\gamma>0$ and an additional scaling  parameter $s\in (0,1]$ is designed so that its density vanishes infinitely often on intervals of the kind $[(n^{1/s}+(n+1)^{1/s})/2, (n+1)^{1/s}]$. Namely  we consider the distribution $F_{s,\alpha}$ of a random variable $X$ defined by 
$$X = \lfloor Z^s \rfloor^{1/s}  + \frac{1}{2}(Z-\lfloor Z^s \rfloor^{1/s}),$$
where $Z$ follows a Pareto distribution with tail index $\gamma=1/\alpha$. 
  In our simulation we consider $\alpha=2, s\in\{2/3, 1/2\}$. 
 \item Symmetric $\alpha$-stable distributions with $\alpha\in \{1.5,1.7,1.99\}$ ($\alpha=1/\gamma$), denoted by $S_{1.5}$, $S_{1.7}$, and $S_{1.99}$. 
These distributions have  characteristic function \(\mathbb{E}[e^{itX}] = e^{-|t|^{\alpha}}\). It is known \citep{csorgo1985kernel} that for \(\alpha > 1\), these distributions satisfy \eqref{eq:secondorderClassical}, and thus also Condition~\ref{cond:vonMises}. Indeed, \citep[][Chapter 2, Sections 3, 4]{ibragimov1975independent} their density can be expressed as the restriction to \(\mathbb{R}\) of a complex function that is holomorphic everywhere in the complex plane. This allows for asymptotic expansions of the form:
$ f(x) = \sum_{n=0}^N a_n x^{-n\alpha-1} + o(x^{-N\alpha-1})$,  as $  x \to \infty$. However convergence is slow and the Hill estimatir is generally ill-behaved as $\alpha\to 2$

  \item A distribution with survival function of the form $\bar{F}(x)=1-F(x)=cx^{-\alpha}(\log x)^{\beta}$, where $\beta\ne 0$, $\alpha,c>0$,   with $c=(e \alpha/\beta)^{\beta}$ defined on the domain $[x_0,\infty)$ where $x_0= \exp(\beta/\alpha)$. This distribution is referred to as the `Perturb' distribution in ~\cite{Resnick2007}, p. 87. In our simulation we fix   $\alpha=2$ and $\beta=1$ and we denote this distribution as $L_{2,1}$. This distribution has standardized Karamata representation~\eqref{eq:karamataStrong} but it does not satisfy Condition~\ref{cond:vonMises}, thus it does not satisfy~\eqref{eq:secondorderClassical} either.   
  We refer to Section~\ref{sec:xt} for proofs of our claims about the properties of $L_{\gamma,\beta}$.

 \item A Fréchet distribution with $\FrechetPar=1$ and a shift of 10, denoted by $F_{1,10}$, which satisfies~\eqref{eq:secondorderClassical} with $\gamma=1,\rho=-1$, as easily seen by an expansion of the survival function at infinity. 
 \item A Fréchet distributions with $\FrechetPar=1$, denoted by  $F_{1}$. Again $F_1$ satisfies~\eqref{eq:secondorderClassical} with $\gamma=1$, $\rho=-1$, although the convergence of the tail towards a power law is faster than for the shifted version. 

 \item A Pareto Change Point distribution (PCP) defined by 
 $$
 \overline{F}(x) = x^{-1/\gamma'} \un\{1\le x\le \tau\} + \tau^{-1/\gamma'} (x/\tau)^{-1/\gamma} \un\{ x> \tau\}, 
 $$ similar to the one considered in \cite{boucheron2015tail}. We consider three cases, $(\gamma',\gamma)=(1.0,1.1)$, $(\gamma',\gamma)=(1.0,1.25)$, and 
 $(\gamma',\gamma)=(1.0,1.5)$ with values for $\tau$ chosen such that the tail probability equal respectively $1/25$, $1/25$, and $1/15$. 
 Since the PCP tail is exactly Pareto above a finite threshold, it satisfies the strongest second order assumption \eqref{eq:secondorderClassical} considered here. 

 \end{enumerate} 
 These distributions span a broad spectrum of tail
 behaviours. Distributions in Family 1 (Counter-example) is a
 particularly ill-behaved example where no second order condition is
 satisfied, not even the standardized Karamata
 representation~\eqref{eq:karamataStrong}.  
Stable distributions in  Family 2. are a typical example where good asymptotic properties do not prevent poor finite-sample behavior. When \(\alpha \to 2\), the Hill estimator of \(\gamma = 1/\alpha\) is notoriously ill-behaved because convergence in \eqref{eq:secondorderClassical} is slow \citep{resnick2007heavy,nolan2020univariate}.  %
Distribution 3 (Perturb)
 satisfies~\eqref{eq:karamataStrong} but with von Mises function
 $b(t)$ that decreases too slowly for Condition~\ref{cond:vonMises} to
 holds. It is well known to produce `Hill horror plots'. Another
 example of `horror plot' distributions is provided by the symmetric
 stable distributions (Family 2), although they satisfy the strongest
 second order assumption considered here,
 namely~\eqref{eq:secondorderClassical}.  Intermediate cases are
 provided by shifted distributions (Family 4), which satisfy
 also~\eqref{eq:secondorderClassical} with a faster convergence to the
 limiting power law behaviour, although they are known to produce Hill
 horror plots as well.  The Fréchet distribution (Family 6) is a
 typical example of a well-behaved distribution where Hill estimation
 is easy and $k$ is easily chosen either by a visual inspection of
 Hill plots or by automatic adaptive procedures. Finally, Family 7 may be viewed as a counterexample to the relevance of second-order assumptions for practical (i.e., finite sample) purposes. Although it satisfies strong second-order assumptions, it results in poor behavior of the Hill estimator when the cut-off point is unknown. In \cite{boucheron2015tail}, it is presented as a typical example where asymptotic strategies, such as those in \cite{drees1998selecting}, fail in comparison to a non-asymptotic approach.

\noindent
\textbf{Performance metric}
As a performance metric for a fixed $k$ we consider the standardized mean squared error (MSE)  of the Hill-estimator
$ 
	\textrm{MSE}(k) = 
    \EE\Bigl[\Bigl(\frac{\HillEst(\indexk)}{\FrechetPar}-1\Bigr)^2\Bigr]\,,
$ 
where 
$\HillEst(k)$ is the Hill estimator for $k$ extreme order statistics. 
As an  illustration, Figure~\ref{fig:MontRMSE} displays  the standardized root mean squared error RMSE$(k) = \sqrt{\textrm{MSE}(k)}$ for  datasets of sample size  $n= 10\,000$ generated from  one representant of the 6 distribution families described above, as a function of $k$.   The expectation in the definition of MSE is approximated by Monte-Carlo sampling with $N=500$ replications.  \\~

\textbf{Results}
Table~\ref{tab:SimMSE} reports the estimated MSE and the standard error of this estimator (both multiplied by $100$) associated with the different adaptive rules considered here, namely EAV, \cite{boucheron2015tail}'s method and \cite{drees1998selecting}'s, denoted respectively by $\kAV$, $\kbch$ and $\kdk$, over $N=500$ experiments. Namely the quantity reported for $\hat k \in \{\kAV, \kbch,\kdk\}$ is 
$
 \widehat{\textrm{MSE}}(\hat k)~=~ N^{-1}\sum_{i=1}^N  ( \hat{\gamma}_i(\hat k_i)/\gamma - 1 )^2
$, 
where $\HillEst_i(\hat k_i)$ is the Hill estimator obtained at the $i^{th}$ replication, using the number of extreme order statistics $\hat k_i$ which is the output  of the adaptive rule $\hat k$ applied to the $i^{th}$ generated dataset.
Additionally we provide the standard error of this estimator of the expected MSE, namely
$
\mathrm{stderr}(\hat k)~ = ~N^{-1}  \big(\sum_{i=1}^N \big[ ( \hat{\gamma}_i(\hat k_i)/\gamma - 1 )^2 - \widehat{\MSE}(\hat k) \big]^2 \big)^{1/2}$.
The first five rows of Table~\ref{tab:SimMSE} reveal that for very ill-behaved distributions, specifically those from Family~1 (Counter-example) and Family 2 (Stable), $\kAV$ outperforms both $\kbch$ and $\kdk$ in terms of standardized MSE. The 'Perturb' distribution (Family 3) stands apart, as it is relatively ill-behaved for Hill estimation; however, $\kdk$ and $\kbch$ demonstrate better performance than $\kAV$ in this case. For well-behaved  distributions  satisfying the classical second-order assumption~\ref{eq:secondorderClassical}, the asymptotic rule $\kdk$ outperforms both $\kAV$ and $\kbch$. The PCP case (Family 6.) is a counter-example where the asymptotic $\kdk$ method struggles compared with $\kAV,\kbch$, despite its tail regularity. 
 To better understand the differences in behavior between $\kAV$, $\kbch$, and $\kdk$, we present in Table~\ref{tab:Simvonmises30Q} (in the Appendix), for $\hat{k} \in \{\kAV, \kbch, \kdk\}$, the range of values for $k$ produced by these three rules, along with the average values,  denoted as $\bar{k}_{EAV}$, $\bar{k}_{BT}$, and $\bar{k}_{DK}$, respectively.  The results indicate that $\kAV$ consistently yields larger values of $k$. In contrast, the range of $k$ values output by $\kbch$ and $\kdk$ is wider, often including indices near the lower limit $l_n = 30$.

  \begin{figure}
     \centering
     \includegraphics[width=0.8\linewidth]{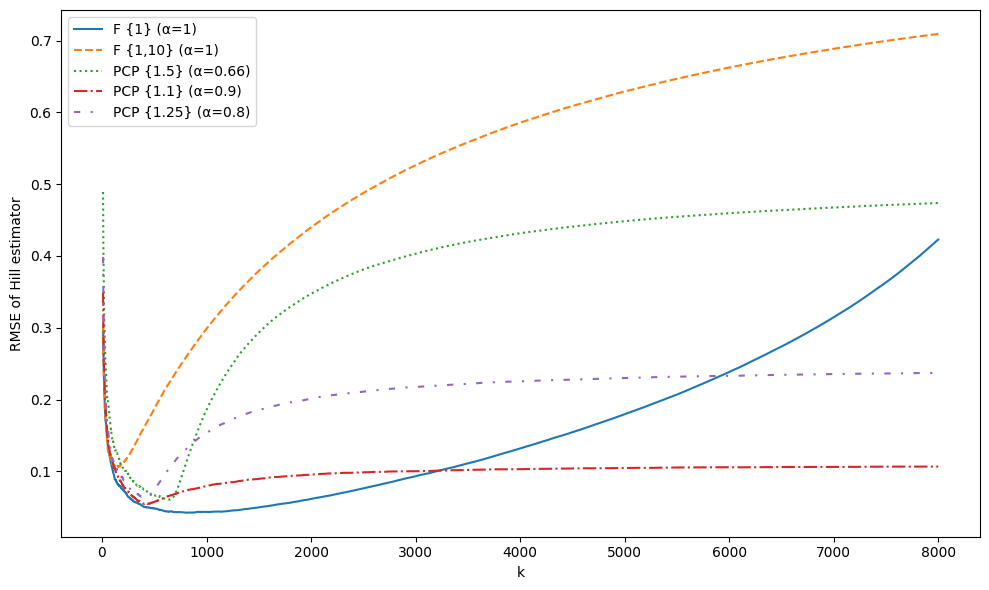}
     \includegraphics[width=0.8\linewidth]{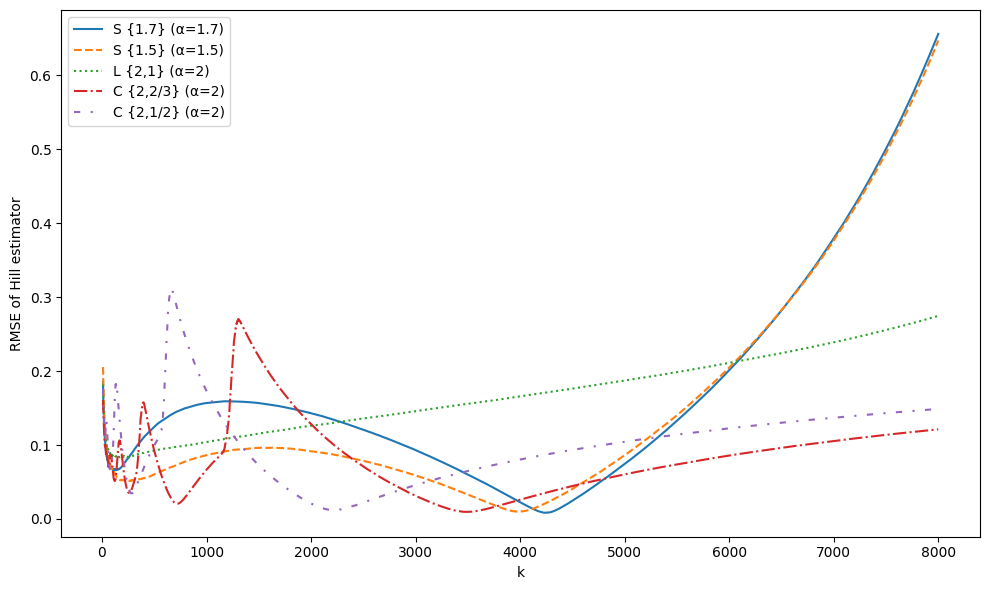}
     \caption{Monte-Carlo estimates of the standardised RMSE of Hill estimators as a function of the number of order statistics $k$ for samples of size $10\,000$ from the sampling distributions.
}
     \label{fig:MontRMSE}
 \end{figure}

\begin{table}[H]
	\centering
	\scalebox{0.8}{
		\begin{tabular}{cc|ccc}
			\hline
&  &&$n=10\,000$ & \\
			d.f. & $\gammaS$ &  $\widehat{\textrm{MSE}}(\kAV);(\operatorname{stderr}) \times 100$  &$\widehat{\textrm{MSE}}(\kbch);
            (\operatorname{stderr}) \times 100$&  $\widehat{\textrm{MSE}}(\hat{k}_{DK});
            (\operatorname{stderr}) \times 100$\\
   \hline
            $C_{2,2/3}$& 0.5     & $\textbf{01.06};(0.03)$   & $03.05;(0.16)$  & $13.80;(0.88)$ \\
			\hline
            $C_{2,1/2}$& 0.5     & $\textbf{04.55};(0.11)$   & $05.14;(0.24)$  & $29.01;(5.99)$ \\
			\hline
			$S_{1.7}$ & $1/1.7$  & $\textbf{01.28};(0.08)$  & $04.01;(0.10)$ & $04.40;(0.06)$ \\
			\hline
			$S_{1.5}$ & $1/1.5$ &$\textbf{00.35};(0.01)$  & $01.33; (0.05)$  & $01.23;(0.03)$\\
			\hline
            $S_{1.99}$ & $1/1.99$ &$\textbf{26.20};(0.03)$  & $47.70; (0.24)$  & $45.88;(0.48)$\\
			\hline
			$L_{2,1}$  & $0.5$  & $21.40;(0.14)$  & $04.47;(0.14)$ & $\textbf{04.40};(0.08)$\\
			\hline
            \hline 
			$F_{1,10}$ & $1.0$  & $12.08;(0.13)$   & $03.49;(0.09)$  & $\textbf{01.78};(0.12)$ \\
			\hline
			$F_1$& $1.0$& $06.36;(0.06)$  &  $00.69;(0.05)$ &  $\textbf{00.24};(0.03)$  \\
            \hline 
            $PCP_{1.1}$ & $1.1$ & $\textbf{00.62};(0.01)$  & $00.70;(0.11)$ & $00.73;(0.04)$ \\
			\hline
            $PCP_{1.5}$ & $1.5$ & $05.22;(0.05)$  & $\textbf{00.58};(0.07)$ & $05.99;(0.36)$ \\
			\hline
            $PCP_{1.25}$ & $1.25$ & $03.39;(0.02)$  & $\textbf{01.13};(0.06)$ & $03.34;(0.02)$ \\
            \hline
	\end{tabular}}
	\caption{
    $\widehat{\textrm{MSE}}(\hat k)$ (stderr $(\hat k)$), both multiplied by $100$, for $\hat k \in\{\kAV,\kbch,\kdk\}$, over $N=500$ experiments. 
    }

	\label{tab:SimMSE}
	
\end{table}

\begin{table}[H]
	\centering
	\scalebox{0.8}{
		\begin{tabular}{cc|ccc}
			\hline
&  &&$n=1\,000$ & \\
			d.f. & $\gammaS$ &  $\widehat{\textrm{MSE}}(\kAV);(\operatorname{stderr}) \times 100$  &$\widehat{\textrm{MSE}}(\kbch);
            (\operatorname{stderr}) \times 100$&  $\widehat{\textrm{MSE}}(\hat{k}_{DK});
            (\operatorname{stderr}) \times 100$\\
            \hline
            $C_{2,2/3}$& 0.5     & $\textbf{01.63};(0.04)$&$04.50;(0.22)$ & $07.15;(0.18)$\\
             \hline
            $C_{2,1/2}$& 0.5     &$\textbf{02.77};(0.07)$& 07.60;(0.26)& $09.15;(0.19)$ \\
			\hline
			$S_{1.7}$ & $1/1.7$  &$52.48;(0.71)$&\textbf{04.62};(0.15)& $06.73;(0.21)$ \\
            \hline
			$S_{1.5}$ & $1/1.5$  &$75.63;(0.92)$&\textbf{01.17};(0.07)& $02.35;(0.11)$ \\
			\hline
			$S_{1.99}$ & $1/1.99$ & $\textbf{24.95};(0.42)$&$37.70;(0.32)$& $106.94;(37.95)$\\
			\hline
			$L_{2,1}$  & $0.5$ & $56.39;(0.30)$&$09.28;(0.27)$& $\textbf{07.23};(0.22)$\\
			\hline
            \hline 
			$F_{1,10}$ & $1.0$  &$31.61;(0.29)$&$14.67;(0.28)$& $\textbf{06.32};(0.25)$\\
            \hline 
            $F_{1}$ & $1.0$  &$64.74;(0.35)$&$\textbf{01.92};(0.08)$& $11.44;(0.07)$\\
            \hline 
            $PCP_{1.1}$ & $1.1$ &$\textbf{00.80};(0.02)$& $00.95;(0.04)$& $00.82;(0.03)$ \\
			\hline
   $PCP_{1.5}$ & $1.5$ &$10.12;(0.07)$& $\textbf{07.71};(0.16)$& $09.57;(0.09)$ \\
			\hline
            $PCP_{1.25}$ & $1.25$ &$03.73;(0.04)$& $\textbf{03.46};(0.06)$& $03.55;(0.06)$ \\
			\hline
	\end{tabular}}
	\caption{
    $\widehat{\textrm{MSE}}(\hat k)$ (stderr $(\hat k)$), both multiplied by $100$, for $\hat k \in\{\kAV,\kbch,\kdk\}$, over $N=500$ experiments. 
    }

	\label{tab:SimMSEsmall}
	
\end{table}

\section{Discussion}

In this work, we propose a practical and computationally efficient method for selecting the tuning parameter $k$ of the Hill estimator in EVT. Rather than exploring all possible values of $k$, we limit our search to a smaller logarithmic grid, a common practice in real-world applications. This reduces computational complexity significantly.
Theoretically, our AV method provides statistical guarantees with minimal assumptions—eliminating the need for second-order conditions or von Mises-type assumptions. To the best of our knowledge, this represents the first such result in EVT. Moreover, when von Mises conditions are met, our approach achieves nearly minimax optimality.
Our simulations reveal that our approach outperforms~\cite{boucheron2015tail} and~\cite{drees1998selecting} when dealing with ill-behaved distributions.
Our framework is also flexible and could be extended beyond the Hill estimator to other EVT problems where error bounds take the form $\text{error} \leq V+B$. Unlike existing adaptive Hill-based methods, the AV method relies on weaker assumptions, paving the way for future research. We plan to explore its applicability to other extreme value estimators, leveraging recent advances in the field.

\section*{Acknowledgement}
J.L. and M.T. are grateful for partial funding by the Deutsche Forschungsgemeinschaft (DFG, German Research Foundation) under project numbers 502906238,  543964668, and 520388526 (TRR391). A.S. acknowledges the funding of the French ANR grant EXSTA, ANR-23-CE40-0009-01.  The authors thank Stéphane Boucheron, John Einmahl, and Holger Drees for constructive discussions at different stages of the project.

\section*{Supplementary material}
\label{SM}

\appendix

\section{Appendix~1 (Proofs)}
In this section, we provide proofs for some of the results stated in the main text.
\subsection{Proof of Lemma~\ref{lem:boundUnifOrderStat}}\label{sec:prooflem:boundUnifOrderStat}
We reproduce below the proof given in \cite{clemencon2025weak} for the sake of completeness. 

The order statistics of a
uniform random sample are sub-Gamma.  More precisely, it is shown in
\citet[Lemma 3.1.1.]{reiss2012approximate} that for every $k\le n$,
\begin{equation*}
  \label{eq:deviationOrderUni-reiss}
  \PP[\frac{\sqrt{n}}{\sigma }  \Bigl(1- U_{(k)} - \frac{k}{n+1}\Bigr)~\ge~t ]~\le~
  \exp\Big( -\,\frac{t^2}{3\bigl(1 + t/( \sigma\sqrt{n} ) \bigr)}\Bigr) ,
\end{equation*}
with $\sigma^2 = (1-k/(n+1))({k}/(n+1)) \le k/n$.  The above bound derives immediately from the cited reference and the fact that $1-U_{(k)} \eqd U_{(n+1-k)}$. 
Re-arranging we obtain  
\begin{align*}
  \PP[ 1 - U_{(k)} - k/n>t ] &~\le~ \PP[ 1 - U_{(k)} - k/(n+1)~>~t ] \\
  &~\le~\exp \Big( -\,\frac{nt^2/\sigma^2}{3(1 + t/ \sigma^2  )}\Big)\,.
\end{align*}
Inverting the above inequality  yields that with probability greater than $1-\delta$,
\begin{align*}
  1 - U_{(k)}&~\le~\frac{k}{n} + \sqrt{\frac{3 \sigma^2 \log(1/\delta)}{n} }+
               \frac{3\log(1/\delta)}{n} \\
          &~=~\frac{k}{n}\Big(1 + \sqrt{\frac{3 \log(1/\delta)}{k} }+
            \frac{3\log(1/\delta)}{k}  \Big).      \\
\end{align*}
\qed

\subsection{Proof of Theorem~\ref{thm:main}}\label{sec:proof_main_thm_hill}
In the setting of Lemma~\ref{lem:decomposHill}, combining
Lemmas~\ref{lem:decomposHill},
\ref{lem:boundUnifOrderStat},~\ref{lem:tailBound_abs_Zk_1} we obtain
the following upper bound on the error: with probability at least
$1-\delta$, from a union bound over two adverse events of probability $\delta/2$ each,
\begin{equation}
    \label{eq:upperBoundErrorIntermediate}
    \begin{aligned}
      |\HillEst(k) - \gamma|~\le~
      & \gamma   V_1(k,\delta/2) ~+~
       2 \bar a \biggl(
        \frac{n}{
        (k+1)\bigl(1+ R(k+1, \delta/2)\bigr)} \biggr) + \dots\\
      & \bigl(1+ \tilde V(k,\delta/2)\bigr)~
        \bar b\biggl( \frac{n}{(k+1)\bigl(1+ R(k+1, \delta/2)\bigr)}\biggr)\,,
      \end{aligned}
  \end{equation}

 %

  where  $\bar a,\bar b$ are as in Lemma~\ref{lem:decomposHill}, $R(k,\delta)$ is defined in Lemma~\ref{lem:boundUnifOrderStat}, $V_1(k,\delta)$ is the $(1 - \delta)$-quantile of $|Z_k-1|$ and  $\tilde V(k,\delta)\ge V_1(k,\delta)$ is defined in Lemma~\ref{lem:tailBound_abs_Zk_1}.
  The statement of the theorem now follows  immediately from~\eqref{eq:upperBoundErrorIntermediate} and the fact that the function $\tilde V$ (\emph{resp.} $R$) in that statement is a non increasing (\emph{resp.} non decreasing)  function of $k$.

\subsection{Proof of Lemma~\ref{lem:gamma_quantile_lower}}\label{sce:proof_gamma_quantile_lower}
The random variable $kZ_k$ has distribution
$\mathrm{Gamma}(k,1)$. Using~\citet[Theorem~5]{zhang2020non} gives
universal constants $0<c\le 1,\tilde C>0$ such that for any $x>0$,
  $$
\PP[kZ_k - k > x]~\ge~\max\bigl(ce^{- \tilde Cx}, ce^{-x^2/k} \bigr).  
$$
Thus 
  $$
\PP[|Z_k - 1|~>~x]~\ge~\PP[ kZ_k - k > k x ]~\ge~\max\bigl(ce^{- \tilde C k x}, ce^{-k x^2} \bigr).  
$$
Now, 
\begin{align*}
  V(k,\delta) &= \inf\{y: \PP[|Z_k-1 |\le y]~\ge~1-\delta/2 \}
              ~=~\inf\{y: \PP[|Z_k-1 |~>~y]~\le~\delta/2 \},   
\end{align*}
and the above argument yields  the following inclusion $$\{y\ge 0: \PP[|Z_k-1 |~>~y]~\le~ \delta/2 \}\subset
\{y~\ge~0: \max(ce^{-\tilde C y x}, ce^{-k y^2} )~\le~\delta/2\}.$$
This implies that
\begin{align*}
  V(k,\delta) &~\ge~\inf\bigl\{y\ge 0: \max\bigl(ce^{- \tilde C  k y }, ce^{-k y^2} \bigr) \le \delta/2\bigr\}   \\
              & = \inf \bigg\{~y~\ge~ 0~:~ y~\ge~ \frac{\log(2c/\delta )}{\tilde Ck} \quad \text{ and } \quad
                y~\ge~ \sqrt{\frac{\log(2c/\delta )}{k}} ~ \bigg\} \\
  &~=~\max\bigg(~\frac{\log(2c/\delta )}{\tilde Ck},~ \sqrt{0\vee\frac{\log(2c/\delta )}{k}}~  \bigg).  
\end{align*}
The statement follows with $c_2 = 2c$ and $c_1 = \min(1, 1/\tilde C)$. 

\qed
\subsection{Proof of Proposition~\ref{prop:lowerBoundKstar}}\label{sec:proof_prop_lowerBoundKstar}
Write $\grid = \{k_1< k_2 < \dotsb <  k_M\}$ with $M=|\grid|$ and let  $m^*$ denote the grid index such that $k_{m^*}=\kS(\delta,n)$.  Condition~\ref{cond:wideGrid} guarantees that  $m^* < M$. 
From the non-increasing property of $V$ and Lemma~\ref{lem:gamma_quantile_lower} we may bound the quantity $V(k,\delta)$ from below as follows, for any $1\le k\le n$,
\begin{align}
  V(k,\delta)~\ge~V(k+1, \delta)~\ge~  c_1 \Big(\sqrt{\frac{0\vee\log(c_2/\delta)}{k+1}} + \frac{\log(c_2/\delta)}{k+1}\Big)~\ge~c_1 \sqrt{0\vee\frac{\log(c_2/\delta)}{k+1}}. 
\end{align}

 Combining the latter  lower bound with the upper bound~\eqref{eq:bound-bias-vonMises} on the bias term we obtain the following implications,  
\begin{align}
  m ~>~ m^*  \iff&~B(k_m,n, \delta)  ~>~  \gamma   V(k_m,\delta)\nonumber \\
  ~\Rightarrow &~C_1(\delta,\rho) \Big(\frac{k_m+1}{n} \Big)^{-\rho} ~>~
                \gamma c_1 \sqrt{0\vee\frac{\log(c_2/\delta)}{k_m+1}} \nonumber \\
  \iff& ~ (k_m+1)^{-\rho + 1/2}  ~> ~   \frac{\gamma c_1\sqrt{0\vee\log(c_2/\delta)}}{C_1(\delta,\rho)} n^{-\rho}  \;
        \nonumber \\
  \iff& ~ (k_m+1)  ~> ~   \biggl(\frac{\gamma c_1\sqrt{0\vee\log(c_2/\delta)}}{C_1(\delta,\rho)} \biggr)^{\frac{2}{1-2\,\rho}}n^{-2\rho/(1-2\rho)}  \;\nonumber \\
  \Rightarrow& ~ k_m  ~\ge ~  
        \biggl(\frac{\gamma c_1\sqrt{0\vee\log(c_2/\delta)}}{C_1(\delta,\rho)} \biggr)^{\frac{2}{1-2\,\rho}} n^{-2\rho/(1-2\rho)} -1.
       \label{eq:condition_k_}  
\end{align}
Hence, because $k_{m^* + 1}$  satisfies~\eqref{eq:condition_k_},  we have 
\begin{align*}
   k_{m^* + 1}
    &~\ge~  \tilde C_2(\delta,\rho) ~ \gamma^{\frac{2}{1-2\,\rho}} ~
      n ^{\frac{ - 2 \rho }{1 - 2\,\rho}} - 1,
\end{align*}
where
$$\tilde C_2(\delta,\rho)~=~ \Big(c_1\sqrt{0\vee\log(c_2/\delta)} / C_1(\delta,\rho) ~\Big)^{\frac{2}{1 - 2\,\rho}}. $$
We obtain 
\begin{equation}
  \label{eq:explicitKS}
  \begin{aligned}
    \kS(\delta,n) &~=~k_{m^*}
    \ge \frac{k_{m^*}}{k_{m^*+1}} \left(\tilde C_2(\delta,\rho) ~ \gamma^{\frac{2}{1-2\,\rho}} ~
      n ^{\frac{ - 2 \rho }{1 - 2\,\rho}} -1\right)\,. 
  \end{aligned}
\end{equation}

We now derive an upper bound for $\tilde
C_2(\delta,\rho)$, leveraging the lower
bound~\eqref{eq:boundConstantDelta} on
$C_1(\delta,\rho)$.  Under 
Condition~\eqref{eq:condition_delta_for_lowerBound_C2} on
$(c_2,\delta)$ we get $c_2/\delta\ge \sqrt{4/\delta}$, whence
\begin{align}
    \tilde C_2(\delta,\rho)^{\frac{1-2\rho}{2}}
    &~\ge~  \frac{c_1\sqrt{\log(c_2/\delta)}}{
        C\Big( 1 + \sqrt{3\log(4/\delta)} +
                   3\log(4/\delta)\Big)^{1-\rho}}\nonumber \\
    &~\ge~ \underbrace{\frac{ c_1\sqrt{2^{-1}\log(4/\delta)}}{C\big(3\log(4/\delta)\big)^{1-\rho}}}_{C'_2} \times
     \underbrace{\frac{1}{ \big(1 + \frac{1}{\sqrt{3\log(4/\delta)} }+
                  \frac{1}{ 3\log(4/\delta)} \big)^{1-\rho}} }_{D} \;.\nonumber 
\end{align}
Simplifying $C'_2$ we get
\begin{align}
C'_2    &~=~ \frac{c_1}{C\sqrt{2}} \frac{\sqrt{ \log(4/\delta)}}{
    \big(3\log(4/\delta\big)^{1-\rho}} \nonumber \\ 
    &~\ge~\frac{c_1}{C3^{1-\rho}\sqrt{2}}\big(\log(4/\delta) \big)^{\frac{-(1- 2\rho)  }{2}}\;,
    \nonumber
\end{align}
and under~\eqref{eq:condition_delta_for_lowerBound_C2}, we have that $3\log(4/\delta)> 4$, so that 
\begin{align}
D    &~\ge~(4/7)^{1-\rho} . \nonumber 
\end{align}
Combining the two latter bounds we obtain 
$$
  \tilde C_2(\delta,\rho)^{\frac{1-2\rho}{2}} 
  ~\ge~\frac{c_1}{(21/4)^{1-\rho}\sqrt{2}C} \big(\log(4/\delta) \big)^{\frac{-(1- 2\rho)  }{2}}, 
$$
hence, 
$$
\tilde C_2(\delta,\rho)~\ge ~\Big(\frac{c_1^2}{2 C^2}\Big)^{1/(1- 2 \rho)}\times 
\frac{1}{(21/4)^2}\times\frac{1}{\log(4/\delta)}. 
$$
The result follows from the above display combined with~\eqref{eq:explicitKS} and the assumption in the statement that $ k_{m^*} / k_{m^*+1} \ge \beta^{-1}$. 


\subsection{Proof of Theorem~\ref{theo:errorOracle_vonMises}}\label{sec:prooferrorOracle_vonMises}
  From Proposition~\ref{prop:errorOracle}, for any $\delta>0$ and $n$ large enough so that 
  $\kS(\delta,n)\ge \log(4/\delta)$  
  it holds that    
\begin{align}
  \bigl|\hat \gamma~\bigl(\kS(\delta,n)\bigr) - \gamma\bigr|
  &~\le~2\gamma ~V\big(\kS(\delta,n), \delta \big) \nonumber \\
  &~\le~2 \gamma~ \Big(\sqrt{\frac{2\log(4/\delta)}{\kS(\delta,n)}} +
    \frac{\log(4/\delta)}{\kS(\delta,n)} \Big)\qquad \text{ (from~\eqref{eq:expressionV})}\nonumber \\
  &~\le~2 (1+\sqrt{2})\gamma~ \sqrt{\frac{\log(4/\delta)}{\kS(\delta,n)}}  \nonumber \\
  &~\le~ 2 (1+\sqrt{2})~\gamma ~\sqrt{\beta}~\sqrt{\frac{\log(4/\delta)}{\beta\,\kS(\delta,n) }}  \nonumber \\
  &~\le~2 (1+\sqrt{2})~\gamma~ \sqrt{\beta}~\sqrt{\frac{1+\log(4/\delta)}{\beta\,\kS(\delta,n) + 1}}  \nonumber \\
  &~\le~2 (1+\sqrt{2})\gamma ~\sqrt{\beta}~ \sqrt{1+\log(4/\delta)} \gamma^{\frac{-1}{1-2\,\rho}}C_2(\rho)^{-1/2} n ^{\frac{ \rho }{1- 2\, \rho}}  \quad\text{(from Proposition~\ref{prop:lowerBoundKstar})}\nonumber \\
  & ~=~  2 (1+\sqrt{2}) \;  ~\sqrt{\beta} \sqrt{1+\log(4/\delta)}\; C_2(\rho)^{-1/2}\; \gamma^{\frac{- 2\rho}{1-2\,\rho}} \; n ^{\frac{ \rho }{1- 2\, \rho}}. \nonumber
\end{align}


\section*{Appendix~2 (Complementary results)}
In this appendix, we provide additional discussion about our simulations and complementary experiments.
\subsection{Perturbed Pareto distribution}
\label{sec:xt}
We provide some additional details regarding the perturbed Pareto distribution used in our experiments, see also \cite{resnick2007heavy}, 
defined by its cumulative distribution function $F(x)=1-cx^{-\alpha}(\log x)^{\beta}$, for  $x>x_0 = \exp(\alpha/\beta)$ with $\beta,\alpha>0$, with $c = (e \alpha/\beta )^{\beta}$. We show that $F$ thus defined is a valid distribution function. That $F(x)\to 1$ as $x\to\infty$ derives immediately from asymptotic comparison between powers of $x$ and of $\log x$. Now for $x\ge x_0$, the derivative of $F$ exists and is given by 
$f(x)=c(\log x)^{\beta-1}x^{-\alpha-1}(-\beta+\alpha\log x)\ge 0$. Finally, straightforward computations show that  $F(c)=0$. 

We now show that the quantile function $Q$ associated to $F$ admits the standardized Karamata representation~\ref{eq:karamataStrong}. 
We use the following characterization based on the so-called 
 `von Mises condition' as stated in standard reference books~\citep{de1970regular,resnick2007heavy,beirlant2006statistics}. A distribution function $F$ with density $f$ satisfies the von Mises condition if $\exists\ell>0$, such that
\begin{equation}\label{eq:vonMisesRatio}
\frac{x f(x)}{1-F(x)} \to \ell.
\end{equation}
From \citet[Theorem~2.7.1]{de1970regular},~\eqref{eq:vonMisesRatio} implies that $1-F$ is regularly varying with regular variation index $-\alpha = -\ell$, i.e. the tail index is $\gamma = 1/\alpha = 1/\ell$. 
Also from \citet[Lemma~6]{csorgo1985kernel} (and as noted in \cite{drees1998selecting}), more is true: \eqref{eq:vonMisesRatio} is equivalent to the condition   that the quantile function 
$Q(t) = F^\leftarrow(1-1/t)$  has Karamata representation \eqref{eq:karamataStrong}
which is nearly as strong as Condition~\ref{cond:vonMises}, although the additional requirement in  that $\bar b(t) \le C t^{\rho}$ is not guaranteed. 

 
In  the case of the `Perturb' distribution defined above, the von Mises ratio satisfies 
\begin{align*}
    \frac{x f(x)}{1-F(x)} 
    = \frac{(\log x)^{\beta-1}x^{-\alpha-1}(-\beta+\alpha\log x)}{x^{-\alpha}(\log x)^{\beta}} 
    &= (\log x)^{-1} (-\beta + \alpha \log x) \\
    & \xrightarrow[x\to\infty]{} \alpha, 
\end{align*}
which shows~\eqref{eq:vonMisesRatio}, thus ensuring that the Karamata representation~\eqref{eq:karamataStrong} holds.  

We now show that the additional condition that $\bar b(t) \le C t^{\rho}$ does not hold for any $\rho<0$n so that Condition~\ref{cond:vonMises} is not satisfied. We use the fact that \citep[see the discussion in ][following equation (5)]{drees1998selecting} in~\eqref{eq:karamataStrong}, one may choose 
$b(s) = q( Q(s)) - \gamma$ where $q(x) = \bar F(x)/(x f(x))$ is the inverse of the ratio in~\eqref{eq:vonMisesRatio}, and $Q(s) = F^\leftarrow(1 - 1/s)$. Now with the `perturb' distribution,
\begin{align*}
    q( Q(s)) - \gamma 
    &= \frac{\log Q(s)}{ -\beta + \alpha \log Q(s)} -1/\alpha\\
    & =\frac{\beta}{\alpha(-\beta + \alpha \log(Q(s))} \\
    &\ge \frac{\beta}{\alpha^2 \log Q(s)}.
\end{align*}
In addition,  because the function $Q$ is regularly varying with regular variation index $1/\alpha$ we have that for any $\epsilon>0$, for $s$ sufficiently large, 
$Q(s) \le c s^{1/\alpha+\epsilon}$ for some constant $c>0$ and we obtain 
\begin{align*}
    q( Q(s)) - \gamma
    &\ge \frac{\beta}{\alpha^2(\log  c + (1/\alpha + \epsilon) \log s)} 
    \\
    & \sim \frac{A}{\log s}
\end{align*}
where $A>0$. This well result is known as a Potter bound. The latter quantity converges to zero much slower than $s^\rho$ for any $\rho<0$. We conclude that for the perturb distribution, we cannot have $\bar b(t) < C t^\rho$ for some $\rho<0$, thus Condition~\ref{cond:vonMises}
 is not satisfied.

\subsection{Counter-example distribution}\label{sec:counterExample}
The general idea behind the proposed counter-example distribution is
to start from a well behaved survival function $1-F$ and to modify it
and `draw holes' in the support, by pushing back some mass from some
intervals to neighboring intervals.
This strategy is suggested by well-known facts: From
\citet[Theorem~2.7.1 (b)]{de1970regular}, if $1-F$ is regularly
varying and if in addition, $F$ has a density which is ultimately
non-increasing, then~\eqref{eq:vonMisesRatio} holds true, thus, by the
argument developed in Section~\ref{sec:xt}, also the standardized
Karamata representation~\eqref{eq:karamataStrong} is satisfied. As a
consequence, to construct a distribution function $F$ such that $(i)$
$1-F$ is regularly varying; but $(ii)$ the standardized Karamata
representation does not hold, and if we would like to stay in the
continuous case where a density exists, so as to avoid additional
difficulties with potential ties in the order statistics, then the
density $f$ must \textit{not} be ultimately non-increasing.

Consider the random variable $X$ defined by 
\begin{equation}
    \label{eq:counterExampleGenerativeDefinition}
    X=  \lfloor Z^s  \rfloor^{1/s  } + 
  \frac{1}{2}(Z -  \lfloor Z^s   \rfloor^{1/s  }), 
\end{equation}
where $Z\sim\textrm{Pareto}(1/\gamma)$ for some $\gamma>0$, \ie $\PP[Z>t] = t^{-1/\gamma}, t\ge 1$ and $s \in(0,1]$ is a scaling  parameter.
 For $z\ge 1$, define  $$r(z) = z- \lfloor z^s  \rfloor^{1/s  }.$$ Note that $r(z)\ge 0$ and that the discrepancy between
$Z$ and $X$ writes
$
Z-X = \frac{1}{2} r(Z).
$ 
For $z \in [n^{1/s}, (n+1)^{1/s})$, we have 
$ 
0\le r(z) \le (n+1)^{1/s} - n^{1/s} 
$ 
and straightforward calculations show that 
$$
\liminf_{z\to\infty} \frac{r(z)}{s^{-1}z^{1-s}} = 0 \quad;\quad
\limsup_{z\to\infty} \frac{r(z)}{s^{-1}z^{1-s}} = 1. 
$$
Thus for $s<1$ the discrepancy between $X$ and $Z$ becomes more and more pronounced as $Z$ becomes large. 

Let us denote by $H$ the cumulative distribution function of $X$ in the remainder of the proof. Consider the intervals 
$$I_n = \Big[n^{1/s},  \frac{n^{1/s} + (n+1)^{1/s}}{2} \Big] ~~;~~ 
J_n  = \Big(\frac{n^{1/s} + (n+1)^{1/s}}{2}, (n+1)^{1/s} \Big).$$
Then the positive real line is the disjoint union of the $I_n,J_n$'s and 
we have 
$$
X ~=~\begin{cases}
n^{1/s} + \frac{1}{2}r(Z) &  \text{ if } Z\in I_n\\
 \frac{n^{1/s} + (n+1)^{1/s}}{2} & \text{ if } Z\in J_n.
\end{cases}
$$
so that $\PP[X \in \bigcup_{n\ge 0} J_n] = 0$. As a consequence $X$
admits a density that vanishes on the intervals $J_n$ while it is
non-zero on the $I_n$'s. 
Writing $F(t) = \PP[Z\le t] = 1 - t^{-1/\gamma}$, the identity in the above display shows that 
\begin{equation*}
    H(x)~=~\begin{cases}
        F\bigl(\lfloor x^s   \rfloor^{1/s  } + 2r(x)\bigr) & \text{if }
        r(x)\le \frac{(\lfloor x^s\rfloor + 1)^{1/s} - \lfloor x^s\rfloor^{1/s}}{2}, \\
        F\bigl(\frac{\lfloor x^s\rfloor^{1/s} + (\lfloor x^s\rfloor + 1)^{1/s}}{2} \bigr)& \text{if } r(x)> 1/2. 
    \end{cases}
\end{equation*}
Because $r$ has derivative $r'(x) = 1$ on $I_n$, we deduce that $H$ has density
$$
h(x) = 2 f\bigl(\lfloor x^s \rfloor^{1/s} + 2r(x)\bigr) = 2 \alpha
\bigl(\lfloor x^s\rfloor^{1/s} + 2r(x)\bigr)^{-\alpha-1},$$ here
$f(x) = \alpha x^{-\alpha - 1}$.  Thus on such an interval $I_n$, we
have
$$
x h(x)/\bigl(1-H(x)\bigr
)~=~ \frac{2x\alpha  \bigl(\lfloor x^s\rfloor^{1/s} +2r(x)\bigr)^{-\alpha-1}}{
  \bigl( \lfloor x^s\rfloor^{1/s} +2r(x)\bigr)^{-\alpha}
} ~\sim_{n\to\infty}~ 
\frac{2 \alpha x x^{-\alpha-1}}{x^{- \alpha }}\to 2\alpha. 
$$
Thus the von Mises ratio in the left-hand side
of~\eqref{eq:vonMisesRatio} jumps from $0$ on $J_n$ to a level close
to $2\alpha$ infinitely often, and it cannot have a limit, which proves that~\eqref{eq:karamataStrong} cannot hold for the distribution of $X$.

We now check that $1-H$ is regularly varying with regular variation index $-\alpha$.  Since $H$ and $F$ coincide
on the set $\{n^{1/s}, n\in\nset\}$, we have for $x>0,t>0$:
\begin{align*}
  \frac{ (\lfloor (tx)^s\rfloor +1)^{-\alpha/s} }{
  \lfloor t^s\rfloor  \bigr)^{-\alpha/s} }
  ~=~
  \frac{1 - F\big((\lfloor (tx)^s\rfloor  +1)^{1/s}\big)}{1 -F(\lfloor t^s\rfloor^{1/s} )}
  \le \frac{1 - H(tx)}{1 -H(t)}~\le~ 
  \frac{1 - F(\lfloor (tx)^s \rfloor^{1/s}  )}{
  1 -F\big( (\lfloor t^s\rfloor + 1)^{1/s}\big) } ~=~ 
  \frac{\lfloor (tx)^s\rfloor^{-\alpha/s} }{(\lfloor t^s\rfloor + 1)^{-\alpha/s}}
\end{align*}
and both sides of the sandwich converge to $x^{-\alpha}$ as $t\to\infty$, showing that 
$(1-H(tx))/(1-H(t))\to x^{-\alpha}$ as $t\to \infty$, which concludes the proof.

\subsection{Additional Simulations}\label{sec:adsim}

We repeat the experiments from Section~\ref{sec:sim}, this time using a linear grid defined by $k_m = \lfloor m n/|\grid| \rfloor$ for $1\le m\le |\grid|$, where $|\grid|=\log n/\log(\beta)$ and $\beta=1.1$. The results, shown in Table~\ref{tab:Simvonmises30L} and Table~\ref{tab:Simvonmises30QL}, indicate that our EAV estimator is largely robust to changes in the grid.
\begin{table}[H]
	\centering
	\scalebox{1}{
		\begin{tabular}{ccccc}
			\hline
			d.f. & $\gammaS$    & $\kAV;\bar{k}_{\textrm{EAV}}$  &  $\kbch;\bar{k}_{\textrm{BT}}$ &   $\hat{k}_{DK};\bar{k}_{\textrm{DK}}$  \\
    \hline
            $C_{2,2/3}$& 0.5&  [0717,1051];0922 & [31,0127];0066 & [001,0011];0007\\
            \hline
             $C_{2,1/2}$& 0.5&  [0368,1399];0595 & [31,0109];0043 & [001,0241];0046\\
            \hline
   $S_{1.7}$ & $1/1.7$ &  [0652,3991];3427  & [35,2253];0515 &[323,0729];0495\\
			\hline
			$S_{1.5}$ & $1/1.5$ &  [0717,4830];4356  &  [33,3221];1154 & [534,1298];0935\\
              \hline
   $S_{1.99}$ & $1/1.99$ &  [0072,1862];1371  & [42,0679];0281 &[002,0102];0035\\
      \hline
			$L_{2,1}$  & 0.5 &  [4830,7778];6605 &  [35,3854];1230 &   [250,1974];1132 \\
			\hline 
            			\hline
			$F_{1,10}$ & $1.0$ &  [0717,1862];1334   & [33,1064];0459 &  [042,0121];0068\\
			\hline
   			
			$F_1$& $1.0$&  [5313,7071];6216  &  [38,4195];1430 &  [853,2837];1665\\
			\hline
            $PCP_{1.1}$& $1.1$ &[0717,8500];4663  &   [50,9900];3727 & [3908,4605];4340\\
            \hline
            $PCP_{1.5}$& $1.5$ &[0789,3991];2200  &   [47,1057];0694 & [0073,5083];2890\\
            \hline
            $PCP_{1.25}$& $1.25$ &[0717,9999];7111  &   [33,9998];0723 & [4048,5050];4519\\
            \hline
	\end{tabular}}
	\caption{Range of values of $k$ output by $\kAV,\kbch,\kdk$, and mean values $\bar{k}_{\textrm{EAV}},\bar{k}_{\textrm{BT}},\bar{k}_{\textrm{DK}}$, over $N=500$ experiments 
   }
	\label{tab:Simvonmises30Q}
	
\end{table}
\begin{table}[H]
	\centering
	\scalebox{0.8}{
		\begin{tabular}{cc|ccc}
			\hline
&  &&$n=10\,000$ & \\
			d.f. & $\gammaS$ &  $\widehat{\textrm{MSE}}(\kAV);(\operatorname{stderr}) \times 100$  &$\widehat{\textrm{MSE}}(\kbch);
            (\operatorname{stderr}) \times 100$&  $\widehat{\textrm{MSE}}(\hat{k}_{DK});
            (\operatorname{stderr}) \times 100$\\
   \hline
            $C_{2,2/3}$& 0.5     & $\textbf{01.51};(0.04)$   & $03.05;(0.16)$  & $13.80;(0.88)$ \\
            \hline
            $C_{2,1/2}$& 0.5     & $\textbf{04.92};(0.09)$   & $05.14;(0.24)$  & $29.01;(5.99)$ \\
			\hline
			$S_{1.7}$ & $1/1.7$  & $\textbf{00.61};(0.03)$  & $04.01;(0.10)$ & $04.40;(0.06)$ \\
			\hline
			$S_{1.5}$ & $1/1.5$ &$\textbf{00.59};(0.01)$  & $01.33; (0.05)$  & $01.23;(0.03)$\\
			\hline
            $S_{1.99}$ & $1/1.99$ &$\textbf{21.79};(0.09)$  & $47.70;(0.24)$  & $45.88;(0.48)$\\
			\hline
			$L_{2,1}$  & $0.5$  & $22.79;(0.15)$  & $04.47;(0.14)$ & $\textbf{04.40};(0.08)$\\
			\hline
            \hline 
			$F_{1,10}$ & $1.0$  & $13.65;(0.12)$   & $03.49;(0.09)$  & $\textbf{01.78};(0.12)$ \\
			\hline
			$F_1$& $1.0$& $07.11;(0.07)$  &  $00.69;(0.05)$ &  $\textbf{00.24};(0.03)$  \\
            \hline 
            $PCP_{1.1}$ & $1.1$ & $\textbf{00.63};(0.01)$  & $00.70;(0.11)$ & $00.73;(0.04)$ \\
			\hline
            $PCP_{1.5}$ & $1.5$ & $05.57;(0.05)$  & $\textbf{00.58};(0.07)$ & $05.99;(0.36)$ \\
			\hline
            $PCP_{1.25}$ & $1.25$ & $03.51;(0.02)$  & $\textbf{01.13};(0.06)$ & $03.34;(0.02)$ \\
			\hline
	\end{tabular}}
	\caption{
    $\widehat{\textrm{MSE}}(\hat k)$ (stderr $(\hat k)$), both multiplied by $100$, for $\hat k \in\{\kAV,\kbch,\kdk\}$, over $N=500$ experiments. 
    }  
	\label{tab:Simvonmises30L}	
\end{table}

\begin{table}[H]
	\centering
	\scalebox{1}{
		\begin{tabular}{ccccc}
			\hline
			d.f. & $\gammaS$    & $\kAV;\bar{k}_{\textrm{EAV}}$  &  $\kbch;\bar{k}_{\textrm{BT}}$ &   $\hat{k}_{DK};\bar{k}_{\textrm{DK}}$  \\
     \hline
            $C_{2,2/3}$& 0.5&  [0729,1145];0967 & [31,0127];0066 & [001,0011];0007\\
            \hline
             $C_{2,1/2}$& 0.5&  [0312,1562];1118 & [31,0109];0043 & [001,0241];0046\\
            \hline
   $S_{1.7}$ & $1/1.7$ &  [0937,4375];3724  & [35,2253];0515 &[323,0729];0495\\
			\hline
			$S_{1.5}$ & $1/1.5$ &  [1354,5000];4591  &  [33,3221];1154 & [534,1298];0935\\
            \hline
			$S_{1.99}$ & $1/1.99$ &  [1250,2083];1653  &  [42,0679];0281 & [002,0102];0035\\
      \hline
			$L_{2,1}$  & 0.5 &  [5312,8437];6970 &  [35,3854];1230 &   [250,1974];1132 \\
			\hline 
            			\hline
			$F_{1,10}$ & $1.0$ &  [0833,1979];1447   & [33,1064];0459 &  [042,0121];0068\\
			\hline
   			
			$F_1$& $1.0$&  [5208,7395];6381  &  [38,4195];1430 &  [853,2837];1665\\
			\hline
            $PCP_{1.1}$& $1.1$ &[1770,9687];6429  &   [50,9900];3727 & [3908,4605];4340\\
            \hline
            $PCP_{1.5}$& $1.5$ &[1354,4166];2362  &   [47,1057];0694 & [0073,5083];2890\\
            \hline
             $PCP_{1.25}$& $1.25$ &[1562,9998];7831  &   [33,9998];0723 & [4048,5050];4519\\
            \hline
	\end{tabular}}
	\caption{Range of values of $k$ output by $\kAV,\kbch,\kdk$, and mean values $\bar{k}_{\textrm{EAV}},\bar{k}_{\textrm{BT}},\bar{k}_{\textrm{DK}}$, over $N=500$ experiments}
	\label{tab:Simvonmises30QL}
	
\end{table}

\bibliographystyle{biometrika}
\bibliography{bibliK}

\end{document}